\theoremstyle{plain}
\newtheorem{thm}{Theorem}
\newtheorem{lem}[thm]{Lemma}
\newtheorem{cor}[thm]{Corollary}
\newtheorem{prop}{Proposition}
\newtheorem{rem}{Remark}
\newtheorem{sty1}{Theorem}
\newtheorem{defi}[sty1]{Definition}
\begin{document}


\title{Polarization Aware Movable Antenna}

\author{Runxin Zhang, Yulin Shao, Yonina C. Eldar
\thanks{R. Zhang and Y. Shao are with the Department of Electrical and Electronic Engineering, University of Hong Kong, Hong Kong S.A.R., China.
}
\thanks{Y. C. Eldar is with the Department of Mathematics and Computer Science, Weizmann Institute of Science, Rehovot 7610001, Israel.}
\thanks{Correspondence to: ylshao@hku.hk.}
}

\maketitle

\begin{abstract}
This paper presents a polarization-aware movable antenna (PAMA) framework that integrates polarization effects into the design and optimization of movable antennas (MAs). While MAs have proven effective at boosting wireless communication performance, existing studies primarily focus on phase variations caused by different propagation paths and leverage antenna movements to maximize channel gains. This narrow focus limits the full potential of MAs. In this work, we introduce a polarization-aware channel model rooted in electromagnetic theory, unveiling a defining advantage of MAs over other wireless technologies such as precoding: the ability to optimize polarization matching. This new understanding enables PAMA to extend the applicability of MAs beyond radio-frequency, multipath-rich scenarios to higher-frequency bands, such as mmWave, even with a single line-of-sight (LOS) path. Our findings demonstrate that incorporating polarization considerations into MAs significantly enhances efficiency, link reliability, and data throughput, paving the way for more robust and efficient future wireless networks.
\end{abstract}

\begin{IEEEkeywords}
Movable antenna, fluid antenna system, polarization, mmWave, MU-MISO.
\end{IEEEkeywords}

\section{Introduction}
\label{SecI}



The relentless pursuit of high-speed and reliable wireless communication has become a cornerstone of modern technological advancement. Emerging applications such as autonomous vehicles, the artificial intelligence of things, and augmented reality demand seamless connectivity with unprecedented data rates and minimal latency\cite{dang2020should,wang2023road,shao2024theory}. These evolving requirements are pushing the boundaries of existing wireless communication systems, necessitating innovative solutions to address challenges like spectrum scarcity, signal degradation, and dynamic environmental conditions.

One promising technique that has emerged to meet these challenges is the use of movable antennas (MAs) \cite{zhu2023modeling,shao20246d,shi2024capacity,ning2024movable}, also known as fluid antenna systems (FAS) \cite{wong2020fluid,wong2021fluid,wong2022bruce,wong2020performance,khammassi2023new}. By dynamically adjusting the position and orientation of antennas, it is possible to achieve enhanced signal-to-noise ratio (SNR), wider coverage, and adaptability to changing environments and user locations, without solely relying on traditional methods like increasing transmission power, beamforming, or deploying additional infrastructure\cite{yoo2006optimality,tse2005fundamentals,tang2020wireless,wu2019intelligent,shao2021federated,zhu2023movable,ma2023mimo}. 

\begin{figure}
    \centering
    \includegraphics[width=0.8\linewidth]{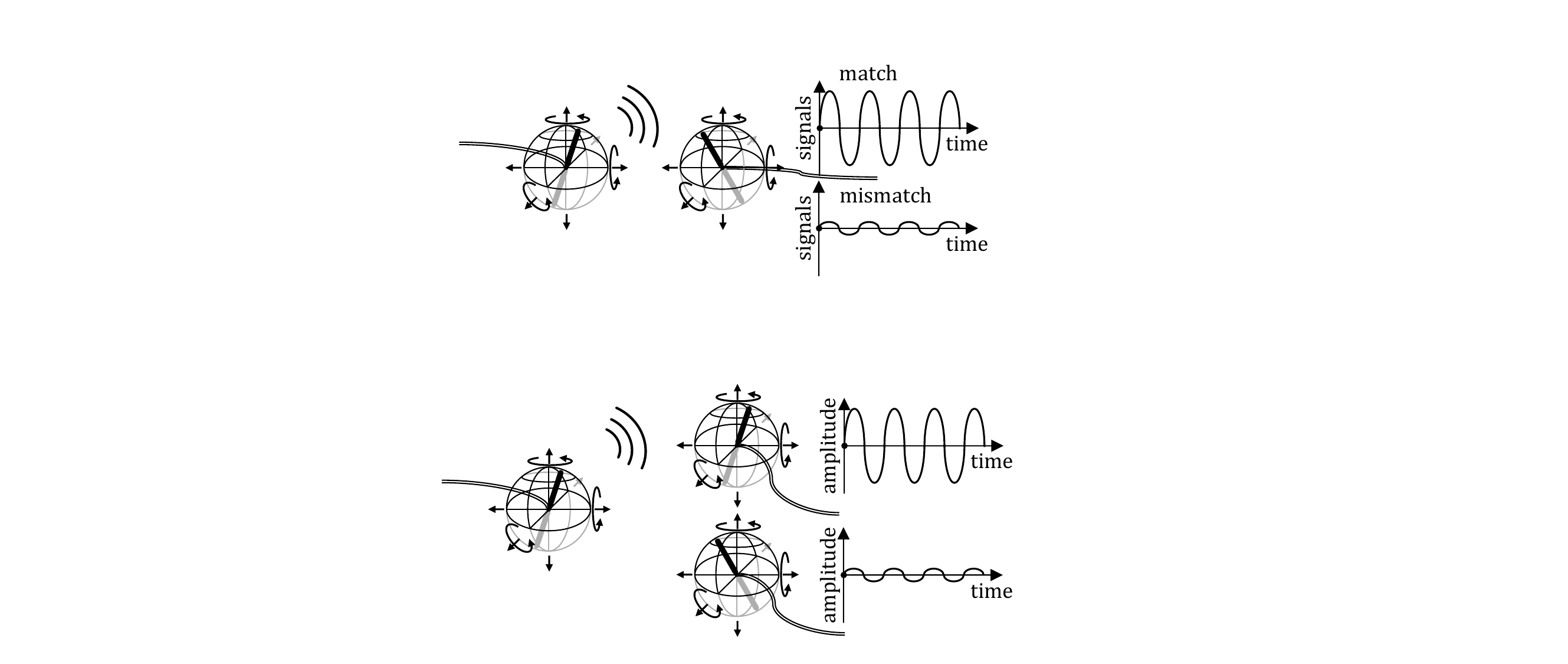}
    \caption{Movable antennas with the capability for 3D translation and rotation.}
    \label{f:demo}
\end{figure}

Previous research has explored the applications of MAs and FAS in radio frequency bands characterized by rich multipath environments \cite{zhu2023modeling,wong2020fluid}. In these settings, antenna movement is leveraged to exploit spatial diversity arising from amplitude fluctuations caused by multipath propagation.

In \cite{wong2020fluid}, the authors propose FAS that allows the receiving antenna to move to positions to capture the strongest signal. Analysis of outage probability demonstrates that, within a movement range of 0.2$\lambda$, FAS can achieve performance comparable to a maximal ratio combining system with five antennas spanning 2$\lambda$. 
This flexibility in antenna positioning offers significant advantages, especially in scenarios requiring massive connectivity. It allows each user to find an optimal position where instantaneous interference is minimized, effectively creating deep nulls. This spatial multiplexing gain facilitates multi-user access, a concept introduced as Fluid Antenna Multiple Access (FAMA) in \cite{wong2021fluid}.
Building upon this concept, \cite{wong2022bruce} extends FAS to a two-dimensional (2D) implementation and discusses reconfigurable pixel technologies as a practical approach for realization.
Additionally, \cite{psomas2023continuous} introduces a continuous FAS (CFAS) architecture, demonstrating that CFAS outperforms discrete FAS with limited ports, particularly at medium to high threshold levels.

The MA system presented in \cite{zhu2023modeling,zhu2023movable} exhibits additional phase variations due to antenna translation, surpassing fixed-position antennas in both deterministic and stochastic channels. The system's performance further improves with increased positioning accuracy, larger movement ranges, and enhanced multipath effects.
Building on MA system, advancements in six-dimensional (6D) systems, such as those discussed in \cite{shao20246d,zhu2024performance}, enable simultaneous three-dimensional (3D) translation and rotation, enhancing the system's directional capabilities. By imposing spatial constraints on surfaces equipped with 6D movable antennas (6DMA), user interference is reduced, improving the signal-to-interference-plus-noise ratio (SINR).
In \cite{shi2024capacity}, the 6DMA model is simplified by integrating conventional fixed-position antenna arrays with 6DMA surfaces capable of rotating along a circular path. By limiting the degrees of freedom for surface movement to the azimuth angle along the circular track, computational complexity is significantly reduced while enhancing the system's maximum capacity.

Despite these advancements, the impact of polarization changes due to antenna movement and rotation has been largely overlooked. Polarization is a fundamental characteristic of electromagnetic waves, describing the orientation of the electric field vector as the wave propagates through space \cite{stutzman2012antenna,costa2012doa,huang2021antennas,nabar2002performance,kurum2018scobi}. In wireless communication systems, proper alignment of polarization between transmitting and receiving antennas is crucial for efficient signal transmission and reception. When polarizations are properly aligned, the receiving antenna can maximally capture the transmitted energy, resulting in stronger signal strength, improved SNR, and enhanced overall system performance. Conversely, polarization mismatch occurs when there is a misalignment between the polarization orientations of the transmitting and receiving antennas, leading to significant signal attenuation as the misaligned component of the electromagnetic wave is effectively lost.
Neglecting polarization effects can lead to incomplete system models and suboptimal performance assessments, hindering a full understanding of the impact and advantages of MAs. 

This oversight is particularly critical as communication systems move towards higher frequency bands, such as the millimeter-wave (mmWave) spectrum \cite{niu2015survey,heath2016overview}. At these higher frequencies, signal propagation becomes more directional with reduced diffraction around obstacles, often resulting in a single line-of-sight (LOS) link between transmitter and receiver \cite{3GPP5G}. Previously, it was believed that MAs offer no gains in such LOS scenarios due to the lack of multipath diversity \cite{wong2020fluid,zhu2023modeling}. However, our formulation incorporating polarization effects indicates otherwise -- MAs can achieve significant performance gains even in LOS scenarios by facilitating polarization matching. 
As such, there is a clear gap in current research: the need to incorporate polarization effects into the design and optimization of MA systems, especially as the wireless industry advances towards higher frequency bands to achieve greater bandwidths and data rates. 

To fill the gap, this paper makes the following main contributions:
\begin{itemize}[leftmargin=0.5cm]
    \item We introduce a polarization-aware movable antenna (PAMA) framework, which incorporates polarization effects into the design and optimization of antenna movement and rotation. PAMA uncovers a previously overlooked, but defining benefit of antenna movement -- facilitating optimal polarization matching between the transmitter and receiver. By accounting for the impact of polarization changes due to movement and rotation, PAMA provides a more refined system model and deepens our understanding of the dynamic capabilities of MAs.
    \item Building upon the electromagnetic theory, we develop a polarization-aware channel model that describes the impact of 3D movement and rotation of antennas on the channel gains. Grounded in Maxwell's equations, this model allows for precise characterization of how antenna positioning and orientation affect polarization alignment and, consequently, signal reception quality. By integrating these factors into the system design, we enable the optimization of antenna movements to maintain polarization alignment, thereby maximizing the sum rate of the communication system.
    \item Contrary to prior beliefs that MAs are more effective in low-frequency, multipath-rich environments, our PAMA framework demonstrates that MAs are highly effective in higher frequency bands such as the mmWave band, even with a single LOS link. The smaller size of antennas at higher frequencies makes them easier to move and rotate, facilitating practical implementation of PAMA systems in these bands. This new understanding extends the potential and application of MA to next-generation wireless communication systems that demand higher bandwidths.
\end{itemize}

{\it Organization}: The remainder of this paper is structured as follows.
Section~\ref{SecII} introduces the system model and essential geometric concepts used throughout the work. Section~\ref{SecIII} presents our PAMA framework and the polarization-aware channel model.
Section~\ref{SecIV} explores the impacts of 3D antenna translation and rotation on the PAMA channel gains. Section~\ref{SecV} demonstrates the potential of PAMA through a sum-rate optimization problem. 
Finally, Section~\ref{SecVI} concludes the paper.

{\it Notations}: We use boldface lowercase letters to denote column vectors (e.g., $\bm{h}$, $\bm{w}$) and boldface uppercase letters to denote matrices (e.g., $\bm{H}$, $\bm{P}$).
For a vector or matrix, $(\cdot)^\top$ denotes the transpose, $(\cdot)^*$ denotes the complex conjugate, and $(\cdot)^H$ denotes the conjugate transpose.
$\mathbb{C}$ stand for the sets of complex values.
The imaginary unit is represented by $j$.
The Euclidean norm of a vector $\bm{p}$ is denoted by $|\bm{p}|$. For two vectors $\bm{a}$ and $\bm{b}$, $\langle \bm{a},\bm{b}\rangle$ denotes the angle between them, and $\bm{a}\cdot\bm{b}$ denotes their inner product.

\section{System Model}
\label{SecII}

\begin{figure}
    \centering
    \includegraphics[width=0.8\linewidth]{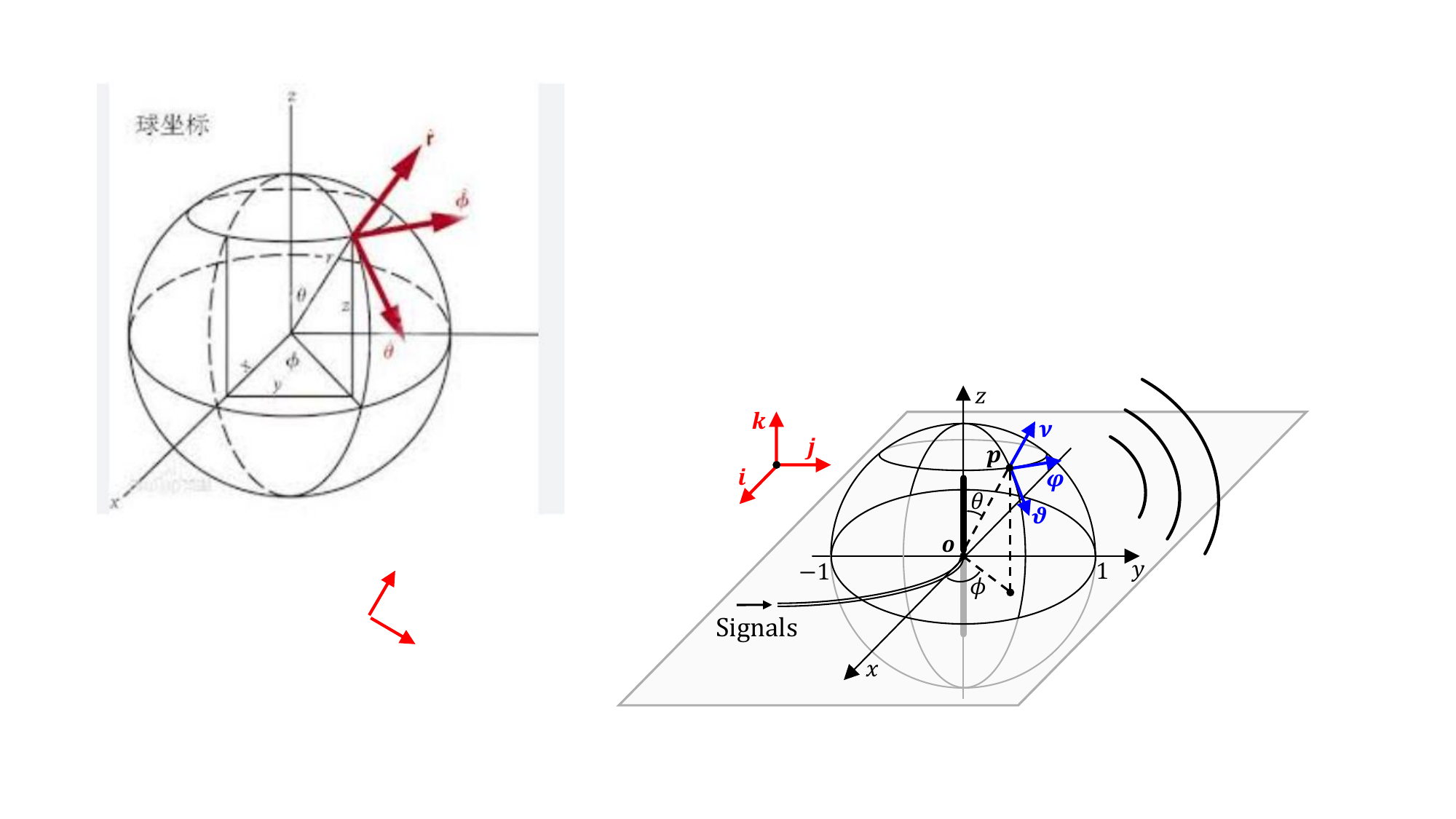}
    \caption{The two coordinate systems and their unit vectors.}
    \label{fig:CS}
\end{figure}

We consider a mobile edge network within a 3D environment. The network comprises a base station (BS) and $K$ mobile users distributed within the 3D space. The BS is equipped with $L$ half-wave dipole antennas, and each mobile user is equipped with a single half-wave dipole antenna.
We focus on the downlink transmission from the BS to the users, operating in the mmWave frequency band. Both the antennas at the BS and mobile users are movable.

\begin{defi}[Movable Antennas]
A movable antenna is capable of performing 3D translations and 3D rotations, allowing it to navigate through and orient within 3D space, as depicted in Fig.~\ref{f:demo}.
\end{defi}

Translation typically requires larger operational spaces, often spanning tens of wavelengths, whereas rotation can be accommodated within a more compact area, approximately the size of a half-wavelength sphere, as shown in Fig.~\ref{f:demo}. In this work, we assume that the antennas on mobile users are restricted to 3D rotation, due to the spatial constraint. In contrast, the $L$ antennas at the BS can perform both 3D rotation and translation independently within a designated region. The extension of this model to include translatable antennas for mobile users is straightforward.

In this work, we consider the far-field scenario, which implies the following:
\begin{itemize}[leftmargin=0.5cm]
\item {Plane wave.} The waves received by the users can be treated as plane waves because the propagation distance is much greater than the wavelength and the dimensions of the movable regions.
\item {Point representation of antennas.} The physical sizes of the transmitting and receiving antennas are negligible compared to the propagation distance, allowing us to represent their positions as points specified by their center coordinates.
\end{itemize}

To mathematically represent the network and the positions of the antennas, we establish two coordinate systems: the Cartesian Coordinate System (CCS) and the Spherical Coordinate System (SCS), as depicted in Fig.~\ref{fig:CS}.

\begin{defi}[Coordinate Systems]
We define a CCS to describe the positions of the transmitting and receiving antennas, with three orthogonal unit vectors denoted as $\bm{i}$, $\bm{j}$, and $\bm{k}$.

Correspondingly, an SCS is established with its origin at point $\bm{o} = (0, 0, 0)$ in the CCS, and the direction of $\bm{k}$ serving as the reference axis. The SCS has three unit vectors: the radial unit vector $\bm{\nu}$, the polar angle unit vector $\bm{\vartheta}$, and the azimuthal angle unit vector $\bm{\varphi}$.
\end{defi}

\begin{table}[t]
\centering
\caption{Conversion formulas between the unit vectors of CCS and SCS.}
\begin{tabular}{cccc}
\toprule
     \textbf{Dot products}   & $\bm{i}$ & $\bm{j}$ & $\bm{k}$ \\
     \midrule
     $\bm{\nu}$ & $\sin{\theta} \cos{\phi}$ & $\sin{\theta} \sin{\phi}$ & $\cos{\theta}$ \\
     $\bm{\vartheta}$ & $\cos{\theta} \cos{\phi}$ & $\cos{\theta} \sin{\phi}$ & $-\sin{\theta}$ \\
     $\bm{\varphi}$ & $-\sin{\theta}$ & $\cos{\phi}$ & $0$ \\
     \bottomrule
\end{tabular}
\label{tab:conversion}
\end{table}

Table \ref{tab:conversion} provides the conversion formulas between the unit vectors of the two coordinate systems. For instance, the radial unit vector $\bm{\nu}$ in the SCS can be expressed in terms of the CCS unit vectors as $\bm{\nu} = \sin(\theta) \cos(\phi) \bm{i} + \sin(\theta) \sin(\phi) \bm{j} + \cos(\theta) \bm{k}$.

Given the two coordinate systems, we introduce several definitions regarding the antennas' positions and orientations.

\begin{defi}[Positions of the Antennas]
In the far-field scenario, the position of an antenna can be represented by its center coordinates. We denote the center coordinate of the $\ell$-th transmitting antennas at the BS by $\bm{p}_{t,\ell} = \left(x_{t,\ell}, y_{t,\ell}, z_{t,\ell}\right)$, $\ell = 1, 2, \ldots, L$. Similarly, the center coordinate of the receiving antenna at the $\kappa$-th user is denoted by $\bm{p}_{r,\kappa} = \left(x_{r,\kappa}, y_{r,\kappa}, z_{r,\kappa}\right)$ for $\kappa = 1, 2, \ldots, K$.
\end{defi}

\begin{defi}[Directions of Antennas]
The direction of an antenna, characterized by a unit vector $\bm{n}$, is the orientation in which the antenna is pointed. We denote the direction for the $\ell$-th transmitting antenna at the BS by $\bm{n}_{t,\ell} = \big(x_{t,\ell}^{(n)}, y_{t,\ell}^{(n)}, z_{t,\ell}^{(n)}\big)$, and for the receiving antenna at the $\kappa$-th user by $\bm{n}_{r,\kappa} = \big(x_{r,\kappa}^{(n)}, y_{r,\kappa}^{(n)}, z_{r,\kappa}^{(n)}\big)$.
\end{defi}

\begin{defi}[Polar and Azimuthal Angles of Antennas]
The polar angle of an antenna, denoted by $\theta^{(n)} \in[0,\pi]$, is the angle between $\bm{k}$ and the direction of the antenna. The azimuthal angle of an antenna, denoted by $\phi^{(n)} \in[0, 2 \pi)$, is the angle measured counterclockwise from $\bm{i}$ to the projection of the antenna onto the $x\bm{o}y$ plane. 
For the $\ell$-th transmitting antenna, these angles are denoted by $\theta_{t,\ell}^{(n)}$ and $\phi_{t,\ell}^{(n)}$, respectively.
For the receiving antenna at the $\kappa$-th user, they are denoted by $\theta_{r,\kappa}^{(n)}$ and $\phi_{r,\kappa}^{(n)}$, respectively.
\end{defi}

\begin{figure}
    \centering
    \begin{subfigure}[!t]{.8\linewidth}
        \centering
        \includegraphics[width=\linewidth]{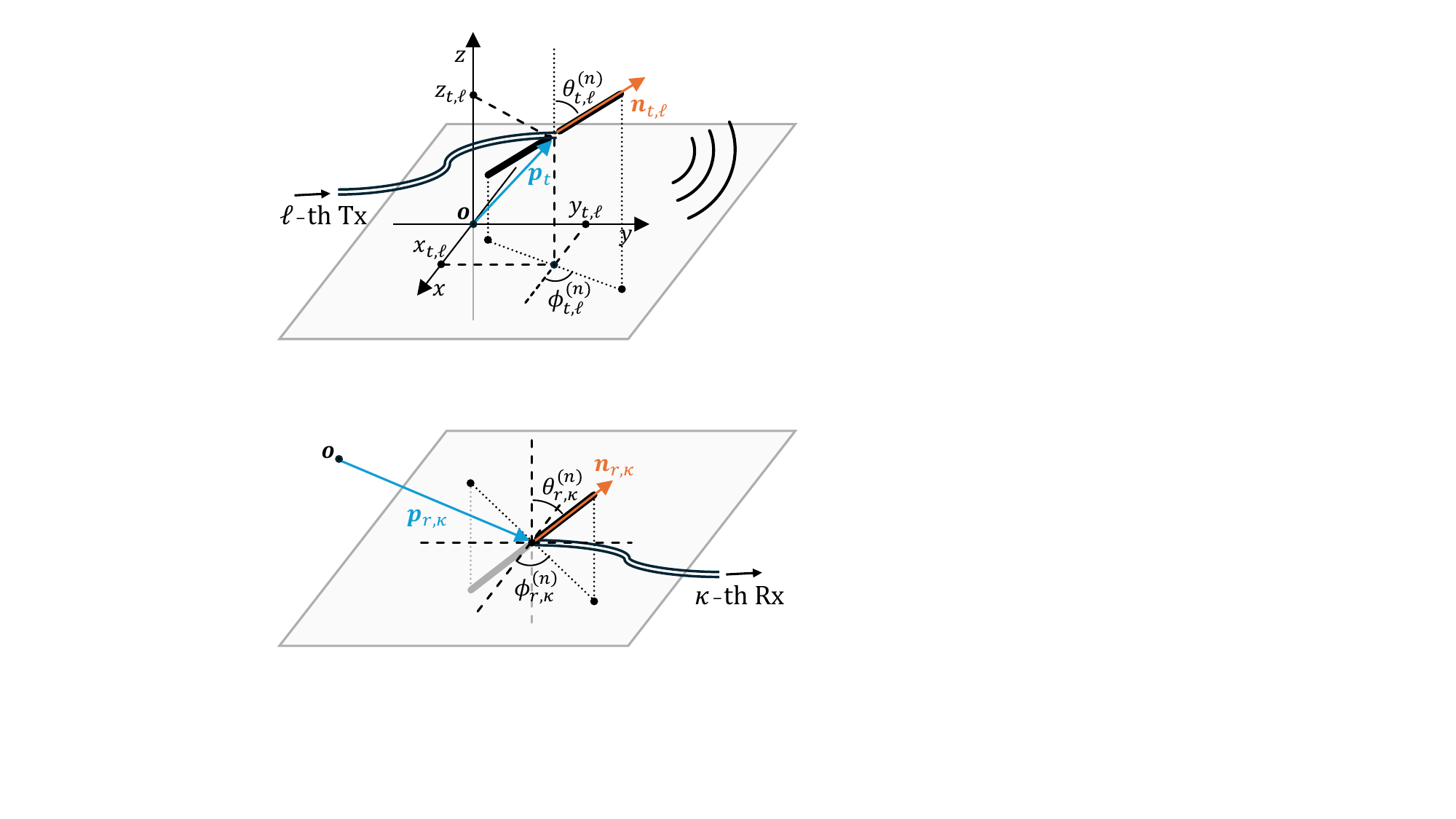}
        \caption{$\ell$-th transmitting antenna}\label{f:P2P_SNR}
        \includegraphics[width=\linewidth]{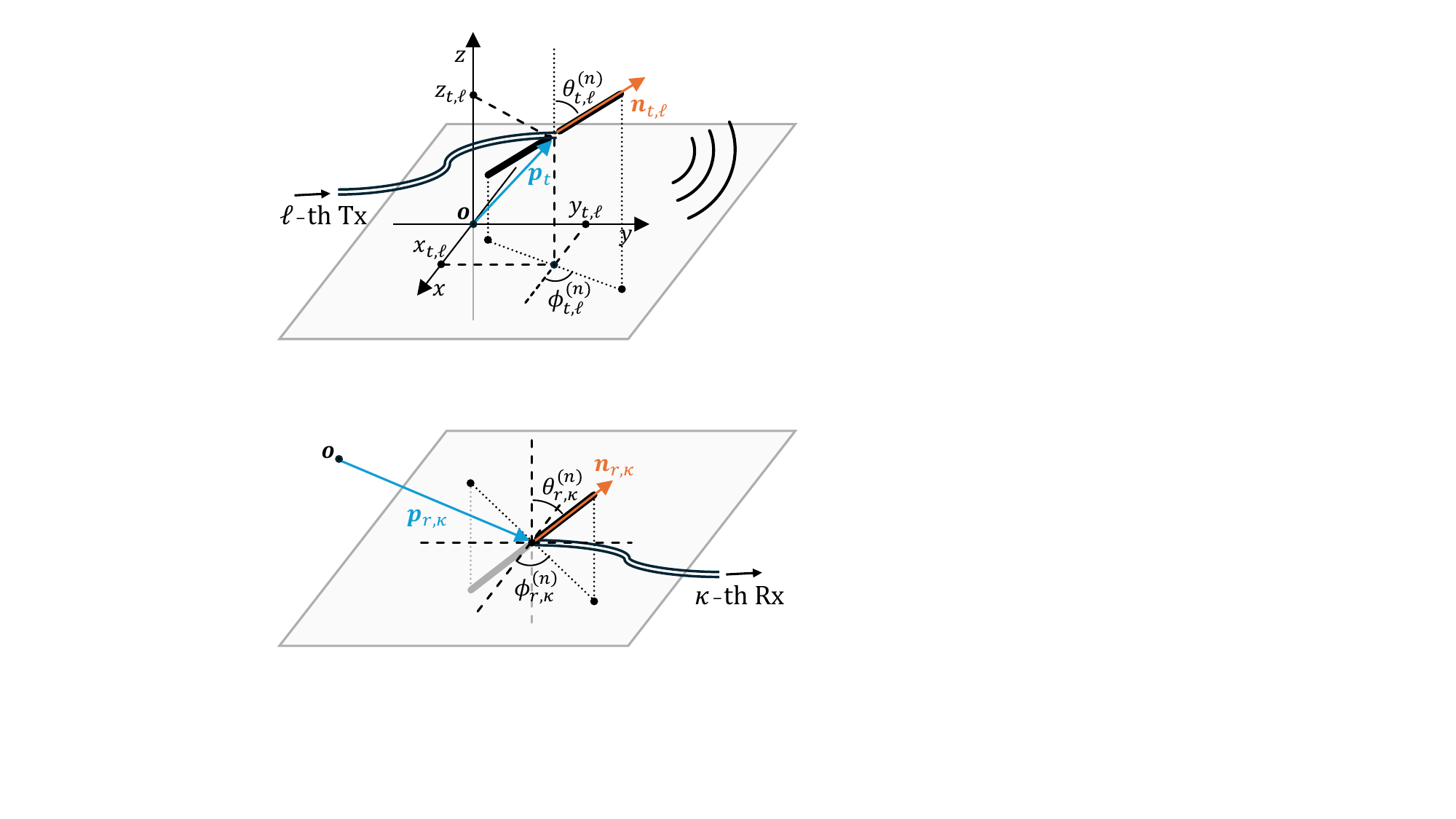}
        \caption{$\kappa$-th receiving antenna}\label{f:P2P_geo}
    \end{subfigure}
    \caption{Illustration of the polar and azimuthal angles, along with the geometric representation of the propagation link between the $\ell$-th transmitting antenna and the $\kappa$-th receiving antenna.}
    \label{f:geometry}
\end{figure}

The polar and azimuthal angles are illustrated in Fig.~\ref{f:geometry}. 
A summary of the definitions and their geometric relationships is given in Tab.~\ref{tab:notations}.

\begin{lem}
The direction of an antenna in the CCS can be represented by its polar and azimuthal angles. For the $\ell$-th transmitting antenna, we have
\begin{eqnarray} \label{e:nt}
&&\hspace{-1cm} \bm{n}_{t,\ell} = \left(x_{t,\ell}^{(n)}, y_{t,\ell}^{(n)}, z_{t,\ell}^{(n)}\right) \notag\\
&&\hspace{-0.3cm} = \left( \sin{\theta_{t,\ell}^{(n)}} \cos{\phi_{t,\ell}^{(n)}}, \sin{\theta_{t,\ell}^{(n)}} \sin{\phi_{t,\ell}^{(n)}}, \cos{\theta_{t,\ell}^{(n)}} \right).
\end{eqnarray}
\end{lem}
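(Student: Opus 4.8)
The plan is to treat $\bm{n}_{t,\ell}$ as a unit direction vector and resolve it into Cartesian components by orthogonal projection onto $\bm{i}$, $\bm{j}$, and $\bm{k}$, using the definitions of the polar and azimuthal angles directly. Observe first that the orientation data $\big(\theta^{(n)}_{t,\ell}, \phi^{(n)}_{t,\ell}\big)$ specify the direction of $\bm{n}_{t,\ell}$ in exactly the same way that the SCS angles $(\theta,\phi)$ specify the radial unit vector $\bm{\nu}$; hence the claimed identity is precisely the first row of Table~\ref{tab:conversion} read as component expressions. I would therefore either invoke that table or, for a self-contained argument, reconstruct the components geometrically as outlined below.

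First I would extract the $\bm{k}$-component. By definition $\theta^{(n)}_{t,\ell}=\langle \bm{k},\bm{n}_{t,\ell}\rangle$, and since $|\bm{n}_{t,\ell}|=1$ the projection onto $\bm{k}$ is $z^{(n)}_{t,\ell}=\bm{n}_{t,\ell}\cdot\bm{k}=\cos\theta^{(n)}_{t,\ell}$. The remaining part of $\bm{n}_{t,\ell}$ lies in the $x\bm{o}y$ plane and, by the Pythagorean relation for the unit vector, has magnitude $\sin\theta^{(n)}_{t,\ell}$ (nonnegative since $\theta^{(n)}_{t,\ell}\in[0,\pi]$). Next I would resolve this in-plane projection using the azimuthal angle: because $\phi^{(n)}_{t,\ell}$ is measured counterclockwise from $\bm{i}$ in the $x\bm{o}y$ plane, the projection decomposes as $\sin\theta^{(n)}_{t,\ell}\cos\phi^{(n)}_{t,\ell}$ along $\bm{i}$ and $\sin\theta^{(n)}_{t,\ell}\sin\phi^{(n)}_{t,\ell}$ along $\bm{j}$, giving $x^{(n)}_{t,\ell}$ and $y^{(n)}_{t,\ell}$. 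Collecting the three components yields \eqref{e:nt}.

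There is no substantive obstacle here; the statement is the standard spherical-to-Cartesian conversion for a unit vector, and the only points requiring care are bookkeeping. I would verify consistency by checking that the three components satisfy $\big(x^{(n)}_{t,\ell}\big)^2+\big(y^{(n)}_{t,\ell}\big)^2+\big(z^{(n)}_{t,\ell}\big)^2=\sin^2\theta^{(n)}_{t,\ell}+\cos^2\theta^{(n)}_{t,\ell}=1$, confirming that $\bm{n}_{t,\ell}$ is indeed a unit vector, and by confirming that the reference convention for $\phi^{(n)}_{t,\ell}$ (counterclockwise from $\bm{i}$) matches the one underlying Table~\ref{tab:conversion}. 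The identical argument applies verbatim to the receiving antenna with $\big(\theta^{(n)}_{r,\kappa},\phi^{(n)}_{r,\kappa}\big)$.
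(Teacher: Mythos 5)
Your proposal is correct and matches the paper's own (implicit) justification: the paper states this lemma without a separate proof, relying on exactly the spherical-to-Cartesian conversion recorded in Table~\ref{tab:conversion}, where the antenna direction plays the role of the radial unit vector $\bm{\nu}$ specified by its polar and azimuthal angles. Your geometric reconstruction -- projecting onto $\bm{k}$ to get $\cos\theta^{(n)}_{t,\ell}$, then resolving the in-plane remainder of magnitude $\sin\theta^{(n)}_{t,\ell}$ along $\bm{i}$ and $\bm{j}$ -- is precisely that standard argument, with the unit-norm check as a sound sanity verification.
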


The CCS representation facilitates mathematical operations such as dot products, providing computational simplicity. In contrast, the SCS representation offers a clearer depiction of the relationship between channel gains and antenna orientation.

\section{The PAMA Framework}
\label{SecIII}
Building upon the system model, this section presents our PAMA framework. 
At the core of wireless communication is the interaction between transmitting and receiving antennas through electromagnetic fields. An alternating current in the transmitting antenna generates electromagnetic waves, creating an electromagnetic field that propagates through space to the receiver. The receiving antenna interacts with this field, specifically sensing the electric field component, which induces a current within the antenna used to recover the transmitted signal. This fundamental interaction is governed by Maxwell's equations, which describe how electric and magnetic fields are generated by charges and currents, and how they propagate in space.

In this section, we begin by modeling the radiation of the electric field from movable transmitting antennas, characterizing how movements -- 3D translations and rotations -- affect the spatial distribution and polarization of the emitted electromagnetic waves.
We then explore how movable receiving antennas detect and interact with the electric field. Ultimately, the analysis leads to a polarization-aware channel representation for movable antennas.

\subsection{Electric field radiation}
\label{Sec:SM_Tx}
Given that the electric field at the receiver is a superposition of waves from all transmitting antennas, we can focus on the radiation analysis of a single antenna. 
To facilitate the exposition, we introduce the emission angle of a transmitting antenna.

\begin{defi}[Emission Angle]
The emission angle $\theta^{(e)}\in[0,\pi]$ is the angle between the direction of the transmitting antenna and the propagation direction of the electromagnetic wave from the antenna to an observation point $\bm{p}$. The emission angle of the $\ell$-th transmitting antenna can be written as
\begin{equation}
\theta_{\ell,\bm{p}}^{(e)} \triangleq
\left\langle \bm{p} - \bm{p}_{t, \ell}, \bm{n}_{t,\ell} \right\rangle,
\end{equation}
where $\cos{\theta^{(e)}_{\ell,\bm{p}}} = \frac{\left(\bm{p} - \bm{p}_{t, \ell} \right) \cdot \bm{n}_{t,\ell}}{\left|\left(\bm{p} - \bm{p}_{t, \ell} \right)\right||\bm{n}_{t,\ell}|}$.
\end{defi}

Theorem \ref{thm:radiation} below summarizes the main result of this section.

\begin{thm}[Electric Field Radiation]
\label{thm:radiation}
Consider the $\ell$-th transmitting antenna located at $\bm{p}_{t,\ell}$ and oriented in the direction $\bm{n}_{t,\ell}$. Suppose the transmitted signal at $\ell$-th antenna is $s_\ell$. At any observation point $\bm{p}$ in space, the electric field induced by this antenna is given by
\begin{eqnarray} \label{e:El}
&&\hspace{-1cm} \bm{\mathcal{E}}_{\ell,\bm{p}} \triangleq \mathcal{E}_{\ell,\bm{p}} \bm{n}_{\bm{\mathcal{E}}_{\ell,\bm{p}}}, \\
&&\hspace{-1cm}  \mathcal{E}_{\ell,\bm{p}} =
2 j c \mu_1 s_\ell \frac{e^{-j \frac{2\pi }{\lambda}|\bm{p}|} }{4 \pi |\bm{p}| } \frac{\cos{\left( \frac{\pi}{2} \cos{\theta^{(e)}_{\ell,\bm{p}}} \right)}}{\sin{\theta^{(e)}_{\ell,\bm{p}}}} e^{j \frac{2\pi}{\lambda} \frac{\bm{p} \cdot \bm{p}_{t,\ell}}{|\bm{p}|}},
\notag\\
&&\hspace{-1cm}  
\bm{n}_{\bm{\mathcal{E}}_{\ell,\bm{p}}} = \frac{\bm{n}_{t, \ell} - \left( \bm{n}_{t, \ell} \cdot \frac{\bm{p}}{|\bm{p}|} \right)  \frac{\bm{p}}{|\bm{p}|}}{\left| \bm{n}_{t, \ell} - \left( \bm{n}_{t, \ell} \cdot \frac{\bm{p}}{|\bm{p}|} \right)  \frac{\bm{p}}{|\bm{p}|} \right|},
\notag
\end{eqnarray}
where $\mathcal{E}_{\ell,\bm{p}}$ denotes the complex amplitude of the induced electric field, incorporating both magnitude and phase; $\bm{n}_{\bm{\mathcal{E}}_{\ell,\bm{p}}}$ denotes the polarization direction of the electromagnetic wave at point $\bm{p}$; $c$ is the speed of light; $\mu_1$ is the magnetic permeability of air, $\lambda$ is the wavelength.
\end{thm}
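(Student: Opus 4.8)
The plan is to derive the field from first principles via the retarded magnetic vector potential, treating the half-wave dipole as a one-dimensional current filament and then passing to the far-field limit; the orientation dependence will be handled by a symmetry argument so that the antenna direction $\bm{n}_{t,\ell}$ enters only through the emission angle $\theta^{(e)}_{\ell,\bm{p}}$ and through the transverse projection that defines the polarization. For clarity I would first place the antenna at the origin (the displacement to $\bm{p}_{t,\ell}$ is a phase-bookkeeping step deferred to the end), model the standing-wave current on a center-fed half-wave dipole as $I(\zeta)=I_0\cos(\frac{2\pi}{\lambda}\zeta)$ for $\zeta\in[-\lambda/4,\lambda/4]$ measured along $\bm{n}_{t,\ell}$, and identify the driving amplitude $I_0$ with the transmitted signal $s_\ell$.

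First I would write the retarded vector potential
\begin{equation}
\bm{A}(\bm{p})=\frac{\mu_1}{4\pi}\,\bm{n}_{t,\ell}\int_{-\lambda/4}^{\lambda/4} I_0\cos\!\Big(\tfrac{2\pi}{\lambda}\zeta\Big)\,\frac{e^{-j\frac{2\pi}{\lambda}|\bm{p}-\zeta\bm{n}_{t,\ell}|}}{|\bm{p}-\zeta\bm{n}_{t,\ell}|}\,d\zeta,
\end{equation}
which is parallel to the current, hence to $\bm{n}_{t,\ell}$. Next I would invoke the far-field approximation, replacing $|\bm{p}-\zeta\bm{n}_{t,\ell}|$ by $|\bm{p}|$ in the amplitude and by $|\bm{p}|-\zeta\cos\theta^{(e)}_{\ell,\bm{p}}$ in the phase. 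This pulls the spherical-wave factor $e^{-j\frac{2\pi}{\lambda}|\bm{p}|}/(4\pi|\bm{p}|)$ out of the integral and reduces the remainder to the standard form $\int_{-\lambda/4}^{\lambda/4}\cos(\frac{2\pi}{\lambda}\zeta)\cos(\frac{2\pi}{\lambda}\zeta\cos\theta^{(e)}_{\ell,\bm{p}})\,d\zeta$, the $\zeta$-odd part vanishing by symmetry; this evaluates in closed form to $\frac{\lambda}{\pi}\cos(\frac{\pi}{2}\cos\theta^{(e)}_{\ell,\bm{p}})/\sin^2\theta^{(e)}_{\ell,\bm{p}}$.

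The field then follows from the far-field relation $\bm{\mathcal{E}}=-j\omega\,\bm{A}_\perp$, where $\omega=\frac{2\pi}{\lambda}c$ and $\bm{A}_\perp=\bm{A}-(\bm{A}\cdot\frac{\bm{p}}{|\bm{p}|})\frac{\bm{p}}{|\bm{p}|}$ is the component transverse to the propagation direction. Projecting $\bm{n}_{t,\ell}$ onto this transverse plane produces exactly the unit vector $\bm{n}_{\bm{\mathcal{E}}_{\ell,\bm{p}}}$, while the magnitude of that projection equals $\sin\theta^{(e)}_{\ell,\bm{p}}$, which cancels one power of $\sin$ and leaves the stated pattern $\cos(\frac{\pi}{2}\cos\theta^{(e)}_{\ell,\bm{p}})/\sin\theta^{(e)}_{\ell,\bm{p}}$. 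Collecting the prefactors $-j\omega=-j\frac{2\pi}{\lambda}c$, $\mu_1/(4\pi)$, and the $\lambda/\pi$ from the integral reproduces the stated amplitude constant $2jc\mu_1$ once $I_0$ is identified with $s_\ell$ (the sign fixed by the adopted time-harmonic convention). Finally, restoring the antenna position $\bm{p}_{t,\ell}$ replaces the path length by $|\bm{p}-\bm{p}_{t,\ell}|\approx|\bm{p}|-\frac{\bm{p}\cdot\bm{p}_{t,\ell}}{|\bm{p}|}$, contributing the displacement phase $e^{j\frac{2\pi}{\lambda}\frac{\bm{p}\cdot\bm{p}_{t,\ell}}{|\bm{p}|}}$, while the emission angle, though defined from $\bm{p}-\bm{p}_{t,\ell}$, agrees with the angle to $\bm{p}/|\bm{p}|$ to leading order.

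The main obstacle is not the scalar integral, which is classical, but establishing the orientation independence cleanly: I must show that for an arbitrarily rotated dipole the entire angular dependence collapses into the single emission angle $\theta^{(e)}_{\ell,\bm{p}}$ and that the polarization is the normalized transverse projection of $\bm{n}_{t,\ell}$, rather than the textbook $\bm{\vartheta}$ direction tied to a $\bm{k}$-aligned dipole. I expect to argue this by the rotational symmetry of free space: rotating the source rotates the field identically, so it suffices to verify the $\bm{k}$-aligned case and then map $\theta\mapsto\theta^{(e)}_{\ell,\bm{p}}$ and $\bm{\vartheta}\mapsto\bm{n}_{\bm{\mathcal{E}}_{\ell,\bm{p}}}$. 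Care is also needed to keep the far-field orders of approximation consistent between amplitude and phase, so that the displacement phase is retained while higher-order terms are safely discarded.
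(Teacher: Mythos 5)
Your proposal is correct and follows essentially the same route as the paper: the paper's Appendix~A derives the $\bm{k}$-aligned half-wave-dipole far field via the same retarded vector-potential integral and the same transverse ($\theta$-component) projection, and the main-text proof then adds exactly your displacement phase $e^{j \frac{2\pi}{\lambda} \bm{p} \cdot \bm{p}_{t,\ell}/|\bm{p}|}$ under the far-field approximation and takes the polarization to be the normalized transverse projection of $\bm{n}_{t,\ell}$. The only discrepancy, your overall factor $-j$ versus the stated $+2jc\mu_1$, is a sign/orientation ambiguity (the normalized projection of $\bm{n}_{t,\ell}$ equals $-\bm{\vartheta}$ in the reference configuration) that the paper itself glosses over, so it does not distinguish your argument from theirs.
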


\begin{table*}[!t]
\centering
\caption{A summary of the definitions and their geometric relationships.}
\begin{tabular}{cc|cc}
     \toprule
     \multicolumn{2}{c|}{\textbf{For $\ell$-th transmitting antenna}} & \multicolumn{2}{c}{\textbf{For $\kappa$-th receiving antenna}} \\
     \midrule
     \textbf{Position} & \makecell[l]{$\bm{p}_{t, \ell} = \left( x_{t, \ell}, y_{t, \ell}, z_{t, \ell}, \right)$} &
     \textbf{Position} & \makecell[l]{$\bm{p}_{r, \kappa} = \left( x_{r, \kappa}, y_{r, \kappa}, z_{r, \kappa} \right)$} \\
     \textbf{Direction} & \makecell[l]{$\bm{n}_{t, \ell} = \left( x_{t, \ell}^{(n)}, y_{t, \ell}^{(n)}, z_{t, \ell}^{(n)}, \right)$\\
     Polar Angle: $\theta_{t, \ell}^{(n)}$, Azimuthal Angle: $\phi_{t, \ell}^{(n)}$}&
     \textbf{Direction} & \makecell[l]{$\bm{n}_{r, \kappa} = \left( x_{r, \kappa}^{(n)}, y_{r, \kappa}^{(n)}, z_{r, \kappa}^{(n)}, \right)$\\
     Polar Angle: $\theta_{r, \kappa}^{(n)}$, Azimuthal Angle: $\phi_{r, \kappa}^{(n)}$} \\
     \multirow{2}{*}{\textbf{Emission Angle}} &\makecell[l]{\multirow{2}{*}{$\theta_{\ell, \kappa}^{(e)}$, with $\cos{\theta_{\ell, \kappa}^{(e)}} = \frac{\bm{p}_{r, \kappa} \cdot \bm{n}_{t, \ell}}{|\bm{p}_{r, \kappa}| |\bm{n}_{t, \ell}|}$}}
     & \textbf{Incident Angle} & \makecell[l]{$\theta_{\kappa}^{(i)}$, with $\sin{\theta_{\kappa}^{(i)}} = \frac{\bm{p}_{r, \kappa} \cdot \bm{n}_{r, \kappa}}{|\bm{p}_{r, \kappa}| |\bm{n}_{r, \kappa}|}$} \\
      & &  \textbf{Transmitted Angle} & \makecell[l]{$\theta_{\kappa}^{(t)}$, with $\frac{\sin{\theta_{\kappa}^{(t)}}}{\sin{\theta_{\kappa}^{(i)}}} = \sqrt{\frac{\varepsilon_1 \mu_1}{\varepsilon_2 \mu_2}}$}\\
     \bottomrule
\end{tabular}
\label{tab:notations}
\end{table*}

\begin{proof}
We start by considering a reference scenario where the dipole antenna is stationary, located at the origin $\bm{p}_{t,\ell}=\bm{o}$, and oriented along the direction $\bm{n}_{t,\ell}=\bm{k}$.
When transmitting signal $s_\ell$, the current density $\bm{J}$ applied to the dipole can be written as \cite{stutzman2012antenna}
\begin{eqnarray} \label{e:J}
\bm{J} = I_0 s_\ell \sin{\left( \frac{2\pi}{\lambda} \left( \frac{\lambda}{4} - |z| \right) \right)} \bm{k}, \quad |z| \leq \frac{\lambda}{4}.
\end{eqnarray}
where $I_0$ is the maximum current. Without loss of generality, we set $I_0=1$.

In the mmWave band, signal diffraction and scattering are minimal, making non-line-of-sight (NLOS) paths significantly weaker than the LOS path.\footnote{As a concrete example, LOS signals can be over $50$ dB stronger than the NLOS signal at a distance of 100 meters, according to the mmWave channel specified in 3GPP TR 38.901 \cite{3GPP5G}.} Therefore, we only consider the LOS component. 

The electric field radiated by the stationary transmitting antenna at any point $\bm{p} = (x_p, y_p, z_p)$ in space is given by 
\begin{equation} \label{e:E0}
\bm{\mathcal{E}}_{\bm{o},\bm{p}} = 
2 j c \mu_1 s_\ell \frac{e^{-j \frac{2 \pi}{\lambda} |\bm{p}|} }{4 \pi |\bm{p}| } \frac{\cos{\left( \frac{\pi}{2} \cos{\theta_{\bm{o},\bm{p}}^{(e)}} \right)}}{\sin{\theta_{\bm{o},\bm{p}}^{(e)}}} \bm{\vartheta},
\end{equation}
where $\theta_{\bm{o},\bm{p}}^{(e)}$ is the emission angle between $\bm{k}$ and $\bm{p}$, and  $\cos{\theta_{\bm{o},\bm{p}}^{(e)}} = \frac{\bm{p} \cdot \bm{k}}{|\bm{p}||\bm{k}|} = \frac{z_p}{|\bm{p}|}$. The unit vector $\bm{\vartheta}$ is the polarization direction at $\bm{p}$.
Eq. \eqref{e:E0} is a classical result in electromagnetic theory \cite{stutzman2012antenna,stratton2007electromagnetic} and provides the foundation for our analysis. In Appendix \ref{sec:AppA} we provide a detailed derivation of this equation from Maxwell's equations.

We next consider the antenna movement. After translation and rotation, the $\ell$-th antenna moves to  $\bm{p}_{t,\ell}$ with a direction $\bm{n}_{t,\ell}$. The complex amplitude of the induced electric field follows from \eqref{e:E0} and can be written as
\begin{equation} \label{e:El_amp}
\mathcal{E}_{\ell,\bm{p}} = 2 j c \mu_1 s_\ell \frac{e^{-j \frac{2 \pi}{\lambda} \left|\bm{p} - \bm{p}_{t,\ell}\right|} }{4 \pi \left|\bm{p} - \bm{p}_{t,\ell}\right| } \frac{\cos{\left( \frac{\pi}{2} \cos{\theta^{(e)}_{\ell,\bm{p}}} \right)}}{\sin{\theta^{(e)}_{\ell,\bm{p}}}},
\end{equation}
where $\theta^{(e)}_{\ell,\bm{p}}$ is the emission angle from $\bm{p}_{t,\ell}$ to $\bm{p}$.
Under the far-field condition, $\bm{p}$ and $\bm{p} - \bm{p}_{t,\ell}$ can be treated parallel, hence we have
\begin{eqnarray*}
\left|\bm{p} - \bm{p}_{t,\ell}\right| \hspace{-0.2cm}&\approx&\hspace{-0.2cm}
|\bm{p}| - \left|\bm{p}_{t,\ell}\right| \cos{\left\langle  \bm{p}, \bm{p}_{t,\ell} \right\rangle} = |\bm{p}| - \frac{1}{|\bm{p}|} \bm{p} \cdot \bm{p}_{t,\ell},
\\
\cos{\theta^{(e)}_{\ell,\bm{p}}} 
\hspace{-0.2cm}&=&\hspace{-0.2cm}
\frac{\left(\bm{p} - \bm{p}_{t, \ell} \right) \cdot \bm{n}_{t,\ell}}{\left|\left(\bm{p} - \bm{p}_{t, \ell} \right)\right||\bm{n}_{t,\ell}|} \approx \frac{1}{|\bm{p}|} \bm{p} \cdot \bm{n}_{t,\ell}.
\end{eqnarray*}

Therefore, \eqref{e:El_amp} can be refined as
\begin{equation} \label{e:El_amp1}
\mathcal{E}_{\ell,\bm{p}} \approx 2 j c \mu_1 s_\ell \frac{e^{-j \frac{2 \pi}{\lambda} |\bm{p}|} }{4 \pi |\bm{p}| } \frac{\cos{\left( \frac{\pi}{2} \cos{\theta^{(e)}_{\ell,\bm{p}}} \right)}}{\sin{\theta^{(e)}_{\ell,\bm{p}}}}e^{j \frac{2\pi}{\lambda} \frac{\bm{p} \cdot \bm{p}_{t,\ell}}{|\bm{p}|}},
\end{equation}
where the approximation follows from $\lim \limits_{|\bm{p}_{t,\ell}| / |\bm{p}| \rightarrow 0}\big(|\bm{p}| - \frac{\bm{p} \cdot \bm{p}_{t,\ell}}{|\bm{p}|} \big) = |\bm{p}|$.
Note that the term $\frac{2\pi}{\lambda}$ in the exponent indicates that the phase varies with distance on the order of the wavelength, hence the term $\frac{\bm{p} \cdot \bm{p}_{t,\ell}}{|\bm{p}|}$ in the exponent cannot be neglected.

At point $\bm{p}$, the polarization direction of the electric field is perpendicular to the propagation direction, and lies in the plane formed by the direction of the transmitting antenna $\bm{n}_{t, \ell}$ and the propagation direction.
Thus, the polarization direction $\bm{n}_{\bm{\mathcal{E}}_{\ell,\bm{p}}}$ can be obtained by subtracting the component along the propagation direction from the direction of the transmitting antenna, yielding
\begin{equation} \label{e:El_dir}
\bm{n}_{\bm{\mathcal{E}}_{\ell,\bm{p}}} = \frac{\bm{n}_{t, \ell} - \left( \bm{n}_{t, \ell} \cdot \frac{\bm{p}}{|\bm{p}|} \right)  \frac{\bm{p}}{|\bm{p}|}}{\left| \bm{n}_{t, \ell} - \left( \bm{n}_{t, \ell} \cdot \frac{\bm{p}}{|\bm{p}|} \right)  \frac{\bm{p}}{|\bm{p}|} \right|}.
\end{equation}

Combining \eqref{e:El_amp} and \eqref{e:El_dir} gives us the induced electric field at observation point $\bm{p}$ in \eqref{e:El}.
\end{proof}

Eq. \eqref{e:El} reveals that, at a given observation point $\bm{p}$, both the complex amplitude and the polarization of the electric field are affected by the movement of the transmitting antenna. 


\begin{cor}\label{cor:1}
The magnitude of the electric field
\begin{equation} \label{e:magnitude}
|\mathcal{E}_{\ell,\bm{p}}| \triangleq \frac{ 2 c \mu_1 s_\ell }{4 \pi |\bm{p}| } \frac{\cos{\left( \frac{\pi}{2} \cos{\theta^{(e)}_{\ell,\bm{p}}} \right)}}{\sin{\theta^{(e)}_{\ell,\bm{p}}}} 
\end{equation}
first increases and then decreases as a function of $\theta^{(e)}_{\ell,\bm{p}}$, reaching its maximum value when $\theta^{(e)}_{\ell,\bm{p}} = \frac{\pi}{2}$. This variation depends solely on the emission angle, which results from the 3D orientation of the transmitting antenna.
\end{cor}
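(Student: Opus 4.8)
The plan is to strip the statement down to its only $\theta$-dependent factor. The prefactor $\frac{2c\mu_1 s_\ell}{4\pi|\bm{p}|}$ is positive and independent of $\theta^{(e)}_{\ell,\bm{p}}$, so it plays no role in monotonicity; writing $\theta$ for $\theta^{(e)}_{\ell,\bm{p}}$, it suffices to analyze
\[
f(\theta) \triangleq \frac{\cos\!\left(\frac{\pi}{2}\cos\theta\right)}{\sin\theta}, \qquad \theta\in(0,\pi),
\]
and to show that $f$ strictly increases on $(0,\tfrac{\pi}{2})$, strictly decreases on $(\tfrac{\pi}{2},\pi)$, and attains its maximum $f(\tfrac{\pi}{2})=1$ at $\theta=\tfrac{\pi}{2}$.

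First I would halve the work via a symmetry. Substituting $\theta\mapsto\pi-\theta$ and using $\cos(\pi-\theta)=-\cos\theta$, $\sin(\pi-\theta)=\sin\theta$, together with the evenness of cosine, gives $f(\pi-\theta)=f(\theta)$. Thus $f$ is symmetric about $\theta=\tfrac{\pi}{2}$, and it is enough to prove that $f$ is strictly increasing on $(0,\tfrac{\pi}{2})$: the decreasing behaviour on $(\tfrac{\pi}{2},\pi)$ and the location of the maximum then follow at once, while $f(\tfrac{\pi}{2})=\cos 0/\sin\tfrac{\pi}{2}=1$ is immediate.

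Next I would differentiate. The quotient rule yields $f'(\theta)=N(\theta)/\sin^2\theta$ with $N(\theta)=\frac{\pi}{2}\sin^2\theta\,\sin\!\left(\frac{\pi}{2}\cos\theta\right)-\cos\theta\,\cos\!\left(\frac{\pi}{2}\cos\theta\right)$. On $(0,\tfrac{\pi}{2})$ every trigonometric factor is positive, because $\frac{\pi}{2}\cos\theta\in(0,\tfrac{\pi}{2})$; dividing by $\cos\!\left(\frac{\pi}{2}\cos\theta\right)$ and setting $u:=\cos\theta\in(0,1)$, the condition $f'(\theta)>0$ becomes equivalent to the transcendental inequality
\[
\tan\!\left(\frac{\pi}{2}u\right) > \frac{2u}{\pi\left(1-u^2\right)}.
\]

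This inequality is the crux and the step I expect to be the main obstacle, since a direct derivative test merely trades one transcendental comparison for another. I would sidestep the difficulty using the Mittag-Leffler (partial-fraction) expansion of the tangent, $\tan x=\sum_{n\ge 0}\frac{8x}{(2n+1)^2\pi^2-4x^2}$. Evaluating at $x=\frac{\pi}{2}u$ gives $\tan\!\left(\frac{\pi}{2}u\right)=\sum_{n\ge 0}\frac{4u}{\pi\left[(2n+1)^2-u^2\right]}$, whose $n=0$ term alone equals $\frac{4u}{\pi(1-u^2)}$ while every remaining term is positive for $u\in(0,1)$. Hence $\tan\!\left(\frac{\pi}{2}u\right)>\frac{4u}{\pi(1-u^2)}>\frac{2u}{\pi(1-u^2)}$, which establishes the inequality with a comfortable margin and therefore $f'>0$ on $(0,\tfrac{\pi}{2})$. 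Combined with the symmetry, this completes the argument; as an optional sanity check, a Taylor expansion near the endpoints shows $f(\theta)\to 0$ as $\theta\to 0^+$ and $\theta\to\pi^-$, confirming the ``increases then decreases'' picture.
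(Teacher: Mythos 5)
Your proof is correct, and it takes a genuinely different route from the paper's. Both arguments start the same way: drop the positive $\theta$-independent prefactor, apply the quotient rule to $f(\theta)=\cos\left(\frac{\pi}{2}\cos\theta\right)/\sin\theta$, and reduce everything to the sign of the numerator $N(\theta)=\frac{\pi}{2}\sin^2\theta\,\sin\left(\frac{\pi}{2}\cos\theta\right)-\cos\theta\,\cos\left(\frac{\pi}{2}\cos\theta\right)$. They diverge at the key step. The paper rewrites $N$ via product-to-sum identities as a weighted combination of $\cos\left(\frac{\pi}{2}\cos\theta+\theta\right)$ and $\cos\left(\frac{\pi}{2}\cos\theta-\theta\right)$ and argues that each weighted term is positive on $\left(0,\frac{\pi}{2}\right)$, then only sketches the mirror-image argument on $\left(\frac{\pi}{2},\pi\right)$; you instead divide by $\cos\left(\frac{\pi}{2}\cos\theta\right)>0$, substitute $u=\cos\theta$, reduce to the inequality $\tan\left(\frac{\pi}{2}u\right)>\frac{2u}{\pi\left(1-u^2\right)}$, and settle it with the Mittag--Leffler expansion of the tangent, handling $\left(\frac{\pi}{2},\pi\right)$ for free via the symmetry $f(\pi-\theta)=f(\theta)$. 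As for what each approach buys: the paper's route uses only elementary trigonometric identities, while yours imports the partial-fraction expansion of $\tan$ but obtains a stronger bound (the $n=0$ term already gives the constant $4$ in place of $2$) and a cleaner treatment of the two halves of the interval. Your route is also the more robust one: the paper's displayed derivative identity is actually off --- expanding its bracket yields $\frac{\pi}{2}\sin\theta\,\sin\left(\frac{\pi}{2}\cos\theta\right)-\cos\theta\,\cos\left(\frac{\pi}{2}\cos\theta\right)$, i.e.\ $\sin\theta$ where $N$ has $\sin^2\theta$; the correct coefficients are $-\frac{\pi}{4}\sin\theta-\frac{1}{2}$ and $\frac{\pi}{4}\sin\theta-\frac{1}{2}$, and since the latter is negative wherever $\sin\theta<\frac{2}{\pi}$, the paper's term-by-term positivity argument does not go through near the endpoints, whereas your tangent bound covers the whole interval uniformly.
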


\begin{proof}
    See Appendix~\ref{sec:AppB}.
\end{proof}

This paper focuses on the half-wave dipole, the most fundamental type of antenna, to develop the PAMA framework. The dipole antenna is directional in the 3D space: when positioned vertically, it radiates uniformly in the horizontal plane but has directional properties in the vertical plane. Our PAMA framework is sufficiently versatile to extend to more complex directional antennas, provided their radiation patterns are known.

\subsection{Electric Field Sensing}
\label{Sec:SM_Rx}
Theorem \ref{thm:radiation} establishes the electric field induced by a movable transmitting antenna. In this section, we explore how a movable receiving antenna senses the electric field. 

The process of capturing radio waves with a receiving antenna involves the transmission of electromagnetic waves from air into the antenna. Dipoles emit linearly polarized radio waves, and when such a plane wave reaches the boundary between the air and the antenna, part of the wave is reflected back into the air, while the remainder is transmitted into the antenna.

\begin{defi}[Incident Angle]\label{defi:Incident}
The incident angle $\theta^{(i)} \in [0, \frac{\pi}{2}]$ is the angle between the direction of the incoming electromagnetic wave and the normal to the surface of the antenna at the point of impact. The direction of the normal $\bm{n}_{r, \kappa}^{\perp}$ can be obtained by subtracting the component that is collinear with the receiving antenna $\bm{n}_{r,\kappa}$ from the direction of the incident electromagnetic wave $\bm{p}_{r,\kappa}$:
\begin{equation}
  \bm{n}_{r, \kappa}^{\perp} = \frac{\bm{p}_{r, \kappa} - ( \bm{p}_{r, \kappa} \cdot \bm{n}_{r, \kappa} ) \bm{n}_{r, \kappa}}{\left| \bm{p}_{r, \kappa} - \left( \bm{p}_{r, \kappa} \cdot \bm{n}_{r, \kappa} \right) \bm{n}_{r, \kappa} \right|}.  
\end{equation}
We denote by $\theta_{\ell,\kappa}^{(i)}$ the incident angle of the wave from the $\ell$-th transmitting antenna to the $\kappa$-th receiving antenna.
\end{defi}

\begin{defi}[Transmitted Angle]
The transmitted angle $\theta^{(t)} \in [0, \frac{\pi}{2}]$ is the angle between the normal of the receiving antenna and the new propagation direction of the electromagnetic wave inside the antenna after it has entered. 
\end{defi}

\begin{figure}[!t]
    \centering
    \includegraphics[width=.68\linewidth]{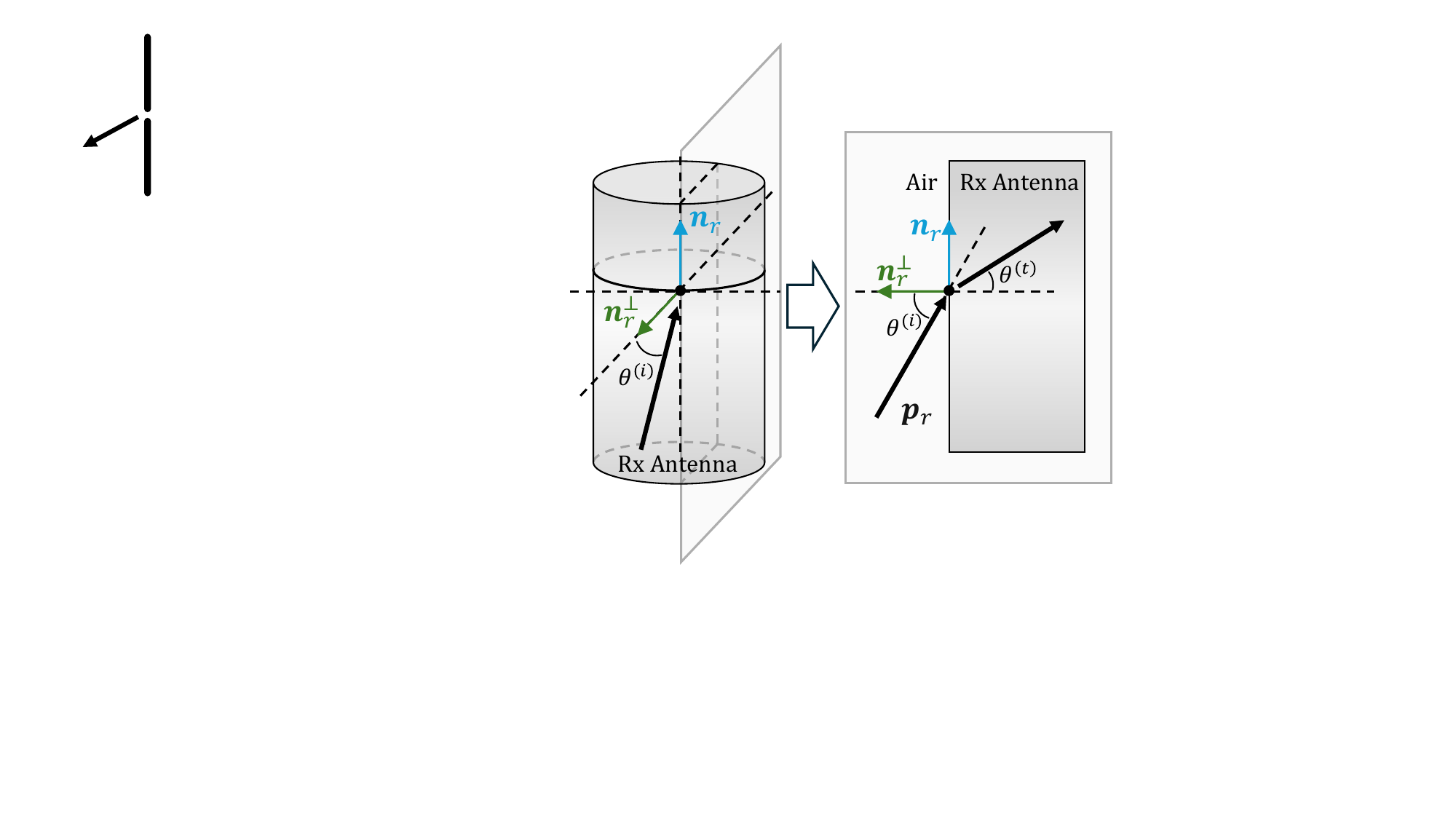}
    \caption{Illustration of the incident angle $\theta^{(i)}$ and transmitted angle $\theta^{(t)}$.}
    \label{fig:transmitted_angle}
\end{figure}

An illustration of the incident angle and transmitted angle is shown in Fig.~\ref{fig:transmitted_angle}

\begin{defi}[Permittivity and Permeability]
Permittivity $\varepsilon$ quantifies the ability of a material to store electrical energy in an electric field. Permeability $\mu$ is a measure of how easy or difficult it is for a magnetic field to pass through a material \cite{stratton2007electromagnetic}. We define $\varepsilon_1$ and $\varepsilon_2$ as the permittivity of air and antenna, respectively, and $\mu_1$ and $\mu_2$ as the permeability of air and antenna, respectively. We further define a relative
permittivity 
\begin{equation}
    \varepsilon_r \triangleq \frac{\varepsilon_2}{\varepsilon_1}>1.
\end{equation}
\end{defi}

\begin{lem}[Snell’s law \cite{parazzoli2003experimental}]
Let $\varepsilon_1$ and $\varepsilon_2$ be the permittivity of air and antenna, respectively, and $\mu_1$ and $\mu_2$ be the permeability of air and antenna, respectively. We have
\begin{equation}
    \frac{\sin{\theta^{(t)}}}{\sin{\theta^{(i)}}} = \sqrt{\frac{\varepsilon_1 \mu_1}{\varepsilon_2 \mu_2}}.
\end{equation}    
\end{lem}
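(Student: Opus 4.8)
The plan is to derive Snell's law directly from Maxwell's equations by imposing the electromagnetic boundary conditions at the planar air--antenna interface. First I would write the incident, reflected, and transmitted fields as plane waves of the form $\bm{\mathcal{E}} \propto e^{j(\omega t - \bm{\beta}\cdot\bm{r})}$, with wave vectors $\bm{\beta}_i,\bm{\beta}_r$ (both in air) and $\bm{\beta}_t$ (in the antenna), where $\omega$ is the angular frequency and $\bm{r}$ a point on the interface. Substituting such a plane wave into the wave equation that follows from Maxwell's equations gives the dispersion relation $|\bm{\beta}| = \omega\sqrt{\varepsilon\mu}$ in each medium, so that $\beta_1 \triangleq |\bm{\beta}_i| = |\bm{\beta}_r| = \omega\sqrt{\varepsilon_1\mu_1}$ in air and $\beta_2 \triangleq |\bm{\beta}_t| = \omega\sqrt{\varepsilon_2\mu_2}$ inside the antenna.

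The key step is phase matching at the interface. The boundary conditions require the tangential components of $\bm{E}$ and $\bm{H}$ to be continuous at \emph{every} point of the interface and at all times. Since the three waves share the common factor $e^{j\omega t}$, this continuity can hold everywhere on the plane only if the spatial phases $\bm{\beta}\cdot\bm{r}$ agree for all $\bm{r}$ on the interface, i.e. the in-plane (tangential) components of $\bm{\beta}_i$, $\bm{\beta}_r$, and $\bm{\beta}_t$ must be identical; as a byproduct this forces the reflected and transmitted waves to lie in the plane of incidence. Matching the tangential component of the incident and transmitted wave vectors, with the angles measured from the interface normal $\bm{n}_{r,\kappa}^{\perp}$ as in Definition~\ref{defi:Incident}, yields $\beta_1\sin\theta^{(i)} = \beta_2\sin\theta^{(t)}$.

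Finally I would substitute the dispersion relations and rearrange:
\begin{equation*}
\frac{\sin\theta^{(t)}}{\sin\theta^{(i)}} = \frac{\beta_1}{\beta_2} = \frac{\omega\sqrt{\varepsilon_1\mu_1}}{\omega\sqrt{\varepsilon_2\mu_2}} = \sqrt{\frac{\varepsilon_1\mu_1}{\varepsilon_2\mu_2}},
\end{equation*}
which is precisely the claimed identity.

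The hard part will be making the phase-matching argument rigorous rather than merely asserting it: one must show that continuity of the tangential fields at all points of the boundary forces equality of the in-plane wave-vector components. This is the standard ``kinematic'' portion of the Fresnel analysis, and once it is combined with the dispersion relation $\beta = \omega\sqrt{\varepsilon\mu}$, the remaining algebra is routine. An alternative, fully equivalent route would be to invoke the constancy of the transverse wavenumber as a conservation law at the interface, but I expect the boundary-condition derivation above to be the cleanest given that the surrounding development is already grounded in Maxwell's equations.
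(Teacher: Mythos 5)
Your derivation is correct, but note that the paper itself offers no proof of this lemma at all: it is stated as a cited classical result (Snell's law, attributed to \cite{parazzoli2003experimental}), and the paper immediately moves on to use it in the proof of Theorem~\ref{thm:sensing} (where, combined with $\mu_1=\mu_2$, it yields $\sin\theta^{(t)} = \sin\theta^{(i)}/\sqrt{\varepsilon_r}$ inside the reflection coefficients \eqref{e:Gamma_para0}--\eqref{e:Gamma_perp0}). What you have written is the standard kinematic phase-matching argument: the dispersion relation $|\bm{\beta}| = \omega\sqrt{\varepsilon\mu}$ in each medium, continuity of the tangential field components forcing equality of the in-plane wave-vector components, and then the one-line algebraic rearrangement. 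This is exactly the derivation that the paper's citation stands in for, and it is consistent with the rest of the paper's development (the same plane-wave, far-field setting and the same sign/angle conventions, with $\theta^{(i)}$ and $\theta^{(t)}$ measured from the interface normal $\bm{n}_{r,\kappa}^{\perp}$ of Definition~\ref{defi:Incident}). The one caveat worth flagging is physical rather than mathematical: your argument, like the paper's model, treats the antenna as a lossless dielectric with real $\varepsilon_2$ and real transmitted angle; for a metallic antenna $\varepsilon_2$ is complex and the "transmitted angle" must be interpreted accordingly, but that is a modeling assumption the paper has already made and not a gap in your proof.
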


A critical aspect of electric field sensing is polarization matching, which refers to the alignment of the transmitting and receiving antennas' polarization states. The efficiency of polarization matching determines how much energy is transferred into the antenna. Therefore, effective polarization matching is essential for maximizing the power transfer and minimizing signal degradation, especially in systems with movable antennas where orientation can dynamically change.

\begin{defi}[Polarization Matching Angle]
The polarization matching angle $\alpha \in [0, \pi]$ is the angle between the polarization direction of the incident wave and the direction of the receiving antenna. We denote by  $\alpha_{\ell,\kappa}=\langle \bm{n}_{\bm{\mathcal{E}}_{\ell,\bm{p}_{r,\kappa}}},\bm{n}_{r, \kappa}\rangle$ the polarization matching angle of the wave emitted from the $\ell$-th transmitting antenna to the $\kappa$-th receiving antenna.
\end{defi}

Theorem \ref{thm:sensing} below summarizes the main result of this section.

\begin{thm}[Electric Field Sensing]
\label{thm:sensing}
Consider the signal transmitted from the $\ell$-th transmitting antenna. The received signal at the $\kappa$-th user' antenna can be written as
\begin{eqnarray} \label{e:rt}
&&\hspace{-0.6cm} r_{\ell, \kappa} =  \frac{\mathcal{E}_{\ell, \bm{p}_{r,\kappa}}}{A_F}\mathcal{M}\left(\theta_{\ell, \kappa}^{(i)}, \alpha_{\ell, \kappa}\right) , \\
&&\hspace{-0.6cm} \mathcal{M}\left(\theta_{\ell, \kappa}^{(i)}, \alpha_{\ell, \kappa}\right)
\!=\! \sqrt{ 1 \!-\! \left| \Gamma_{\parallel, \kappa} \right|^2 \cos^2{\alpha_{\ell, \kappa}} \!-\! \left| \Gamma_{\perp, \kappa} \right|^2  \sin^2{\alpha_{\ell, \kappa}}}, \notag
\end{eqnarray}
where $\mathcal{E}_{\ell,\bm{p}_{r,\kappa}}$ is the complex amplitude of the induced
electric field by the $\ell$-th antenna at $\bm{p}_{r,\kappa}$; 
the constant $A_F$ is an antenna factor for converting field amplitude to voltage; 
$\mathcal{M}\left(\theta_{\ell, \kappa}^{(i)}, \alpha_{\ell, \kappa}\right)$ denotes the polarization matching efficiency, considering both the polarization matching angle and energy reflection; 
$\Gamma_{\parallel, \kappa}$ and $\Gamma_{\perp, \kappa}$ are the energy reflection coefficients for the horizontal and vertical polarization:
\begin{eqnarray} \label{e:Gamma_para}
&&\hspace{-1cm} \Gamma_{\parallel, \kappa} = \frac{ \sqrt{\varepsilon_r - 1 + \cos^2{\theta_{\ell, \kappa}^{(i)}}} - \varepsilon_r \cos{\theta_{\ell, \kappa}^{(i)}}}{ \sqrt{\varepsilon_r - 1 + \cos^2{\theta_{\ell, \kappa}^{(i)}}} + \varepsilon_r \cos{\theta_{\ell, \kappa}^{(i)}}}, \\
\label{e:Gamma_perp}
&&\hspace{-1cm}\Gamma_{\perp, \kappa} = \frac{ \sqrt{\varepsilon_r - 1 + \cos^2{\theta_{\ell, \kappa}^{(i)}}} - \cos{\theta_{\ell, \kappa}^{(i)}}}{ \sqrt{\varepsilon_r - 1 + \cos^2{\theta_{\ell, \kappa}^{(i)}}} + \cos{\theta_{\ell, \kappa}^{(i)}}}.
\end{eqnarray}
\end{thm}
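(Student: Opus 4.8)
The plan is to write the received signal as the incident field amplitude from Theorem~\ref{thm:radiation}, evaluated at the observation point $\bm{p}=\bm{p}_{r,\kappa}$, multiplied by two separate factors: the antenna factor $1/A_F$ that converts field amplitude into an induced voltage, and a scalar efficiency $\mathcal{M}$ that measures the fraction of the incident field amplitude actually coupled into the antenna after interface reflection. Since the factor $\mathcal{E}_{\ell,\bm{p}_{r,\kappa}}/A_F$ is immediate, the whole content of the theorem collapses to deriving $\mathcal{M}(\theta^{(i)}_{\ell,\kappa},\alpha_{\ell,\kappa})$ together with the two reflection coefficients in \eqref{e:Gamma_para} and \eqref{e:Gamma_perp}.

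First I would fix the plane of incidence at the $\kappa$-th receiver, namely the plane spanned by the propagation direction $\bm{p}_{r,\kappa}/|\bm{p}_{r,\kappa}|$ and the antenna axis $\bm{n}_{r,\kappa}$ --- equivalently, by the propagation direction and the surface normal $\bm{n}_{r,\kappa}^{\perp}$ of Definition~\ref{defi:Incident}. In the transverse plane orthogonal to the propagation direction I would introduce an orthonormal pair $(\hat{e}_{\parallel},\hat{e}_{\perp})$, with $\hat{e}_{\parallel}$ lying in the plane of incidence (parallel, or p-polarization) and $\hat{e}_{\perp}$ normal to it (perpendicular, or s-polarization), and resolve the incident polarization vector $\bm{n}_{\bm{\mathcal{E}}_{\ell,\bm{p}_{r,\kappa}}}$, which Theorem~\ref{thm:radiation} guarantees is transverse, into these two directions. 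The crucial geometric step is to show that this resolution is controlled by the polarization matching angle $\alpha_{\ell,\kappa}$: the parallel and perpendicular components carry power fractions $\cos^2\alpha_{\ell,\kappa}$ and $\sin^2\alpha_{\ell,\kappa}$, exploiting that the transverse projection of $\bm{n}_{r,\kappa}$ is collinear with $\hat{e}_{\parallel}$.

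Next I would obtain the two reflection coefficients from Maxwell's boundary conditions at the boundary between air and the antenna. Imposing continuity of the tangential electric and magnetic field components separately for the p- and s-polarized waves yields the classical Fresnel ratios in terms of the incident angle $\theta^{(i)}_{\ell,\kappa}$ and the transmitted angle $\theta^{(t)}_{\kappa}$; I would then eliminate $\theta^{(t)}_{\kappa}$ using Snell's law as stated above and the identity $\varepsilon_r-\sin^2\theta^{(i)}_{\ell,\kappa}=\varepsilon_r-1+\cos^2\theta^{(i)}_{\ell,\kappa}$, which reproduces \eqref{e:Gamma_para} and \eqref{e:Gamma_perp} exactly. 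Reading $|\Gamma_{\parallel,\kappa}|^2$ and $|\Gamma_{\perp,\kappa}|^2$ as the reflected power fractions of the two polarizations, energy conservation makes the captured power fraction equal to $\cos^2\alpha_{\ell,\kappa}\,(1-|\Gamma_{\parallel,\kappa}|^2)+\sin^2\alpha_{\ell,\kappa}\,(1-|\Gamma_{\perp,\kappa}|^2)$, which rearranges to $\mathcal{M}^2$. Taking the square root to return from power to amplitude and multiplying by $\mathcal{E}_{\ell,\bm{p}_{r,\kappa}}/A_F$ then yields \eqref{e:rt}.

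I expect the main obstacle to be the geometric identification in the second paragraph: relating the single angle $\alpha_{\ell,\kappa}$, defined in full three-dimensional space between $\bm{n}_{\bm{\mathcal{E}}_{\ell,\bm{p}_{r,\kappa}}}$ and $\bm{n}_{r,\kappa}$, to the in-plane p/s decomposition, given that $\bm{n}_{r,\kappa}$ is itself not transverse to the propagation direction. Some care --- and possibly a modeling convention identifying $\alpha_{\ell,\kappa}$ with the in-plane decomposition angle --- is needed to make the weights collapse cleanly to $\cos^2\alpha_{\ell,\kappa}$ and $\sin^2\alpha_{\ell,\kappa}$. By comparison, the Fresnel derivation and its Snell-based simplification are routine once the boundary conditions are written down.
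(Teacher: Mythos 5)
Your proposal is correct and follows essentially the same route as the paper: the paper likewise factors the received voltage as $\mathcal{E}_{\ell,\bm{p}_{r,\kappa}}/A_F$ times a matching efficiency, splits the incident polarization into parallel/perpendicular parts with weights $\cos^2\alpha_{\ell,\kappa}$ and $\sin^2\alpha_{\ell,\kappa}$, and obtains the reflection coefficients from the Fresnel relations --- written there in the characteristic-impedance form $\Gamma=(Z_2-Z_1)/(Z_2+Z_1)$ rather than re-derived from boundary conditions --- before eliminating $\theta^{(t)}_{\kappa}$ via Snell's law with $\mu_1=\mu_2$ and the identity $\varepsilon_r-\sin^2\theta^{(i)}_{\ell,\kappa}=\varepsilon_r-1+\cos^2\theta^{(i)}_{\ell,\kappa}$. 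The geometric subtlety you flag (the 3D angle $\alpha_{\ell,\kappa}$ versus the in-plane p/s decomposition angle, since $\bm{n}_{r,\kappa}$ is not transverse to the propagation direction) is genuine, but the paper resolves it exactly by the modeling convention you anticipate: the component of the field along $\bm{n}_{r,\kappa}$ is declared the ``horizontal'' part and the orthogonal remainder the ``vertical'' part, so the weights $\cos^2\alpha_{\ell,\kappa}$, $\sin^2\alpha_{\ell,\kappa}$ follow by definition and the paper's proof is no more rigorous on this point than yours.
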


\begin{proof}
The electric field induced by the $\ell$-th transmitting antenna possesses a polarization direction $\bm{n}_{\bm{\mathcal{E}}_{\ell,\bm{p}_{r,\kappa}}}$. The $\kappa$-th receiving antenna, oriented at $\bm{n}_{r, \kappa}$, has a polarization matching angle $\alpha_{\ell,\kappa}$ w.r.t. the incident wave. We can partition the incident polarization into horizontal and vertical components.

Due to the electromagnetic boundary conditions at the interface of different media (air to antenna), part of the energy of each polarization component is reflected. The reflection coefficients $\Gamma_{\parallel, \kappa}$ and $\Gamma_{\perp, \kappa}$, corresponding to horizontal and vertical polarizations respectively, determine how much of the incident wave's energy is transmitted into the receiver.

The reflection coefficient $\Gamma$ for any polarization can generally be expressed in terms of the characteristic impedances of the interacting media \cite{stratton2007electromagnetic,huang2021antennas}:
\begin{equation} \label{e:Gamma}
\Gamma = \frac{Z_2 - Z_1}{Z_2 + Z_1},
\end{equation}
where 
\begin{eqnarray*} \label{e:impedance}
&&\hspace{-1cm} Z_1 = \begin{cases}
120 \pi \sqrt{\frac{\mu_1}{\varepsilon_1}} \cos{\theta^{(i)}_{\ell,\kappa}}, & \text{for horizontal polarization}, \\
120 \pi \sqrt{\frac{\mu_1}{\varepsilon_1}} / \cos{\theta^{(i)}_{\ell,\kappa}}, & \text{for vertical polarization},
\end{cases} \notag\\
&&\hspace{-1cm} Z_2 = \begin{cases}
120 \pi \sqrt{\frac{\mu_2}{\varepsilon_2}} \cos{\theta^{(t)}_{\ell,\kappa}}, & \text{for horizontal polarization}, \\
120 \pi \sqrt{\frac{\mu_2}{\varepsilon_2}} / \cos{\theta^{(t)}_{\ell,\kappa}}, & \text{for vertical polarization}.
\end{cases}
\end{eqnarray*}

Given \eqref{e:Gamma}, we compute the reflection coefficients for the horizontal and vertical components:
\begin{eqnarray} \label{e:Gamma_para0}
&&\hspace{-1cm} \Gamma_{\parallel} = \frac{ 120 \pi \sqrt{\frac{\mu_2}{\varepsilon_2}} \sqrt{1 - \sin^2{\theta^{(t)}_{\ell,\kappa}}} -  120 \pi \sqrt{\frac{\mu_1}{\varepsilon_1}} \cos{\theta^{(i)}_{\ell,\kappa}}}{ 120 \pi \sqrt{\frac{\mu_2}{\varepsilon_2}} \sqrt{1 - \sin^2{\theta^{(t)}_{\ell,\kappa}}} + 120 \pi \sqrt{\frac{\mu_1}{\varepsilon_1}} \cos{\theta^{(i)}_{\ell,\kappa}}} \notag\\
&&\hspace{-0.5cm} \overset{(a)}{=} \frac{ \sqrt{\varepsilon_r - \sin^2{\theta^{(i)}_{\ell,\kappa}}} - \varepsilon_r \sqrt{1 - \sin^2{\theta^{(i)}_{\ell,\kappa}}}}{ \sqrt{\varepsilon_r - \sin^2{\theta^{(i)}_{\ell,\kappa}}} + \varepsilon_r \sqrt{1 - \sin^2{\theta^{(i)}_{\ell,\kappa}}}}, \\
\label{e:Gamma_perp0}
&&\hspace{-1cm}\Gamma_{\perp} = \frac{ \sqrt{\varepsilon_r - \sin^2{\theta^{(i)}_{\ell,\kappa}}} - \sqrt{1 - \sin^2{\theta^{(i)}_{\ell,\kappa}}}}{ \sqrt{\varepsilon_r - \sin^2{\theta^{(i)}_{\ell,\kappa}}} + \sqrt{1 - \sin^2{\theta^{(i)}_{\ell,\kappa}}}},
\end{eqnarray}
where (a) follows because the permeability of most materials remains constant and is unaffected by changes in frequency or temperature \cite{huang2021antennas}. Therefore, we have $\mu_1=\mu_2$.

\begin{figure*}[!t]
    \centering
    \includegraphics[width=0.88\linewidth]{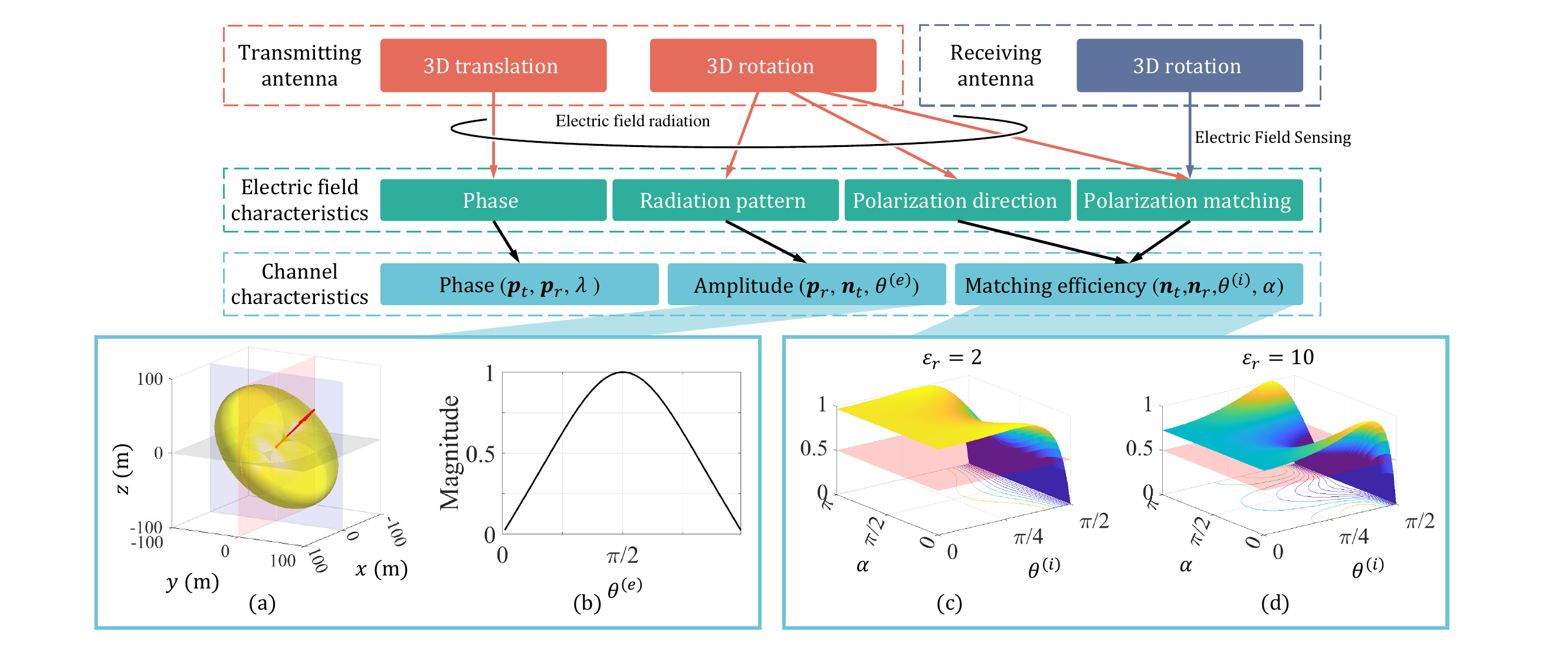}
    \caption{Relationships between the PAMA system, the electric field, and the PAMA channel gain under the far-field condition with a single LOS path. (a) the impact of rotating the transmitting antenna on the radiation pattern relative to the CCS; (b) variation in received signal strength as a function of the emission angle; (c), (d): changes in polarization matching efficiency based on the incident angle $\theta^{(i)}$ and the polarization matching angle $\alpha$.}
    \label{f:relationship}
\end{figure*}

Without loss of generality, this paper represents the received signal by the voltage induced at the receiver. In practice, the difference between the amplitude of the received signal in terms of voltage and current is simply a scaling factor: the resistance, which can be considered part of the antenna factor. Therefore, the received signal, i.e., the induced voltage, at the $\kappa$-th user from the $\ell$-th transmitting antenna can be written as
\begin{eqnarray*}
r_{\ell, \kappa} = V_0 =  \frac{\mathcal{E}_{\ell,\bm{p}_{r,\kappa}}}{A_F}\mathcal{M}\left(\theta_{\ell, \kappa}^{(i)}, \alpha_{\ell, \kappa}\right),
\end{eqnarray*}
where the constant $A_F$ is the antenna factor for converting field amplitude to voltage, as shown in \eqref{e:rt}.
\end{proof}

\begin{rem}[Comparison with 3GPP TR 38.901]
The polarization matching formulation in the PAMA model generalizes and refines the modeling approach adopted in 3GPP TR 38.901 \cite{3GPP5G}. In the 3GPP model, the polarization matching effect is represented by the inner product between the transmit and receive antenna field patterns, which assumes that the aligned polarization component is fully absorbed, while orthogonal components are completely rejected. This binary treatment fails to capture the fact that each polarization state can lead to partial energy absorption and reflection. In contrast, the PAMA model incorporates a physically grounded characterization of electromagnetic wave interaction at the antenna boundary. By introducing polarization-dependent reflection coefficients, i.e., $\Gamma_{\parallel, \kappa}$ and $\Gamma_{\perp, \kappa}$, derived from Maxwell's boundary conditions, the PAMA model enables a continuous and more realistic quantification of received energy as a function of both polarization alignment and incident angle. This distinction is particularly critical in systems with movable or misaligned antennas, where even partially aligned polarization can contribute to the received signal. As such, the PAMA model provides a more accurate and generalizable representation of polarization matching efficiency in dynamic wireless environments.
\end{rem}


\subsection{Polarization-aware channel gain}
Building upon Theorems \ref{thm:radiation} and \ref{thm:sensing}, this section characterizes the polarization-aware channel gain for movable transmitting and receiving antennas.

\begin{thm}[PAMA Channel Gain]\label{thm:channel_gain}
Consider the transmission from the $\ell$-th transmitting antenna to the $\kappa$-th receiving antenna. The polarization-aware channel gain is given by
\begin{eqnarray} \label{e:h}
&&\hspace{-1cm} h_{\kappa, \ell}\left(\bm{p}_{t, \ell}, \bm{n}_{t, \ell}, \bm{p}_{r, \kappa}, \bm{n}_{r, \kappa}\right) = \frac{r_{\ell, \kappa}}{s_\ell} \\
&&\hspace{-0.5cm} = G_{\kappa} \frac{\cos{\left( \frac{\pi}{2} \cos{\theta^{(e)}_{\ell, \kappa}} \right)}}{\sin{\theta^{(e)}_{\ell, \kappa}}} \mathcal{M}\left(\theta_{k}^{(i)}, \alpha_{\ell, \kappa}\right) e^{j \frac{2\pi}{\lambda}  \frac{\bm{p}_{r, \kappa} \cdot \bm{p}_{t,\ell}}{|\bm{p}_{r,\kappa}|} }, \notag
\end{eqnarray}
where $ G_{\kappa} \triangleq \frac{2 j c \mu_1}{A_F} \frac{e^{-j \frac{2\pi}{\lambda} |\bm{p}_{r,\kappa}|}}{4 \pi |\bm{p}_{r,\kappa}|}$.
\end{thm}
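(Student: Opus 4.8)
The plan is to obtain the channel gain purely by composing the two main results already in hand: Theorem~\ref{thm:radiation}, which gives the complex amplitude of the field radiated by the $\ell$-th transmitting antenna at an arbitrary observation point, and Theorem~\ref{thm:sensing}, which relates the received signal to that amplitude through the polarization matching efficiency. Since the channel gain is defined as $h_{\kappa,\ell} = r_{\ell,\kappa}/s_\ell$, the argument reduces to a substitution followed by cancellation of the transmitted symbol $s_\ell$; no new analytical machinery is required.

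First I would specialize the field amplitude $\mathcal{E}_{\ell,\bm{p}}$ of Theorem~\ref{thm:radiation} to the observation point $\bm{p} = \bm{p}_{r,\kappa}$, i.e.\ the location of the $\kappa$-th receiving antenna. This yields $\mathcal{E}_{\ell,\bm{p}_{r,\kappa}}$ in closed form, in which the generic emission angle $\theta^{(e)}_{\ell,\bm{p}}$ becomes the link-indexed angle $\theta^{(e)}_{\ell,\kappa}$ satisfying $\cos\theta^{(e)}_{\ell,\kappa} = (\bm{p}_{r,\kappa}\cdot\bm{n}_{t,\ell})/(|\bm{p}_{r,\kappa}||\bm{n}_{t,\ell}|)$, as recorded in Table~\ref{tab:notations}. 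The two phase terms $e^{-j\frac{2\pi}{\lambda}|\bm{p}_{r,\kappa}|}$ and $e^{j\frac{2\pi}{\lambda}(\bm{p}_{r,\kappa}\cdot\bm{p}_{t,\ell})/|\bm{p}_{r,\kappa}|}$ carry over unchanged, the former capturing propagation delay over $|\bm{p}_{r,\kappa}|$ and the latter the position-dependent phase induced by translating the transmitter to $\bm{p}_{t,\ell}$.

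Next I would insert this amplitude into the sensing relation $r_{\ell,\kappa} = (\mathcal{E}_{\ell,\bm{p}_{r,\kappa}}/A_F)\,\mathcal{M}(\theta^{(i)}_{\ell,\kappa},\alpha_{\ell,\kappa})$ of Theorem~\ref{thm:sensing} and divide through by $s_\ell$. Because $\mathcal{E}_{\ell,\bm{p}_{r,\kappa}}$ is linear in $s_\ell$, the symbol cancels, leaving a quantity that depends only on the link geometry and antenna orientations. Collecting the symbol-independent constants and the common propagation factor into $G_\kappa \triangleq \frac{2jc\mu_1}{A_F}\frac{e^{-j\frac{2\pi}{\lambda}|\bm{p}_{r,\kappa}|}}{4\pi|\bm{p}_{r,\kappa}|}$ then reproduces \eqref{e:h} exactly, with the radiation-pattern factor $\cos(\frac{\pi}{2}\cos\theta^{(e)}_{\ell,\kappa})/\sin\theta^{(e)}_{\ell,\kappa}$, the matching efficiency $\mathcal{M}(\theta^{(i)}_{\ell,\kappa},\alpha_{\ell,\kappa})$, and the residual phase $e^{j\frac{2\pi}{\lambda}(\bm{p}_{r,\kappa}\cdot\bm{p}_{t,\ell})/|\bm{p}_{r,\kappa}|}$ emerging as the three link-dependent factors.

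There is no genuinely hard analytical step; the statement is essentially a corollary of the two preceding theorems. The only points requiring care are bookkeeping ones: confirming that the far-field approximations already embedded in \eqref{e:El} remain appropriate at $\bm{p}=\bm{p}_{r,\kappa}$ (they do, since the standing far-field assumption applies to every transmit--receive pair), and verifying that the emission, incident, and polarization-matching angles evaluated at $\bm{p}_{r,\kappa}$ coincide with the link-indexed symbols $\theta^{(e)}_{\ell,\kappa}$, $\theta^{(i)}_{\ell,\kappa}$, and $\alpha_{\ell,\kappa}$ used in the statement. Once these identifications are made, the cancellation of $s_\ell$ and the grouping into $G_\kappa$ complete the proof.
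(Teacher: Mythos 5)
Your proposal is correct and matches the paper's treatment exactly: the paper states Theorem~\ref{thm:channel_gain} without a separate proof precisely because it follows, as you argue, by evaluating the field of Theorem~\ref{thm:radiation} at $\bm{p}=\bm{p}_{r,\kappa}$, substituting into the sensing relation of Theorem~\ref{thm:sensing}, cancelling $s_\ell$, and absorbing the symbol-independent factors into $G_\kappa$. Nothing further is needed.
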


Theorem \ref{thm:channel_gain} consolidates the core elements of the PAMA framework into a unified expression for the channel gain, explicitly capturing how antenna movements, including both translation and rotation, affect the received signal through phase variation, directional radiation patterns, and polarization matching efficiency.  In contrast to traditional fixed-antenna channel models, the PAMA model offers a physically grounded, continuous characterization of polarization interaction, derived from Maxwell's equations. This enables a more precise understanding of how antenna orientation and incidence geometry influence signal strength, particularly in dynamic or high-frequency scenarios. In the next section, we analyze how each component of the PAMA channel gain, namely phase, emission angle, and matching efficiency, responds to antenna movement, and how these effects collectively govern the system's communication performance.

\section{Impacts of Antenna Movements}
\label{SecIV}
This section analyzes the PAMA channel model and examines how each variable influences channel quality, providing insights into the key characteristics of PAMA and identifying the sources of its performance gains. Fig.~\ref{f:relationship} summarizes the relationships between antenna movement and their corresponding impacts on the channel.
It is essential to highlight that these properties are derived under the conditions of a far-field scenario and a dominant LOS path.
For clarity, the indices $\ell$ and $\kappa$ in the subscripts are omitted in this section.

\begin{prop}\label{prop:1}
In a single-antenna system, antenna translation only contributes a phase term to the PAMA channel gain, implying that its impact can be entirely captured through precoding.
\end{prop}

In the far-field LOS scenario, the electromagnetic waves can be approximated as plane waves, and the spatial fading characteristics are negligible. The relative orientation of the electric field vectors between the transmitter and receiver remains unchanged despite antenna translation. This means that antenna translation does not affect the amplitude or polarization alignment of the received signal -- only the phase changes due to the difference in path length, leading to Proposition \ref{prop:1}.

In practical systems, digital precoding is often easier to implement than mechanical antenna movements. Therefore, when studying MAs, it is essential to consider the unique advantages that MAs introduce beyond what precoding can achieve. Proposition \ref{prop:1} indicates that the benefits of antenna translation are more significant in near-field or rich multi-path environments, in which case antenna translation can impact the amplitude and spatial characteristics of the channel gain, offering advantages that precoding alone cannot replicate.

In the context of this paper, which focuses on far-field and mmWave dominant LOS paths, the unique gains that MAs can offer over precoding are primarily obtained through antenna rotation rather than translation.

\begin{prop}\label{prop:2}
The rotation of the transmitting antenna alters the amplitude of the channel gain by modifying both the radiation pattern of the induced electric field and the polarization matching efficiency at the receiver. 
\end{prop}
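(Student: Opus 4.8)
The plan is to start from the PAMA channel gain in \eqref{e:h}, isolate its magnitude, and then trace exactly which of its factors can depend on the transmit orientation $\bm{n}_{t,\ell}$. Taking moduli, the propagation phase $e^{j\frac{2\pi}{\lambda}\bm{p}_{r,\kappa}\cdot\bm{p}_{t,\ell}/|\bm{p}_{r,\kappa}|}$ has unit magnitude and the prefactor $G_\kappa$ depends only on the receiver position $\bm{p}_{r,\kappa}$, so neither involves $\bm{n}_{t,\ell}$. Hence
\[
|h_{\kappa,\ell}| = |G_\kappa|\,\left|\frac{\cos\!\left(\frac{\pi}{2}\cos\theta^{(e)}_{\ell,\kappa}\right)}{\sin\theta^{(e)}_{\ell,\kappa}}\right|\,\mathcal{M}\!\left(\theta^{(i)}_\kappa,\alpha_{\ell,\kappa}\right),
\]
and the only amplitude-bearing quantities a transmit rotation can touch are the radiation-pattern factor, through the emission angle $\theta^{(e)}_{\ell,\kappa}$, and the matching efficiency $\mathcal{M}$, through the matching angle $\alpha_{\ell,\kappa}$.

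Next I would model the rotation as replacing $\bm{n}_{t,\ell}$ by a rotated unit vector and push this change through the two channels of dependence separately. For the radiation pattern, the far-field emission-angle relation $\cos\theta^{(e)}_{\ell,\kappa}=\bm{p}_{r,\kappa}\cdot\bm{n}_{t,\ell}/(|\bm{p}_{r,\kappa}||\bm{n}_{t,\ell}|)$ from Table~\ref{tab:notations} shows that rotating the antenna alters $\theta^{(e)}_{\ell,\kappa}$ whenever the rotation changes the angle between $\bm{n}_{t,\ell}$ and the propagation direction $\bm{p}_{r,\kappa}$; by Corollary~\ref{cor:1} this moves the factor $\cos(\frac{\pi}{2}\cos\theta^{(e)})/\sin\theta^{(e)}$ along its single-peaked profile. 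For the polarization term I would invoke \eqref{e:El_dir}: the emitted polarization direction $\bm{n}_{\bm{\mathcal{E}}_{\ell,\bm{p}_{r,\kappa}}}$ is the normalized projection of $\bm{n}_{t,\ell}$ onto the plane orthogonal to $\bm{p}_{r,\kappa}$, so rotating $\bm{n}_{t,\ell}$ rotates this projected vector, thereby changing $\alpha_{\ell,\kappa}=\langle\bm{n}_{\bm{\mathcal{E}}_{\ell,\bm{p}_{r,\kappa}}},\bm{n}_{r,\kappa}\rangle$ and, via its $\cos^2\alpha$ and $\sin^2\alpha$ dependence in \eqref{e:rt}, the efficiency $\mathcal{M}$.

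To close the argument I would confirm that no other amplitude factor is affected: the incident angle obeys $\sin\theta^{(i)}_\kappa=\bm{p}_{r,\kappa}\cdot\bm{n}_{r,\kappa}/(|\bm{p}_{r,\kappa}||\bm{n}_{r,\kappa}|)$, depending only on the receiver orientation and position, so the reflection coefficients $\Gamma_{\parallel,\kappa},\Gamma_{\perp,\kappa}$ in \eqref{e:Gamma_para}--\eqref{e:Gamma_perp} are untouched, while the positional phase and $G_\kappa$ were already excluded. The main obstacle is not a hard computation but the bookkeeping: I must verify that the transmit rotation feeds into $\theta^{(e)}_{\ell,\kappa}$ and $\alpha_{\ell,\kappa}$ simultaneously and that these exhaust the affected terms. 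A point worth stating carefully is degeneracy — a rotation about the axis $\bm{p}_{r,\kappa}$ leaves $\theta^{(e)}_{\ell,\kappa}$ fixed while still rotating the polarization, whereas a rotation keeping $\bm{n}_{t,\ell}$ in the polarization plane can shift $\theta^{(e)}_{\ell,\kappa}$ with little change in $\alpha_{\ell,\kappa}$ — so the two mechanisms are genuinely distinct, which is precisely what makes the word ``both'' in the statement meaningful rather than redundant.
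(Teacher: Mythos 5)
Your proposal is correct and follows essentially the same route as the paper: both arguments trace the transmit rotation through the two amplitude-bearing factors in \eqref{e:h}, namely the radiation-pattern term via the emission angle $\theta^{(e)}_{\ell,\kappa}$ (invoking Corollary~\ref{cor:1}) and the matching efficiency $\mathcal{M}$ via the polarization direction \eqref{e:El_dir} and the matching angle $\alpha_{\ell,\kappa}$. Your additional bookkeeping (checking that $G_\kappa$, the phase term, $\theta^{(i)}_\kappa$, and the reflection coefficients are unaffected) and the degeneracy remark justifying the word ``both'' are sound refinements of the paper's more informal exposition, not a different method.
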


When the transmitting antenna rotates, it changes the emission angle $\theta^{(e)}$ of the electromagnetic wave. This alteration affects the antenna's radiation pattern -- the distribution of the electric field w.r.t. the CCS -- which in turn influences the electric field intensity received. As illustrated in Fig.~\ref{f:relationship} (a) and (b), different emission angles lead to variations in the radiation pattern, resulting in changes to the received signal's strength. According to Corollary~\ref{cor:1}, the magnitude or intensity of the electric field reaches its maximum when the emission angle is $\theta^{(e)} = \frac{\pi}{2}$. To achieve this, the transmitting antenna should be positioned within the plane that is perpendicular to the direction of wave propagation.

Furthermore, the rotation of the transmitting antenna also modifies its polarization orientation, thereby changing the polarization matching angle $\alpha$ between the transmitter and receiver, and hence the matching efficiency $\mathcal{M}\left(\theta ^{(i)}, \alpha \right)$ in the PAMA channel gain.

\begin{prop}\label{prop:3}
The rotation of the receiving antennas alters the amplitude of the channel gain by modifying the polarization matching efficiency.
\end{prop}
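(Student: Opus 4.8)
The plan is to isolate the dependence of the PAMA channel gain in \eqref{e:h} on the receiving antenna's orientation $\bm{n}_{r,\kappa}$, exploiting the fact that a pure rotation of the receiving antenna changes only its direction while leaving its center coordinate $\bm{p}_{r,\kappa}$ fixed. I would proceed by scanning each factor of \eqref{e:h} and classifying it as either invariant to, or dependent on, $\bm{n}_{r,\kappa}$, then taking the magnitude to read off the amplitude.

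First I would dispatch the factors that are manifestly independent of the receiver orientation. The prefactor $G_\kappa = \frac{2jc\mu_1}{A_F}\frac{e^{-j\frac{2\pi}{\lambda}|\bm{p}_{r,\kappa}|}}{4\pi|\bm{p}_{r,\kappa}|}$ involves only $\bm{p}_{r,\kappa}$ and hence is untouched by rotation, and the phase term $e^{j\frac{2\pi}{\lambda}\bm{p}_{r,\kappa}\cdot\bm{p}_{t,\ell}/|\bm{p}_{r,\kappa}|}$ likewise depends only on the two antenna positions. The delicate factor is the radiation pattern $\cos(\frac{\pi}{2}\cos\theta^{(e)}_{\ell,\kappa})/\sin\theta^{(e)}_{\ell,\kappa}$: I must verify that the emission angle $\theta^{(e)}_{\ell,\kappa}$ is a \emph{transmitter-side} quantity, being by its definition the angle between $\bm{n}_{t,\ell}$ and the propagation direction $\bm{p}_{r,\kappa}-\bm{p}_{t,\ell}$, neither of which references $\bm{n}_{r,\kappa}$. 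This is the step I expect to require the most care, precisely because it is the point of contrast with Proposition \ref{prop:2}: under transmitter rotation $\bm{n}_{t,\ell}$ varies and the emission angle shifts, whereas under receiver rotation it stays fixed, so the radiation-pattern factor plays no role here.

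Next I would confirm that the sole surviving factor, the polarization matching efficiency $\mathcal{M}(\theta^{(i)}_{\ell,\kappa},\alpha_{\ell,\kappa})$ from \eqref{e:rt}, genuinely depends on $\bm{n}_{r,\kappa}$ through both of its arguments. By Table \ref{tab:notations} the incident angle satisfies $\sin\theta^{(i)}_{\ell,\kappa}=\bm{p}_{r,\kappa}\cdot\bm{n}_{r,\kappa}/(|\bm{p}_{r,\kappa}||\bm{n}_{r,\kappa}|)$, so rotating the receiver shifts $\theta^{(i)}_{\ell,\kappa}$ and hence the reflection coefficients $\Gamma_{\parallel,\kappa},\Gamma_{\perp,\kappa}$ in \eqref{e:Gamma_para}--\eqref{e:Gamma_perp}; and the polarization matching angle $\alpha_{\ell,\kappa}=\langle\bm{n}_{\bm{\mathcal{E}}_{\ell,\bm{p}_{r,\kappa}}},\bm{n}_{r,\kappa}\rangle$ is by definition the angle between the incident polarization and $\bm{n}_{r,\kappa}$, so it too varies. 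Both effects are funneled into $\mathcal{M}$.

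Finally, passing to the magnitude $|h_{\kappa,\ell}|$, the unit-modulus phase term drops out, $|G_\kappa|$ and the radiation-pattern factor are constant under receiver rotation, and the real nonnegative factor $\mathcal{M}(\theta^{(i)}_{\ell,\kappa},\alpha_{\ell,\kappa})$ carries the entire orientation dependence. This establishes that receiver rotation alters the amplitude exclusively through the polarization matching efficiency, proving Proposition \ref{prop:3}.
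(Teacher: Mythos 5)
Your proposal is correct and takes essentially the same route as the paper: the paper's justification likewise traces receiver rotation into the two arguments of $\mathcal{M}$ — the incident angle $\theta^{(i)}$, which shifts the reflection coefficients $\Gamma_{\parallel}$ and $\Gamma_{\perp}$ of \eqref{e:Gamma_para}--\eqref{e:Gamma_perp}, and the polarization matching angle $\alpha$. Your explicit factor-by-factor scan of \eqref{e:h}, confirming that $G_\kappa$, the phase term, and the emission-angle (radiation-pattern) factor are all invariant under receiver rotation, makes the exclusivity of the $\mathcal{M}$ channel more rigorous than the paper's informal discussion, but it is the same underlying argument.
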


Rotating the receiving antennas affects both the incident angle $\theta^{(i)}$ of the incoming electromagnetic wave and the polarization matching angle $\alpha$, both of which impacts the polarization matching efficiency.
\begin{enumerate}
    \item The change of the incident angle $\theta^{(i)}$ alters the energy reflection coefficients $\Gamma_{\parallel}$ and $\Gamma_{\perp}$, as defined in \eqref{e:Gamma_para} and \eqref{e:Gamma_perp}. These reflection coefficients characterize the proportion of electromagnetic energy that enters the antenna. A smaller reflection coefficient indicates that more energy is transmitted into the antenna, enhancing the received signal strength.
    \item The orientation of the receiving antenna relative to the electric field vector of the incoming wave determines the polarization matching angle $\alpha$, which describes the ratio of the horizontal to vertical components of the electromagnetic wave's polarization relative to the antenna's orientation. A better alignment (i.e., a smaller mismatch angle) leads to higher polarization matching efficiency.
\end{enumerate}

Figs.~\ref{f:relationship} (c) and (d) illustrate how the polarization matching efficiency varies with the incident angle $\theta^{(i)}$ and the polarization matching angle $\alpha$ for antennas with different relative dielectric constants $\varepsilon_r$. As can be seen, without proper optimization of the receiving antenna's rotation, the matching efficiency can decrease significantly, potentially dropping to zero. In such cases, the antenna fails to effectively receive the incoming signal, resulting in a loss of information. Therefore, careful control of the receiving antenna's orientation is essential for maximizing the channel gain and ensuring reliable communication in the PAMA framework.

\begin{figure}[!t]
\centering
    \begin{subfigure}[!t]{0.55\linewidth}
        \centering
        \includegraphics[width=1\linewidth]{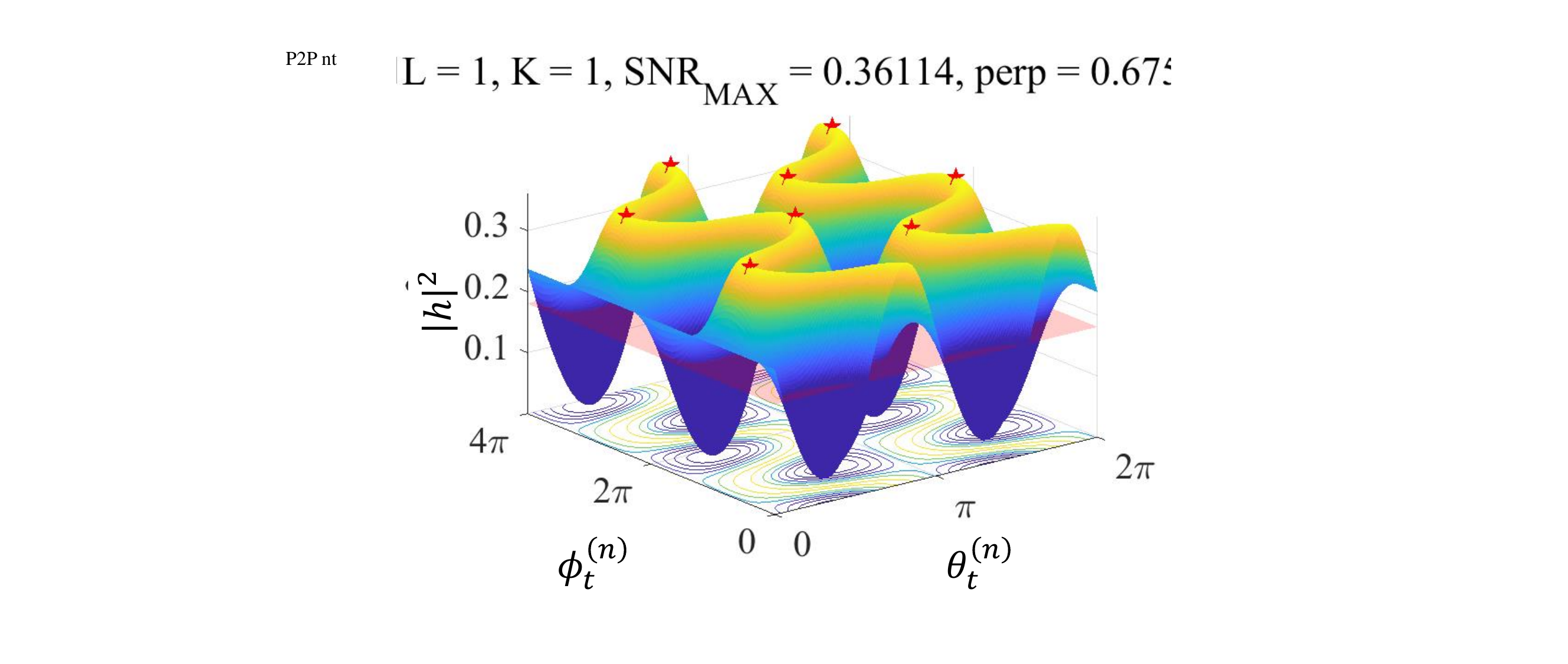}
        \caption{$|h|^2$}\label{f:P2P_nt_gain}
    \end{subfigure}
    \begin{subfigure}[!t]{0.43\linewidth}
        \includegraphics[width=1\linewidth]{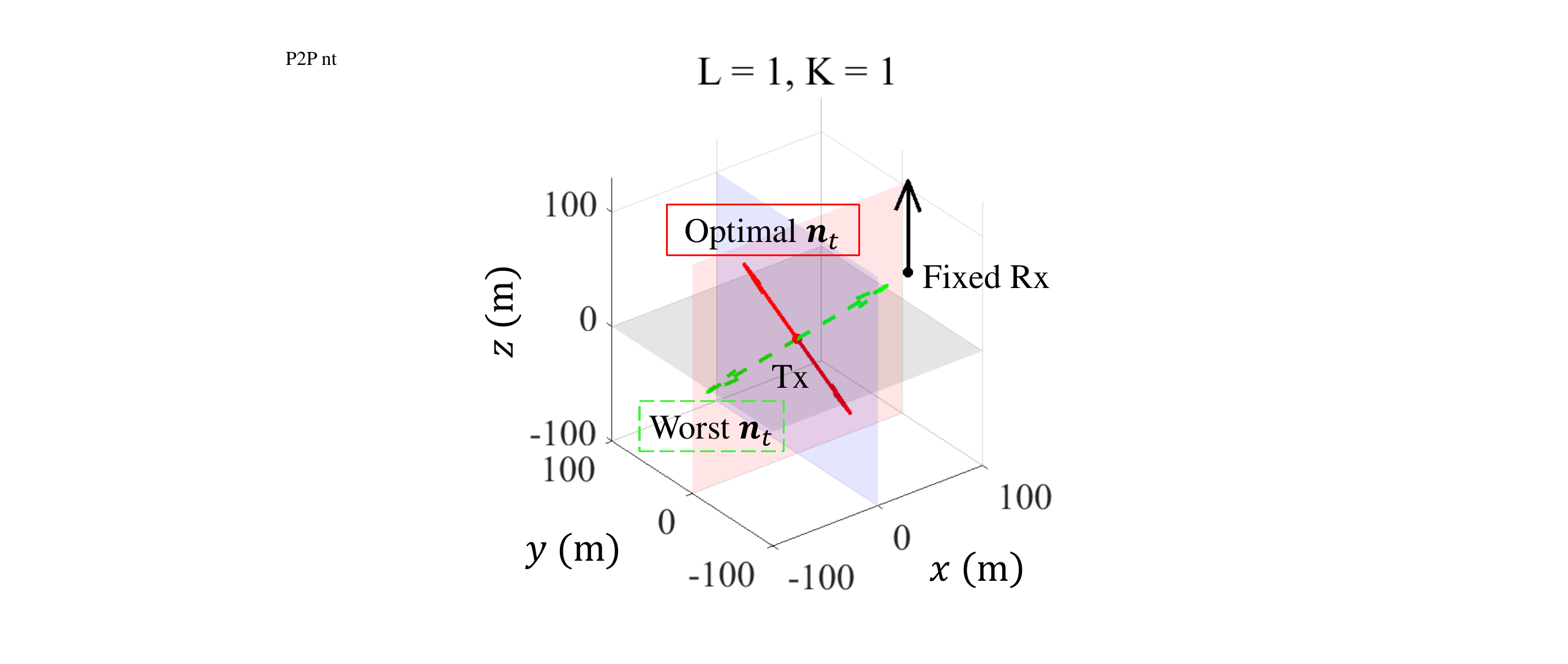}
        \caption{Geometry relationship.}\label{f:P2P_nt_geo}
    \end{subfigure}
\caption{Channel amplitude variations and the orientations of the transmitting antenna in Scenario 1, where $\bm{p}_{t,1} = [0, 0, 0]$, $\bm{p}_{r,1} = [75, -40, 50]$, and $\bm{n}_{r,1} = [0, 0, 1]$. }
\label{f:P2P_nt}
\end{figure}

\begin{figure}[!t]
\centering
    \begin{subfigure}[!t]{0.55\linewidth}
        \centering
        \includegraphics[width=1\linewidth]{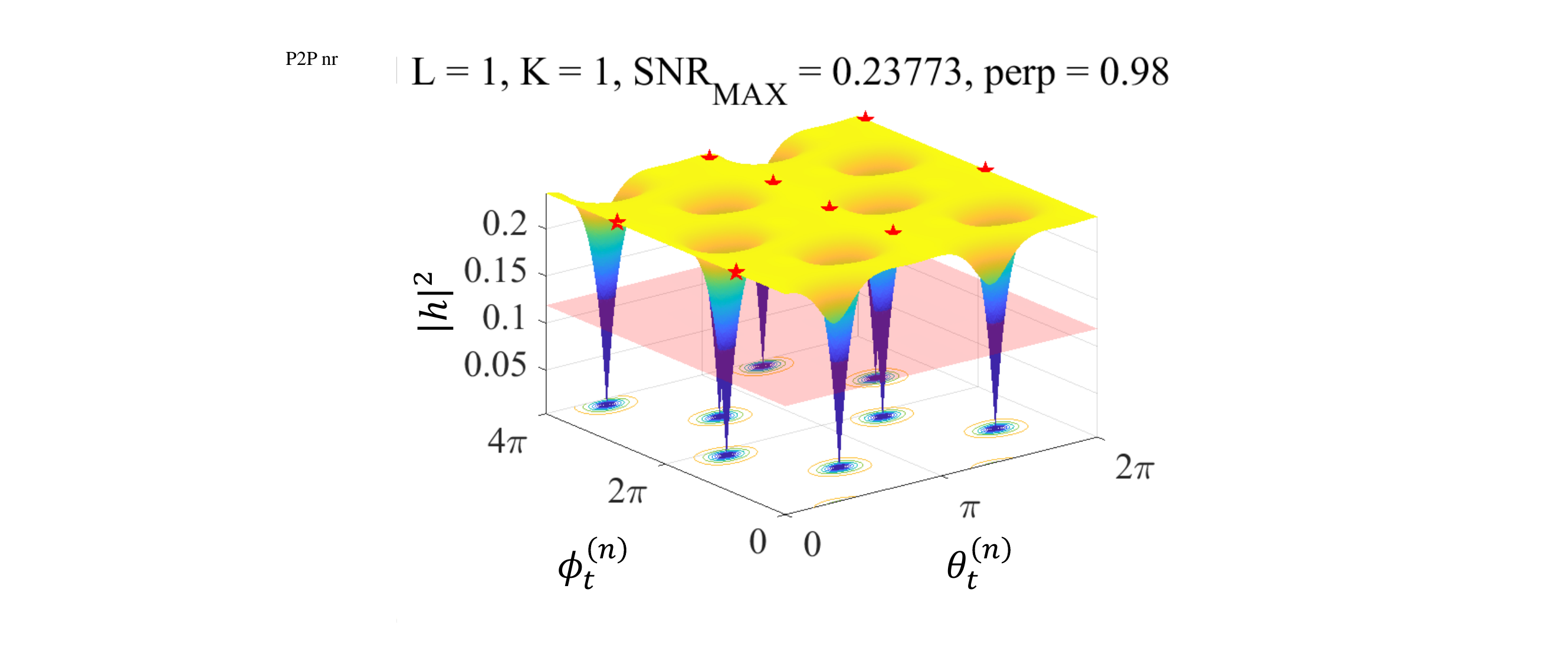}
        \caption{$|h|^2$}\label{f:P2P_nr_gain}
    \end{subfigure}
    \begin{subfigure}[!t]{0.43\linewidth}
        \includegraphics[width=1\linewidth]{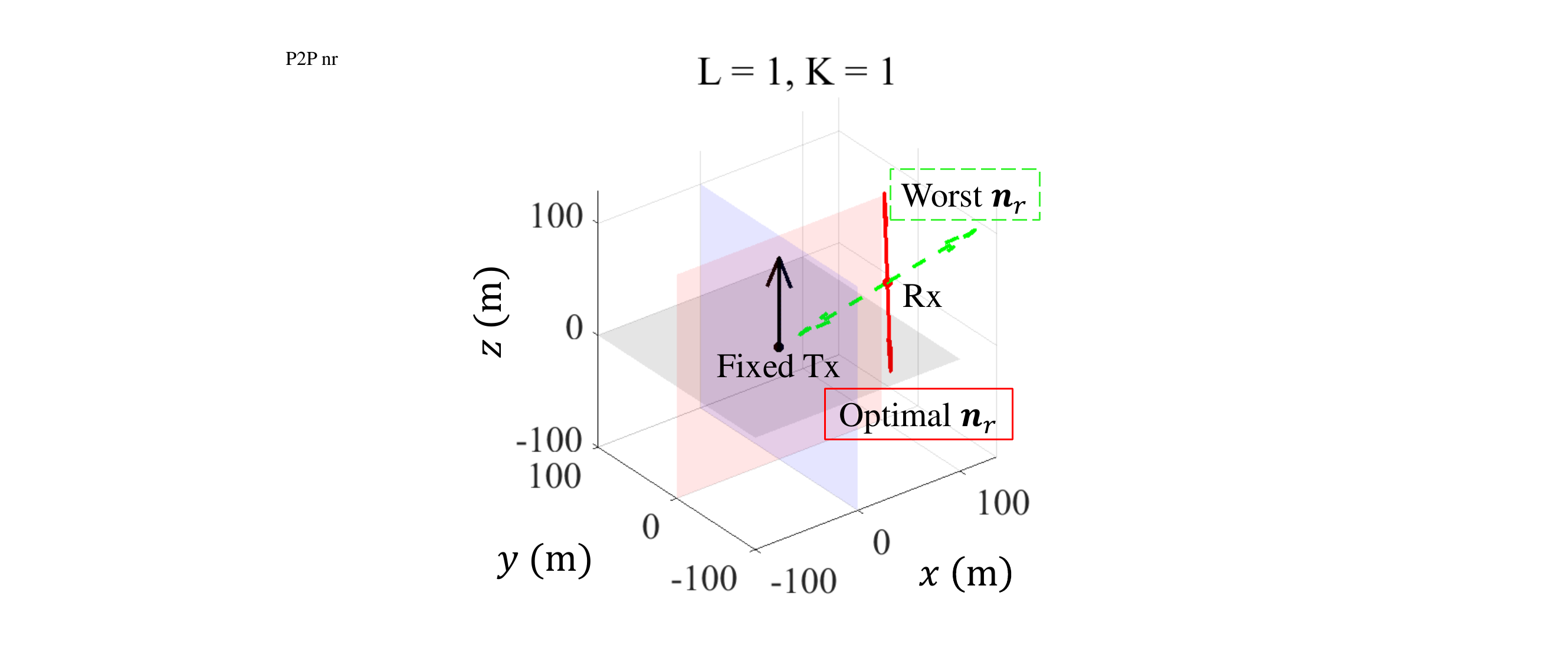}
        \caption{Geometry relationship.}\label{f:P2P_nr_geo}
    \end{subfigure}
\caption{Channel amplitude variations and the orientations of the receiving antenna in Scenario 2, where $\bm{p}_{t,1} = [0, 0, 0]$, $\bm{n}_{t,1} = [0, 0, 1]$, and $\bm{p}_{r,1} = [75, -40, 50]$.}
\label{f:P2P_nr}
\end{figure}

Propositions \ref{prop:2} and \ref{prop:3} indicate that the amplitude of the channel gain is influenced by the orientations of both the transmitting and receiving antennas. To illustrate the extent to which antenna rotations affect the channel amplitude, we simulate two scenarios:
\begin{itemize}
    \item \textbf{Scenario 1}: Rotatable transmitting antenna with fixed receiving antenna;
    \item \textbf{Scenario 2}: Fixed transmitting antenna with rotatable receiving antenna.
\end{itemize}

In the simulations, the positions of the antennas are set as $\bm{p}_{t} = [0, 0, 0]$, $\bm{p}_{r} = [75, -40, 50]$. The fixed antennas are oriented vertically upward, with a direction vector of $[0, 0, 1]$. For visualization purposes, we vary the polar and azimuthal angles of the rotatable antennas over two periods, even though their actual values lie within a single period. 
The channel amplitude variations for the two scenarios are presented in Figs.~\ref{f:P2P_nt}(a) and \ref{f:P2P_nr}(a), respectively. 

For scenario 1 with a fixed receiving antenna, Fig.~\ref{f:P2P_nt}(a) shows that a randomly oriented transmitting antenna has only a $67.5\%$ probability of delivering at least half of the maximum possible energy to the user. Each period can yield two sets of maximum values due to the symmetry of the dipole antennas, where the collinear reverse sets of polar and azimuthal angles provide equivalent performance.

For scenario 2 with a fixed transmitting antenna, Fig.~\ref{f:P2P_nr}(a) shows that a randomly oriented receiving antenna has a $99.0\%$ probability of capturing at least half of the maximum energy. However, the remaining $1.0\%$ of orientations present a significant risk of receiving virtually no energy. This outcome highlights the critical importance of antenna orientation in ensuring reliable communication.

Figs.~\ref{f:P2P_nt}(b) and \ref{f:P2P_nr}(b) depict the positions and orientations of the transmitting and receiving antennas for both scenarios. In these illustrations, fixed Antennas are shown with vertical black arrows; the optimal orientations of the rotatable antennas are shown with solid red arrows, indicating the directions that maximize the channel gain; the worst-case orientations of the rotatable antennas are shown with dashed green arrows, representing the directions that minimize the channel gain.

\begin{prop} \label{prop:4}
In a single-input single-output (SISO) PAMA system:
\begin{enumerate}
    \item Optimal orientation: The best channel gain is achieved when the transmitting and receiving antennas are oriented such that they, along with the propagation path, lie within the same plane.
    \item Worst orientation: The poorest channel gain occurs when one antenna points directly at the other.
\end{enumerate}
\end{prop}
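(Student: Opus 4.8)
The plan is to work directly with the magnitude of the SISO channel gain from Theorem~\ref{thm:channel_gain}. Dropping the constant prefactor $|G|$ and the unit-modulus phase term (both orientation-independent), we have $|h| \propto F(\theta^{(e)})\,\mathcal{M}(\theta^{(i)},\alpha)$, where $F(\theta^{(e)}) \triangleq \cos(\tfrac{\pi}{2}\cos\theta^{(e)})/\sin\theta^{(e)}$ is the dipole radiation term and $\mathcal{M}$ is the polarization-matching efficiency of Theorem~\ref{thm:sensing}. I would fix the geometry, place the transmitter at the origin so that the propagation direction is the unit vector $\hat{\bm{p}} = \bm{p}_r/|\bm{p}_r|$, and treat $\bm{n}_t$ and $\bm{n}_r$ as the only free variables.

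I would dispose of the worst-case claim (2) first, since it follows from the zeros of the two factors. If $\bm{n}_t \parallel \hat{\bm{p}}$ (the transmitter points at the receiver) then $\theta^{(e)} = 0$, and by Corollary~\ref{cor:1} the radiation term $F$ vanishes, so $|h| = 0$. If instead $\bm{n}_r \parallel \hat{\bm{p}}$ (the receiver points at the transmitter) then $\bm{n}_r\cdot\hat{\bm{p}} = \pm 1$, i.e.\ grazing incidence $\theta^{(i)} = \pi/2$; substituting $\cos\theta^{(i)}=0$ into \eqref{e:Gamma_para}--\eqref{e:Gamma_perp} gives $|\Gamma_\parallel| = |\Gamma_\perp| = 1$, whence $\mathcal{M} = 0$ and again $|h| = 0$. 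Since $|h|\ge 0$, both configurations attain the global minimum.

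For the optimal-orientation claim (1), the first key observation is that the two factors decouple. Writing $\cos\theta^{(e)} = \bm{n}_t\cdot\hat{\bm{p}}$, the radiation term $F$ depends on $\bm{n}_t$ only through its polar angle relative to $\hat{\bm{p}}$, whereas the emitted polarization $\bm{n}_{\bm{\mathcal{E}}}$ in \eqref{e:El_dir} is the normalized projection of $\bm{n}_t$ onto the plane $\perp \hat{\bm{p}}$ and is therefore invariant to that polar angle. Hence $F$ may be maximized independently, and by Corollary~\ref{cor:1} this forces $\theta^{(e)} = \pi/2$, i.e.\ $\bm{n}_t = \bm{n}_{\bm{\mathcal{E}}} \perp \hat{\bm{p}}$, which already places $\bm{n}_t$ in the plane spanned by $\hat{\bm{p}}$ and the polarization. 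It then remains to choose $\bm{n}_r$ to maximize $\mathcal{M}(\theta^{(i)},\alpha)$. I would resolve $\bm{n}_r$ into its components along and perpendicular to $\hat{\bm{p}}$, parametrizing it by the incident angle $\theta^{(i)}$ (with $\sin\theta^{(i)} = \bm{n}_r\cdot\hat{\bm{p}}$) and an azimuth $\beta$ measured from $\bm{n}_{\bm{\mathcal{E}}}$ in the plane $\perp\hat{\bm{p}}$; a short computation yields the geometric identity $\cos\alpha = \cos\theta^{(i)}\cos\beta$. Coplanarity of $\bm{n}_t$, $\bm{n}_r$, and $\hat{\bm{p}}$ is then precisely the condition $\beta\in\{0,\pi\}$.

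The analytical crux---and the step I expect to be the main obstacle---is to show that for fixed $\theta^{(i)}$ the efficiency $\mathcal{M}$ is maximized at $\beta\in\{0,\pi\}$. Substituting $\sin^2\alpha = 1-\cos^2\alpha$ into $\mathcal{M}^2 = 1 - |\Gamma_\parallel|^2\cos^2\alpha - |\Gamma_\perp|^2\sin^2\alpha$ gives $\mathcal{M}^2 = 1 - |\Gamma_\perp|^2 + (|\Gamma_\perp|^2 - |\Gamma_\parallel|^2)\cos^2\alpha$, so $\mathcal{M}$ increases with $\cos^2\alpha = \cos^2\theta^{(i)}\cos^2\beta$ exactly when $|\Gamma_\perp| \ge |\Gamma_\parallel|$. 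I would establish this Fresnel inequality by writing $u = \cos\theta^{(i)}$ and $w = \sqrt{\varepsilon_r - 1 + u^2}$ and reducing $|\Gamma_\perp|^2 - |\Gamma_\parallel|^2 \ge 0$ (after clearing the positive denominators) to the sign of $w^2 - \varepsilon_r u^2 = (\varepsilon_r - 1)\sin^2\theta^{(i)} \ge 0$, which holds since $\varepsilon_r > 1$, with equality only at $\theta^{(i)}\in\{0,\pi/2\}$. Consequently $\cos^2\beta = 1$ maximizes $\mathcal{M}$ for every non-degenerate $\theta^{(i)}$, so the optimal $\bm{n}_r$ lies in the plane of $\hat{\bm{p}}$ and $\bm{n}_{\bm{\mathcal{E}}} = \bm{n}_t$; together with $\theta^{(e)}=\pi/2$ this places $\bm{n}_t$, $\bm{n}_r$, and the propagation path in a common plane, proving (1). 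The degenerate incident angles cause no trouble, since $\theta^{(i)}=\pi/2$ yields $\mathcal{M}=0$ and at $\theta^{(i)}=0$ every $\beta$---in particular a coplanar one---is optimal.
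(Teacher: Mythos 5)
Your proof is correct and, in essence, follows the same route as the paper: the worst case comes from the zeros of the two factors (emission angle $\theta^{(e)}=0$ annihilating the dipole pattern; grazing incidence $\theta^{(i)}=\pi/2$ forcing $|\Gamma_\parallel|=|\Gamma_\perp|=1$ and $\alpha=\pi/2$), and the optimal case comes from showing that, for a fixed incident angle, coplanarity maximizes the polarization matching efficiency. The difference is one of rigor rather than strategy. The paper's argument is heuristic: it decomposes the incident field along $\bm{n}_r$, $\bm{n}_r^\perp$, $\bm{n}_3$, states the geometric inequality $\alpha\geq\theta^{(i)}$, and asserts that maintaining $\alpha=\theta^{(i)}$ (no wasted component along $\bm{n}_3$) is optimal, while leaving unquantified the trade-off it explicitly concedes between shrinking $\alpha$ and avoiding near-normal incidence. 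Your version supplies exactly what is missing: the identity $\cos\alpha=\cos\theta^{(i)}\cos\beta$ is the sharp form of the paper's inequality, with equality precisely at coplanarity $\beta\in\{0,\pi\}$; rewriting $\mathcal{M}^2=1-|\Gamma_\perp|^2+\left(|\Gamma_\perp|^2-|\Gamma_\parallel|^2\right)\cos^2\alpha$ reduces the fixed-$\theta^{(i)}$ step to a one-line monotonicity statement in $\cos^2\alpha$; and you actually prove the required Fresnel ordering $|\Gamma_\perp|\geq|\Gamma_\parallel|$ from $(\varepsilon_r-1)\sin^2\theta^{(i)}\geq 0$, which the paper only asserts---and in fact writes with the direction reversed (``$\Gamma_\parallel\geq\Gamma_\perp$''), although its subsequent text uses the correct sense. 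You also make explicit the decoupling that lets the radiation factor $F(\theta^{(e)})$ be maximized independently of the polarization geometry, and you dispose of the degenerate incident angles, both of which the paper leaves implicit. In short: same approach, but your version is the one that actually constitutes a proof.
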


When the electromagnetic field reaches the receiving antenna, it can be decomposed into three orthogonal components along: the direction of the receiving antenna $\bm{n}_r$; the normal direction of the incident signal $\bm{n}_{r}^{\perp}$; a direction perpendicular to both $\bm{n}_r$ and $\bm{n}_{r}^{\perp}$, denoted by $\bm{n}_3$.
This decomposition is illustrated in Fig.~\ref{f:nE}. 
The component along $\bm{n}_r$ represents the horizontal polarization component, while the components along $\bm{n}_{r}^{\perp}$ and $\bm{n}_3$ represent the vertical polarization components.

\begin{figure}[!t]
    \centering
    \includegraphics[width=.8\linewidth]{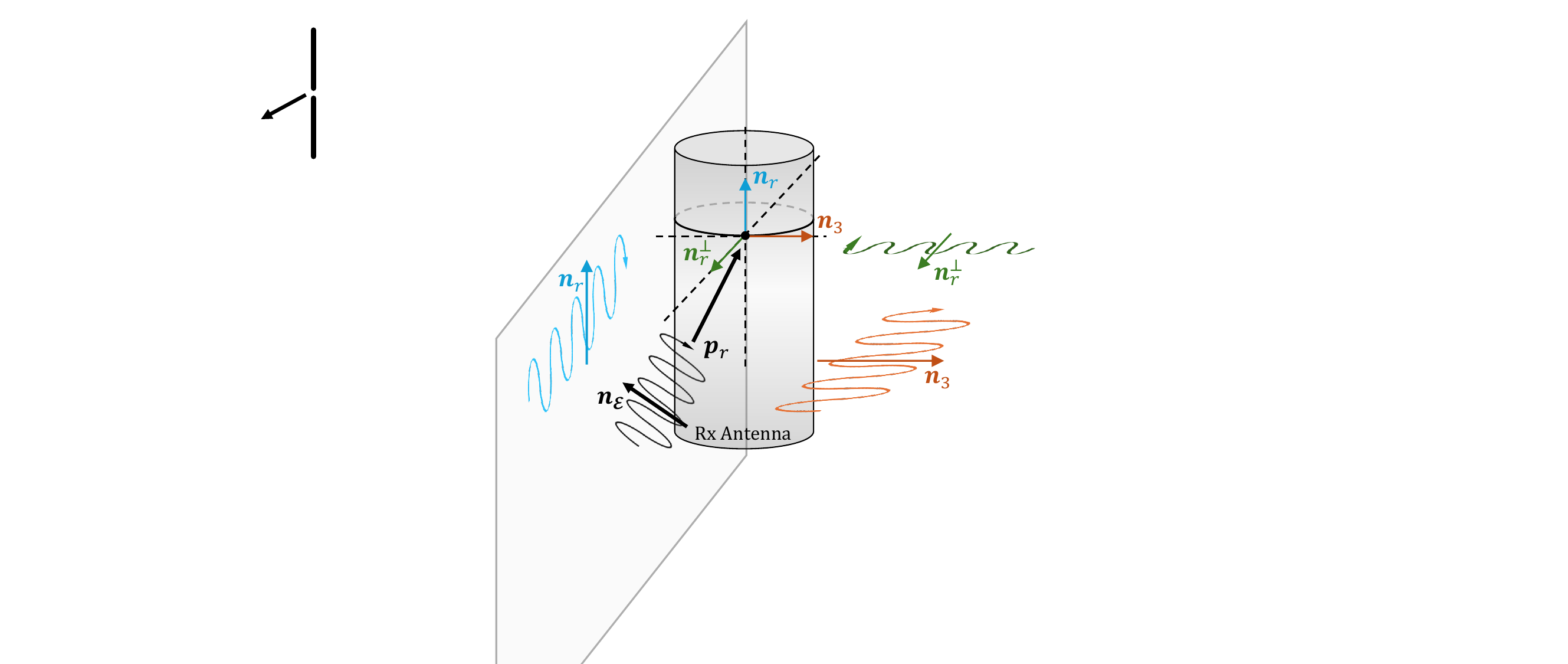}
    \caption{The electromagnetic field at the receiving antenna can be decomposed into three orthogonal components: along the direction of the receiving antenna $\bm{n}_r$, the normal direction of the incident signal $\bm{n}_{r}^{\perp}$, and a third direction $\bm{n}_3$ that is perpendicular to both $\bm{n}_r$ and $\bm{n}_{r}^{\perp}$.}
    \label{f:nE}
\end{figure}

According to \eqref{e:Gamma_para} and \eqref{e:Gamma_perp}, we have $\Gamma_{\parallel}\geq\Gamma_{\perp}$, meaning that the horizontal polarization component has a smaller reflection coefficient, and thus a better matching efficiency, than the vertical component. Therefore, increasing the component along $\bm{n}_r$ by adjusting antenna orientations enhances the channel gain.
However, achieving complete horizontal polarization requires the polarization matching angle $\alpha$ to be zero, which also corresponds to an incident angle $\theta^{(i)}$ of zero. Under this condition, the reflection coefficient $\Gamma_{\parallel}$ increases, reducing the matching efficiency. Therefore, simply minimizing $\alpha$ is suboptimal. Instead, we need to balance the horizontal and vertical polarization components to maximize the overall matching efficiency.

Since the polarization direction $\bm{n}_{\bm{\mathcal{E}}}$ is perpendicular to the propagation direction, and $\bm{p}_r$ is perpendicular to the antenna direction, the polarization matching angle satisfies  $\alpha \allowbreak =\langle \bm{n}_{\bm{\mathcal{E}}},\allowbreak \bm{n}_{r}\rangle \allowbreak \geq \theta^{(i)} =\allowbreak  \langle \bm{n}_{r}^{\perp},\allowbreak  \bm{p}_r\rangle$.
To increase the incident angle and reduce the reflection coefficient without introducing additional vertical polarization components along $\bm{n}_3$, it is optimal to maintain $\alpha = \theta^{(i)}$. For linearly polarized antennas, this configuration implies that the transmitting antenna, the receiving antenna, and the propagation path are all aligned within the same plane, confirming Proposition~\ref{prop:4}. It is important to note that lying in the same plane does not necessarily mean that the transmitting and receiving antennas are parallel, as depicted in Figs.~\ref{f:P2P_nt}(a) and \ref{f:P2P_nr}(a).


Figs.~\ref{f:P2P_nt}(b) and \ref{f:P2P_nr}(b) demonstrate that the worst-case orientation consistently occurs when one antenna is directed toward the other. This phenomenon is explained by:
\begin{itemize}
    \item For the transmitting antenna in Figs.~\ref{f:P2P_nt}(b), when the emission angle $\theta^{(e)}=0$, meaning the transmitting antenna points directly at the receiver, the electric field intensity in that direction is effectively zero due to the dipole antenna's radiation pattern.
    \item For the receiving antenna in Figs. \ref{f:P2P_nr}(b), when the incident angle $\theta^{(i)} = \frac{\pi}{2}$, the reflection coefficients $\Gamma_{\parallel, 1}=\Gamma_{\perp, 1}=1$, indicating total reflection with no energy entering the antenna. Additionally, the polarization matching angle $\alpha = \frac{\pi}{2}$, indicating complete polarization mismatch and resulting in zero capturable energy.
\end{itemize}





\section{Optimization with PAMA}
\label{SecV}
By incorporating polarization, we have redefined movable antennas and established more refined, polarization-aware channel models. This enhancement serves as a foundation for further optimizations, specifically tailored to harness the dynamic adjustments of antenna orientations and positions to maximize system performance. In this paper, we focus on the fundamental sum-rate optimization \cite{tse2005fundamentals} to reveal the potential of PAMA.  

\subsection{Sum-rate optimization}\label{sec:optimization}
We consider a multi-user multiple input single output (MU-MISO) configuration where the BS serves $K$ users simultaneously, $K\leq L$, using transmit beamforming.
Let the source data symbols intended for the users be $ \bm{x} = [x_1, x_2, \dots, x_K]^\top $, where the symbol power is normalized such that $ \mathbb{E}[|x_\kappa|^2] = 1 $, $\kappa=1,2,\dots,K$. The beamforming vector for the $\kappa$-th user is denoted by $ \bm{w}_\kappa \in \mathbb{C}^{L \times 1} $, with individual elements $ w_{\ell,\kappa} $. A power scaling matrix, defined as $ \bm{P} = \text{diag}\{P_{1}, P_{2}, \dots, P_K \} $, is employed to control the power distribution among the users. After beamforming, the signal transmitted from the BS antennas, denoted by $ \bm{s}\triangleq [s_1, s_2, \dots, s_L]^\top $, is written as
\begin{eqnarray} \label{e:s}
s_\ell = \sum_{\kappa=1}^{K} \sqrt{P_{ \kappa}} w_{\ell,\kappa} x_\kappa,
\end{eqnarray}
subject to the constraint $ \sum_{\kappa=1}^{K}  P_{ \kappa} = P$, where $P$ denotes the total power constraint.

The symbols received by the users, denoted by $\bm{y} \triangleq [y_1, y_2, \dots, y_K]^\top$, are given by
\begin{eqnarray} \label{e:transmission}
\bm{y} = \bm{H} \bm{W} \bm{P} \bm{x} + \bm{z}, 
\end{eqnarray}
where $\bm{H} = [\bm{h}_1^\top, \bm{h}_2^\top, \dots, \bm{h}_K^{\top}]^\top$ is the channel matrix with the elements defined by \eqref{e:h}; $\bm{W}\triangleq[\bm{w}_1,\bm{w}_2,...,\bm{w}_K]$ is the precoding matrix; $\bm{z} = [z_1, z_2, \dots, z_K]^\top$ is the additive white Gaussian noise (AWGN) vector, where $z_k\sim\mathcal{CN}(0,\sigma^2_{\kappa})$.

At each user  $\kappa$, the SINR is defined as
\begin{eqnarray} \label{e:SINR_k}
&&\hspace{-1cm} \gamma_\kappa = \frac{P_\kappa \left|\bm{h}_\kappa \bm{w}_\kappa   \right|^2}{ \sigma_\kappa^2 + \sum_{j = 1, j\neq \kappa}^K{\left(P_j \big|\bm{h}_\kappa \bm{w}_j \big|^2 \right)} },
\end{eqnarray}
and the achievable rate is
\begin{eqnarray} \label{e:Rk}
R_\kappa = \frac{1}{2} \log_2{\left( 1 + \gamma_\kappa \right)}.
\end{eqnarray}

The average achievable rate across all users is
\begin{equation} \label{e:R}
R = \frac{1}{K} \sum\limits_{k = 1}^K {R_\kappa}
\triangleq \frac{1}{2} \log_2{\left( 1 + \gamma_{\text{total}} \right)} ,
\end{equation}
with the equivalent total SINR for the system:
\begin{equation}\label{e:SNR}
\gamma_{\text{total}} = \left[ \prod_{\kappa = 1}^{K} \left( 1 + \gamma_\kappa \right) \right]^{\frac{1}{K}} - 1.
\end{equation}

Our goal is to maximize the equivalent total SINR of the system by jointly optimizing the positions $\bm{p}_{t, \ell}$ and orientations $\bm{n}_{t, \ell}$ of the transmitting antennas, the orientations $\bm{n}_{r, \kappa}$ of the receiving antennas, beamforming matrix $\bm{W}$, and the power scaling matrix $\bm{P}$. The optimization problem is formulated as
\begin{subequations}
\begin{align}
\text {(P1):}\quad&\max_{\bm{W}, \bm{P}, \{\bm{n}_{t, \ell}\}, \{\bm{p}_{t, \ell}\}, \{\bm{n}_{r, \kappa}\} } \gamma_{\text{total}},\label{e:P1} \\ 
{\text {s.t.}~} 
    & \sum_{\kappa=1}^{K}  P_{\kappa} = P \label{e:P1power}\\
    & \left| \bm{p}_{t, \ell} - \bm{p}_{t, i} \right| \leq \frac{\lambda}{2}, \quad \forall 1 \leq i \leq L, i \neq \ell, \label{e:P1ref}\\
    & \bm{p}_{t, \ell} \in \mathcal{C}_t. \label{e:P1region} 
\end{align}
\end{subequations} 
In constraint \eqref{e:P1ref}, a minimum separation of $\frac{\lambda}{2}$ between any two antennas is imposed  to prevent collisions. The term $\mathcal{C}_t$ in \eqref{e:P1region} represents the movable region of the transmitting antennas.

To solve (P1), we employ zero-forcing (ZF) beamforming, which can fully eliminate the inter-channel interferences \cite{tervo2015optimal}. The ZF beamformer is given by 
\begin{equation*} \label{e:w_prime}
\bm{W}^\prime 
\triangleq [\bm{w}_1^\prime, \bm{w}_2^\prime, \dots, \bm{w}_K^\prime] 
= \bm{H}^H \left( \bm{H} \bm{H}^H \right)^{-1},
\end{equation*}
\begin{equation} \label{e:w}
\bm{W} 
= \left[\frac{\bm{w}_1^\prime}{|\bm{w}_1^\prime|}, \frac{\bm{w}_2^\prime}{|\bm{w}_2^\prime|}, \dots, \frac{\bm{w}_K^\prime}{|\bm{w}_K^\prime|} \right] .
\end{equation}

Since ZF beamforming completely removes interferences, the power allocation matrix can be easily obtained by water-filling \cite{yu2004iterative}, yielding
\begin{eqnarray} \label{e:P}
P_{\kappa} = \max{\left\{ \gamma_{th} - \frac{1}{\gamma_\kappa}, ~0 \right\}},
\end{eqnarray}
where $\gamma_{th}$ is the water level that can be determined iteratively to satisfy the total power constraint.

\begin{figure}[!tb]
    \centering
    \includegraphics[width=0.45\linewidth]{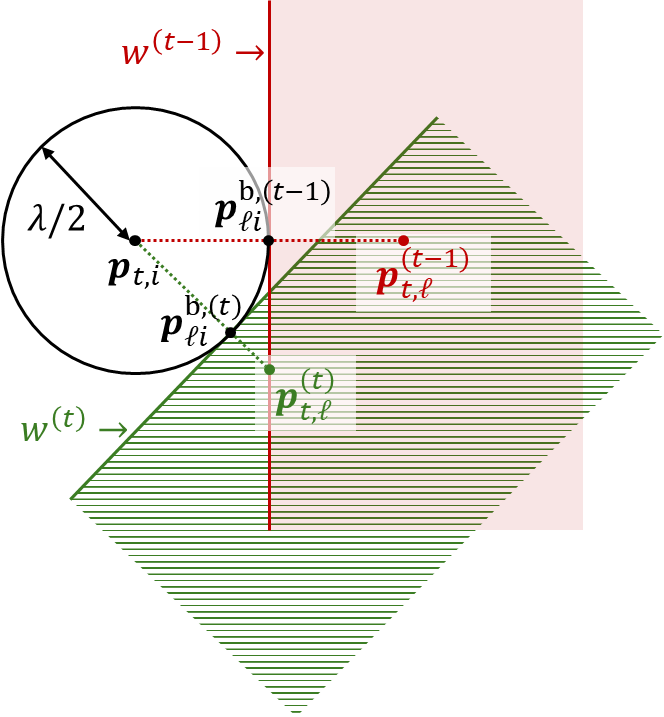}
    \caption{ A linear approximation of the non-convex constraint in \eqref{e:P1ref2} is obtained and projected onto the feasible set.}
    \label{fig:R_3_3}
\end{figure}

Next, we transform the non-convex feasible region of $\bm{p}_{t,\ell}$ into a convex set by linearizing the non-convex constraint in \eqref{e:P1ref} \cite{shao20246d}.
During the $ t $-th iteration of the gradient descent, we consider the position $ \bm{p}_{t, \ell}^{(t-1)} $ from the previous iteration and the position $ \bm{p}_{t, i} $. As part of the linearization of the non-convex constraint, the boundary point is first computed as
\begin{equation} \label{e:p_lt_b}
\bm{p}_{\ell i}^{\text{b}, (t-1)} = \bm{p}_{t, i} + \frac{\lambda}{2} \frac{\bm{p}_{t, \ell}^{(t-1)} - \bm{p}_{t, i}}{\left| \bm{p}_{t, \ell}^{(t-1)} - \bm{p}_{t, i} \right|}.
\end{equation}

The boundary of the hyperplane is then defined w.r.t. $\bm{p}_{\ell i}^{\text{b}, (t-1)}$ and the vector $\bm{p}_{t, \ell}^{(t-1)} - \bm{p}_{t, i}$ as
\begin{equation}
\left\{ \bm{w}^{(t-1)}_c \left| \left( \bm{p}_{\ell i}^{\text{b}, (t-1)} - \bm{p}_{t, i}  \right)^\top \bm{w}^{ (t-1)} = c \right. \right\},
\end{equation}
and the corresponding closed halfspace divided by this hyperplane is given by
\begin{equation} \label{e:halfspace}
\left\{\bm{w}^{(t-1)} \left| \left( \bm{p}_{t, \ell}^{(t-1)} - \bm{p}_{t, i} \right)^{\!\!\! \top} \!\!\! \left(\bm{w}^{(t-1)} - \bm{p}_{\ell i}^{\text{b}, (t-1)} \right) \geq 0 \right.  \right\},
\end{equation}
which includes $\bm{p}_{\ell i}^{\text{b}, (t-1)}$ and any vector that forms an acute angle with the vector $\bm{p}_{t, \ell}^{(t-1)} - \bm{p}_{t, i}$, as illustrated in Fig.~\ref{fig:R_3_3}.

By setting $\bm{w}^{(t-1)} = \bm{p}_{t, \ell}^{(t)}$ in \eqref{e:halfspace}, the non-convex constraint can be approximated as the following linear inequality:
\begin{eqnarray} \label{e:P1ref2}
\left( \bm{p}_{t, \ell}^{(t-1)} \!\!\! - \! \bm{p}_{t, i} \right)^{\!\!\!\top} \!\!\! \left(\bm{p}_{t, \ell}^{(t)} \!\!\! - \! \bm{p}_{\ell i}^{\text{b},(t-1)} \right) \! \geq\! 0, ~ \forall 1 \leq i \leq L, i \neq \ell.
\end{eqnarray}

Thus, problem (P1) is redefined to a more tractable form
\begin{subequations}\label{e:P2}
\begin{align}
\text {(P2):}\quad&\max_{\{\bm{n}_{t, \ell}\}, \{\bm{p}_{t, \ell}\}, \{\bm{n}_{r, \kappa} \}} \gamma_{\text{total}},\label{e:P1-1} \\ 
{\text {s.t.}~} 
    & \eqref{e:w}, ~ \eqref{e:P}, ~ \eqref{e:P1power}, ~ \eqref{e:P1ref2}, ~ \eqref{e:P1region}. \label{e:P1-1ref}
\end{align} 
\end{subequations}

\begin{algorithm}[!tb]
\caption{ Alternating Optimization for Solving (P2).}
\label{alg:AO}
\begin{algorithmic}[1] 
\REQUIRE ~~$f$, $\sigma^2$, $P$, $\varepsilon_r$, $\mu$, $A_F$, $\mathcal{C}_t$, $\bm{p}_{r, \kappa}$.
\STATE {Initialize: $t = 0$, $\bm{p}_{t, \ell}$, $\bm{n}_{t, \ell}$, $\bm{n}_{r, \kappa}$}.
\REPEAT 
\STATE {Update $\bm{n}_{t, \ell}$  by radient descent.}
\STATE {Update $\bm{p}_{t, \ell}$  by projected radient descent.}
\STATE {Update $\bm{n}_{r, \kappa}$  by radient descent.}
\STATE {Update $\bm{W}$ based on (27).}
\STATE {Update $\bm{P}$ based on (28).}
\STATE {Update $ \gamma_{\text{total}}$ based on (25).}
\UNTIL {Convergence or $t = t_{\text{max}}$.}
\end{algorithmic}
\end{algorithm}

Based on Problem (P2), we derive the gradient of the equivalent total SNR $ \gamma_{\text{total}} $ with respect to each optimization variable. In particular, the gradient with respect to the direction vector can be computed by leveraging the relationship between the antenna's polar and azimuthal angles defined in \eqref{e:nt}. Taking $ \bm{n}_{t, \ell} $ as an example, the gradient of the objective with respect to the angular parameters is given by
\begin{eqnarray}
    \nabla_{\theta^{(n)}_{t, \ell}}\gamma_{\text{total}} \!\!&=& \!\!\frac{\gamma_{\text{total}}\left( \theta^{(n)}_{t, \ell} + \Delta \theta^{(n)}_{t, \ell} \right) - \gamma_{\text{total}}\left( \theta^{(n)}_{t, \ell}\right)}{\Delta \theta^{(n)}_{t, \ell}}, \notag\\
    \nabla_{\phi^{(n)}_{t, \ell}}\gamma_{\text{total}} \!\!&=& \!\! \frac{\gamma_{\text{total}}\left( \phi^{(n)}_{t, \ell} + \Delta \phi^{(n)}_{t, \ell} \right) - \gamma_{\text{total}}\left( \phi^{(n)}_{t, \ell}\right)}{\Delta \phi^{(n)}_{t, \ell}}.
\end{eqnarray}
Given that $ \bm{n}_{t, \ell} $ is a periodic function of the angles $ \theta^{(n)}_{t, \ell} $ and $ \phi^{(n)}_{t, \ell} $, the angles are projected onto their respective unit periodic domains after each iteration to ensure validity while preserving the value of $ \bm{n}_{t, \ell} $. Subsequently, $ \bm{n}_{t, \ell} $ is updated based on the computed gradient.

Similarly, the gradient w.r.t $ \bm{p}_{t, \ell} $ can be derived as
\begin{equation}
    \nabla_{\bm{p}_{t, \ell}}\gamma_{\text{total}} = \frac{\gamma_{\text{total}}\left( \bm{p}_{t, \ell} + \Delta \bm{p}_{t, \ell} \right) - \gamma_{\text{total}}\left( \bm{p}_{t, \ell}\right)}{\Delta \bm{p}_{t, \ell}},
\end{equation}
After updating $ \bm{p}_{t, \ell} $ using the computed gradient, it is projected onto the feasible set according to equation (32).

Algorithm~\ref{alg:AO} summary the overall optimization process.



\begin{rem}[Channel estimation in PAMA]
Channel estimation poses a significant challenge in MA systems because determining the optimal antenna position requires knowledge of the channel characteristics at every possible position and orientation within the movement region $\mathcal{C}_t$. This task is particularly complex in radio frequency bands with rich multipath environments.

However, within our PAMA framework operating in the mmWave band, channel estimation is considerably simplified due to the dominance of the LOS path. PAMA only requires information about user positions and antenna orientations. With this information, we can accurately infer the channel gains at any position and orientation using \eqref{e:h}.
Specifically, the impact of estimation errors on the SNR performance of the PAMA system with $K = 4$ is analyzed in Appendix \ref{sec:AppC}.
\end{rem}

\begin{rem}[The $K> L$ case]
ZF beamforming effectively eliminates inter-user interference only when the number of users $K$ does not exceed the number of transmit antennas $L$. When $K> L$, we can partition the users into groups of size up to $L$ and assign orthogonal resources to each group. Employing more advanced detection algorithms, such as GMMSE \cite{shi2011iteratively} and GPIP \cite{choi2019joint}, can further enhance SNR performance compared to the ZF baseline. A comparison of optimization algorithms is provided in Appendix \ref{sec:AppD}, where it is shown that regardless of the algorithm used, the PAMA system consistently outperforms its fixed-antenna counterpart due to the combined benefits of antenna mobility and polarization matching. Alternatively, exploring different optimization criteria, such as minimizing interference leakage \cite{sadek2007leakage}, can lead to alternative beamforming designs. 
\end{rem}

\subsection{Numerical simulations} \label{ana_mp2mp}

\begin{table}[t]
\centering
\caption{Simulation parameter settings.}
\begin{tabular}{ccc}
     \toprule
     \textbf{Symbol}   & \textbf{Description} & \textbf{Value} \\
     \midrule
     $f$ & Carrier frequency   & 30 GHz \\
     $\lambda$ & Wave length   & 0.01 m \\
     $\sigma^2$ & Noise power & -20 dBm \\
     $P$ & Signal power & 0.5 W \\
     $\varepsilon_r$ & Relative permittivity & 2 \\
     $\mu$ & Permeability & $4 \pi \times 10^{-7}$ \\
     $A_F$ & Antenna factor & 1 \\
     $\mathcal{C}_t$ & Transmitter region & $x_{r,\kappa},y_{r,\kappa},z_{r,\kappa}\in[-100\lambda, 100\lambda]$ \\
     $L$ & \#transmitting antennas & 8 \\
     \bottomrule
\end{tabular}
\label{tab:sim_para}
\end{table}


\begin{figure*}[t]
\centering
    \begin{subfigure}[!t]{0.245\linewidth}
        \centering
        \includegraphics[width=1\linewidth]{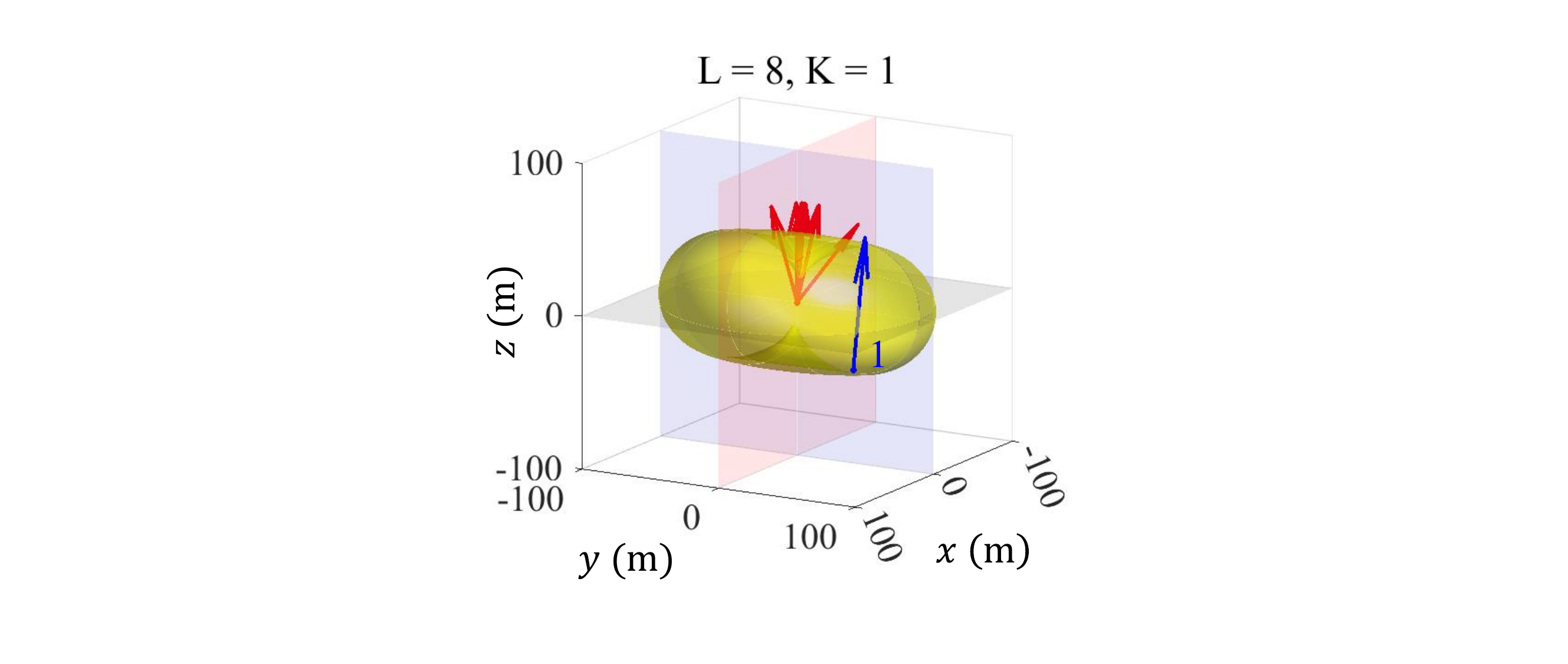}
        \caption{$K = 1$}\label{f:MIMO_EF_81}
    \end{subfigure}
    \begin{subfigure}[!t]{0.245\linewidth}
        \includegraphics[width=1\linewidth]{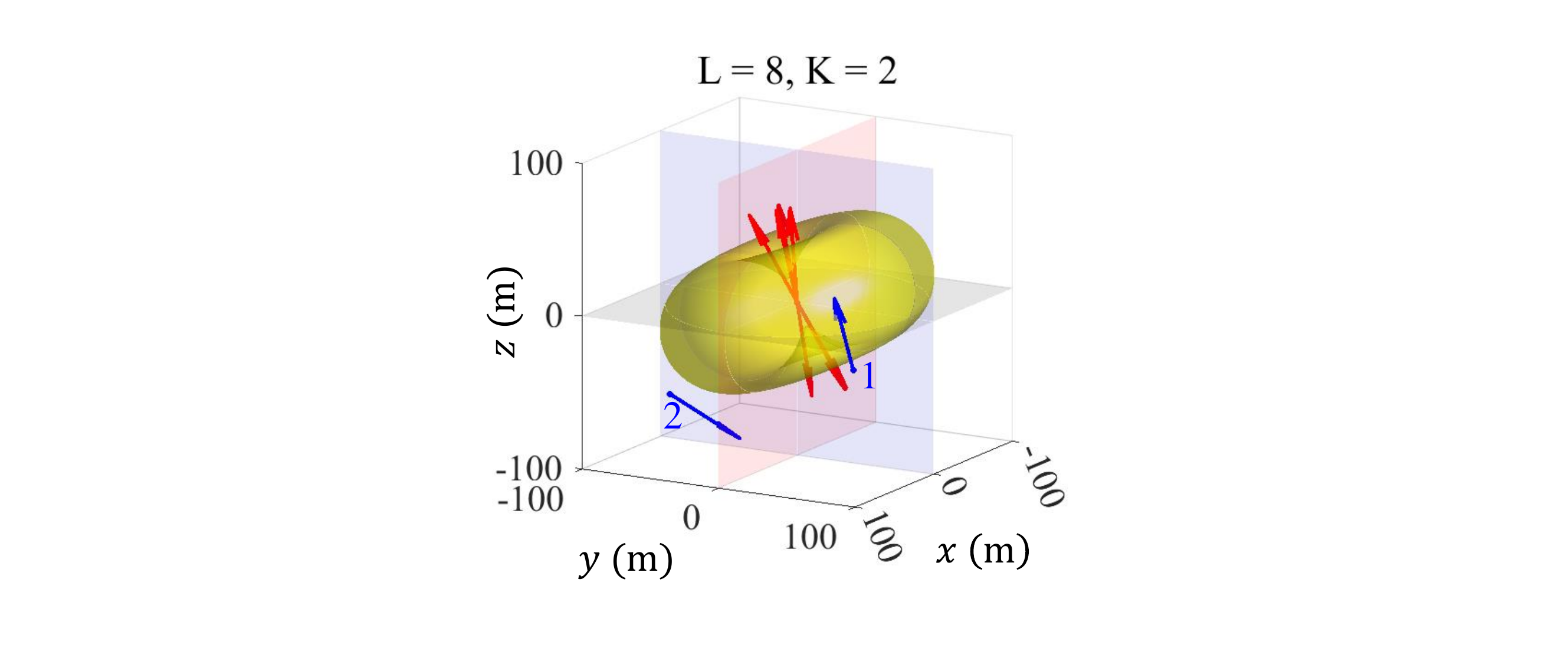}
        \caption{$K = 2$}\label{f:MIMO_EF_82}
    \end{subfigure}
    \begin{subfigure}[!t]{0.245\linewidth}
        \includegraphics[width=1\linewidth]{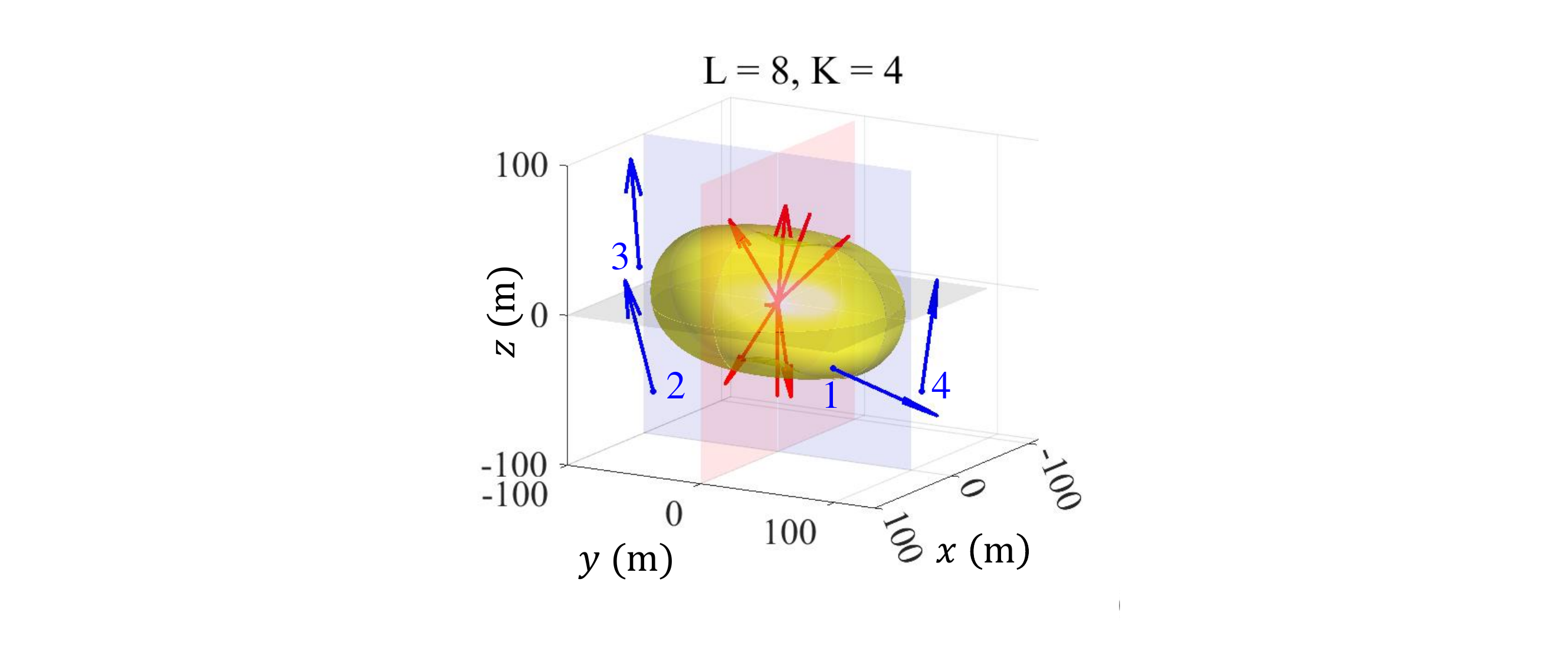}
        \caption{$K = 4$}\label{f:MIMO_EF_84}
    \end{subfigure}
    \begin{subfigure}[!t]{0.245\linewidth}
        \includegraphics[width=1\linewidth]{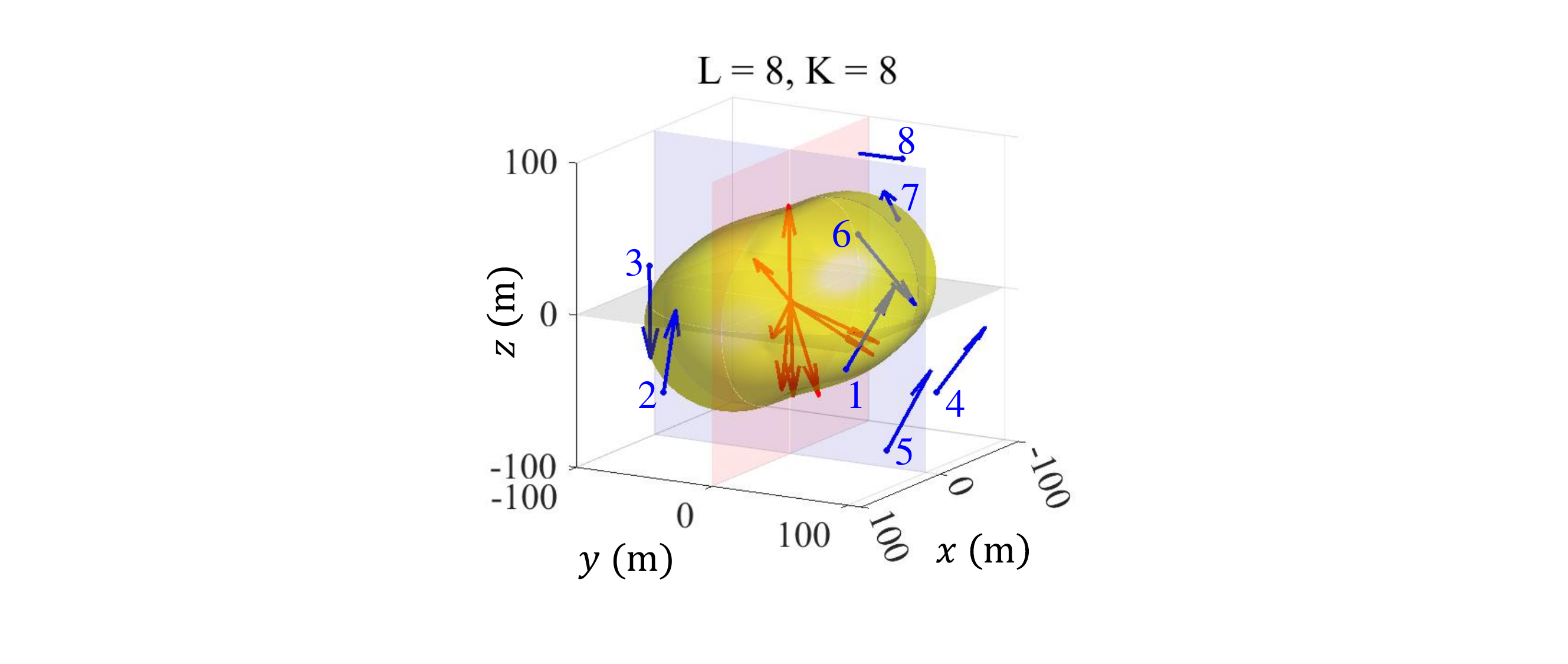}
        \caption{$K = 8$}\label{f:MIMO_EF_88}
    \end{subfigure}
\caption{Optimal positions and orientations of the transmitting antennas, optimal orientations of the receiving antennas, and the corresponding electric field isosurfaces in the MU-MISO system for different user counts.}
\label{f:isosurface}
\end{figure*}

This section presents numerical simulations to evaluate the performance of our PAMA framework in the context of sum-rate optimization. In the simulations, the system operate in 30 GHz mmWave and consider users uniformly distributed within a 3D coverage area -- a cube with dimensions of 200 meters $\times$ 200 meters $\times$ 200 meters -- centered at the origin $\bm{o}$ of the CCS. Unless otherwise specified, the transmission power is set to 0.5 W, and all simulations are conducted with the number of transmit antennas $L$ set to 8. A comprehensive summary of the simulation parameters is provided in Tab.~\ref{tab:sim_para}.

\subsubsection{Impact of polarization matching}
Our PAMA framework integrates polarization effects into MAs and reveals a significant advantage of MAs: the ability to achieve optimal polarization alignment between the transmitter and receiver. To validate this capability, we consider the following antenna configurations and channel model:
\begin{enumerate}[leftmargin=0.5cm]
    \item \textbf{Fixed vertically polarized antenna array}: A vertically polarized antenna array with elements spaced at half-wavelength intervals, based on a traditional channel model.
    \item \textbf{Fixed cross-polarized antenna array}: An antenna array arranged in a cross-polarized configuration as described in \cite{3GPP5G}, with elements separated by half a wavelength, based on a traditional channel model.
    \item \textbf{Continuous fluid antenna system} (CFAS): A vertically polarized antenna array that moves continuously within vertically placed tubes, with adjacent tubes spaced half a wavelength apart. During the antenna fluiding, only phase changes are considered.
    \item \textbf{Polarization-Ignorant 3D MA system}  (PI-3DMA): Transmitting antennas have optimized positions but vertical orientations; receiving antennas have random orientations. During the antenna translation, only phase changes are considered.
    \item \textbf{Polarization-Ignorant rotational MA system}   (PI-RoMA): Transceiver antennas have optimized orientations but fixed positions. During antenna rotation, only changes in the radiation pattern are considered.
    \item \textbf{Polarization-Ignorant 6D MA system}   (PI-6DMA): Both the positions and orientations of the transceiver antennas are optimized. During antenna movement, changes in phase and radiation pattern are considered.
    \item \textbf{Polarization-Aware 3D MA system}   (PA-3DMA): Transmitting antennas have optimized positions but vertically orientations; receiving antennas have random orientations. Applying the PAMA channel model.
    \item \textbf{Polarization-Aware rotational MA system}   (PA-RoMA): Transeiver antennas have optimized orientations but fixed positions. Applying the PAMA channel model.
    \item \textbf{PAMA}: Both the positions and orientations of the transeiver antennas are optimized. Applying the PAMA channel model.
\end{enumerate}

In Configurations 1 and 2, only fixed antennas with predefined positions and orientations are considered, without accounting for movement effects. Configurations 3 - 6 explore different types of MA while optimizing the system without considering polarization effects. In contrast, Configurations 7 - 9 integrate polarization matching into the channel model optimization. Notably, Configurations 4 - 6 and 7 - 9 share the same antenna structure, differing only in that Configurations 4 - 6 omit polarization effects in their channel models.
We compare system performance under different antenna configurations and channel models in Figs.~\ref{f:SNR_K1} and \ref{f:SNR_K2}.

\begin{figure}[t]
    \centering
    \includegraphics[width=0.75\linewidth]{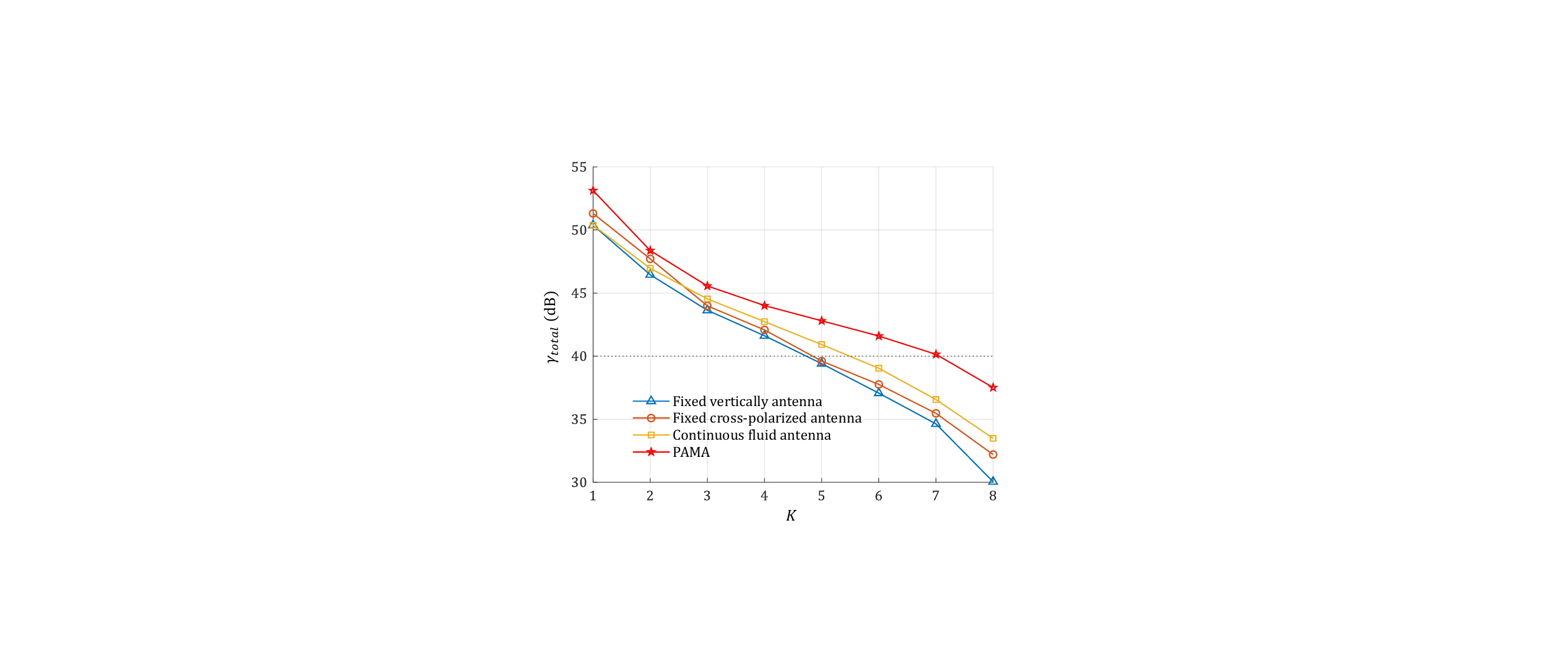}
    \caption{The equivalent total SNR $\gamma_{\text{total}}$ versus the number of users $K$ for different antenna configurations.}
    \label{f:SNR_K1}
\end{figure}

\begin{figure}[t]
    \centering
    \includegraphics[width=0.75\linewidth]{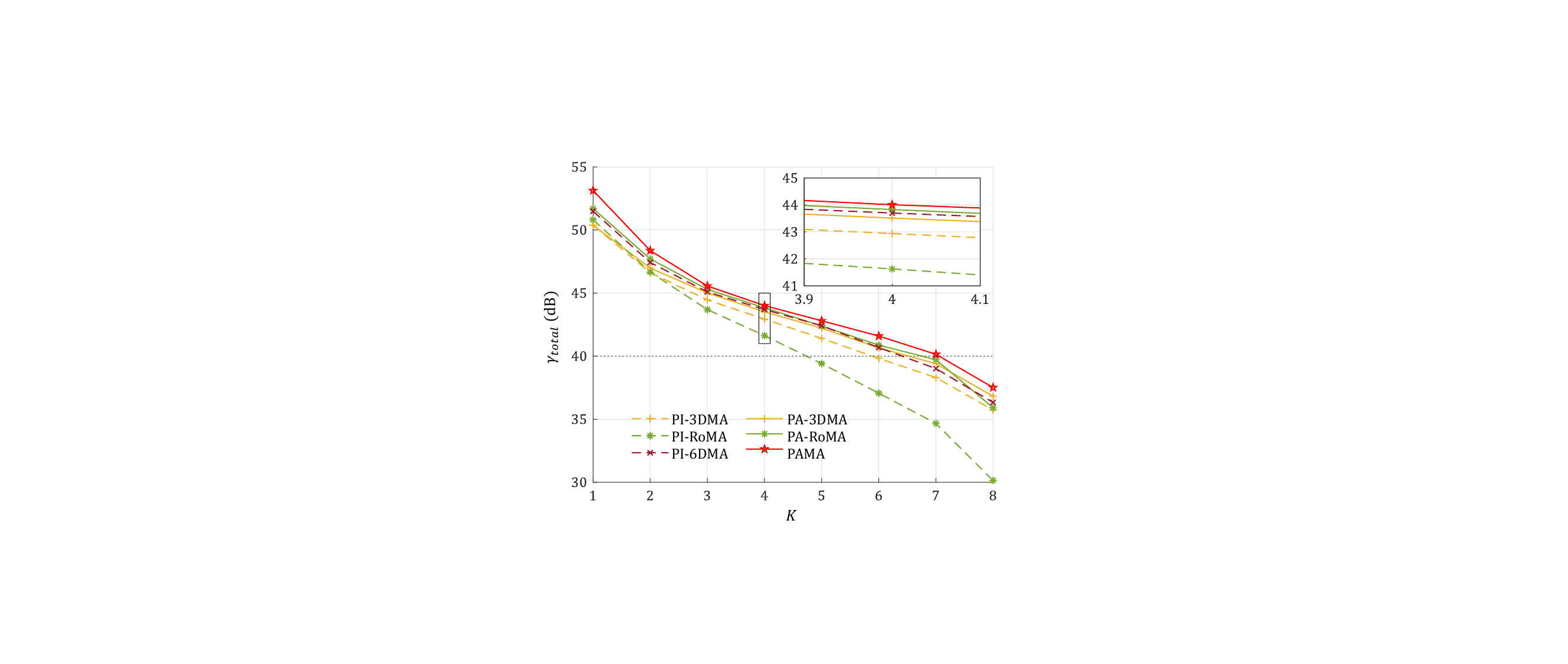}
    \caption{The equivalent total SNR $\gamma_{\text{total}}$ versus the number of users $K$ for different channel model.}
    \label{f:SNR_K2}
\end{figure}

As shown in Fig. \ref{f:SNR_K1}, the fixed cross-polarized antenna array achieves a slightly higher equivalent total SNR than the vertically polarized antenna array. This is because cross-polarization provides a dual-polarized configuration, mitigating the impact of polarization mismatch over a wider range. CFAS outperforms the cross-polarized antenna array due to the introduction of spatial degrees of freedom.

Fig. \ref{f:SNR_K2} further illustrates that for the same antenna configurations -- 3DMA, RoMA, and 6DMA -- systems that consider polarization matching efficiency can more effectively exploit the available degrees of freedom, leading to improved performance.
As shown in Fig.~\ref{f:SNR_K2}, while PI-6DMA appears to achieve slightly worse average performance, this is largely due to the rarity of severe polarization mismatch as disscussed in Sec.~ \ref{SecIV}. Fig.~\ref{f:1_3_2} reveals the key difference: under polarization mismatch, PI-6DMA suffers from limited gain, while PAMA maintains stable and effective optimization, fully leveraging its channel model and the additional degrees of freedom to realign polarization and recover performance.
These results validate the superior resilience and optimization capability of PAMA, particularly in non-ideal conditions where simplified models fail. The benefits brought by the refined channel model are thus not only theoretical but also practically significant, enabling reliable and high-performance communication even under adverse polarization conditions.

\begin{figure}[t]
    \centering
    \includegraphics[width=0.75\linewidth]{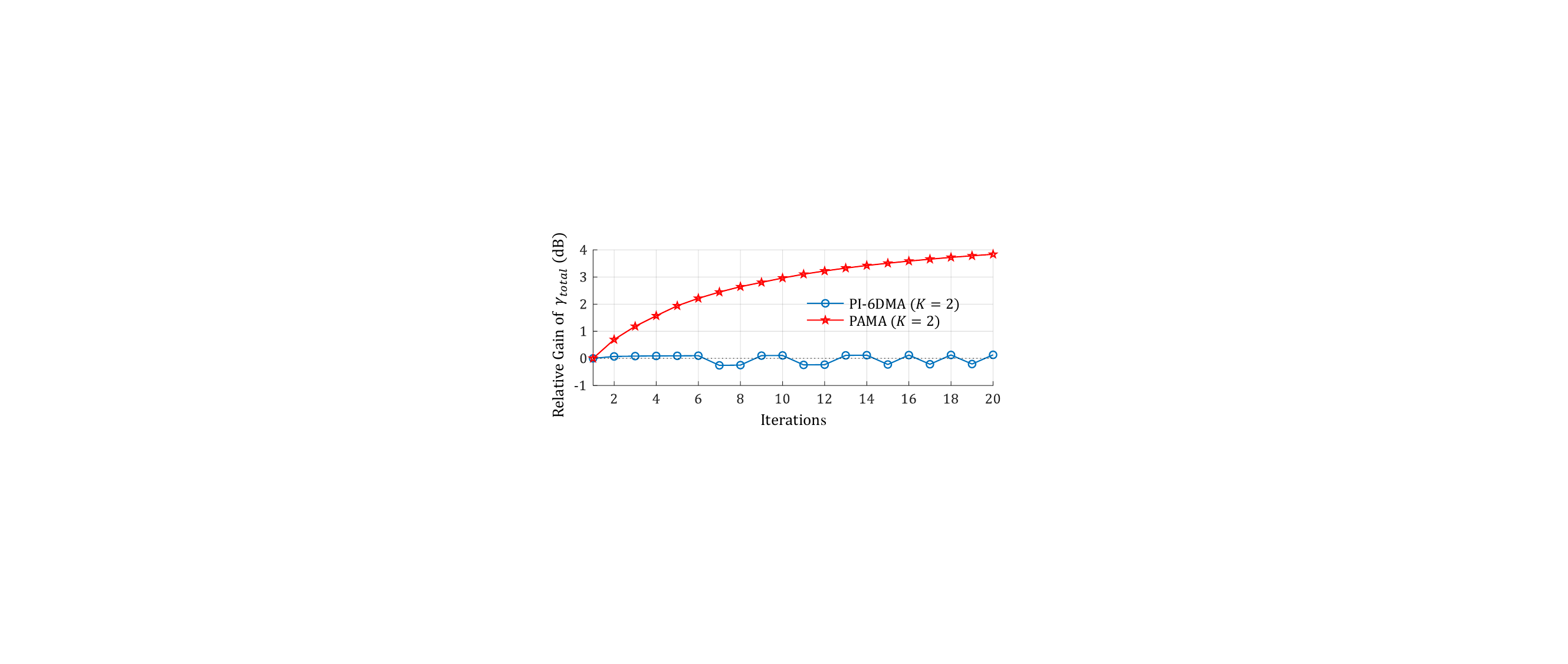}
    \caption{Comparison of relative gains between PAMA and PI-6DMA system  with $K = 2$.}
    \label{f:1_3_2}
\end{figure}

To provide a clearer understanding of the optimized PAMA in configuration 5, we present in Fig.~\ref{f:isosurface} the optimal orientations of both the transmitting and receiving antennas as the number of users increases, along with the corresponding electric field isosurfaces.
\begin{itemize}[leftmargin=0.5cm]
\item In the single-user scenario illustrated in Fig.~\ref{f:isosurface}(a), all transmitting antennas are oriented to focus their maximum radiation patterns toward the target. The combined electric field distribution from the eight antennas resembles that of a single half-wavelength dipole antenna, forming a characteristic doughnut-shaped pattern.
\item As the number of users increases, the electric field distribution becomes more isotropic to effectively accommodate all users' requirements.
\end{itemize}

\setcounter{figure}{10}
\begin{figure}[t]
    \centering
    \includegraphics[width=0.75\linewidth]{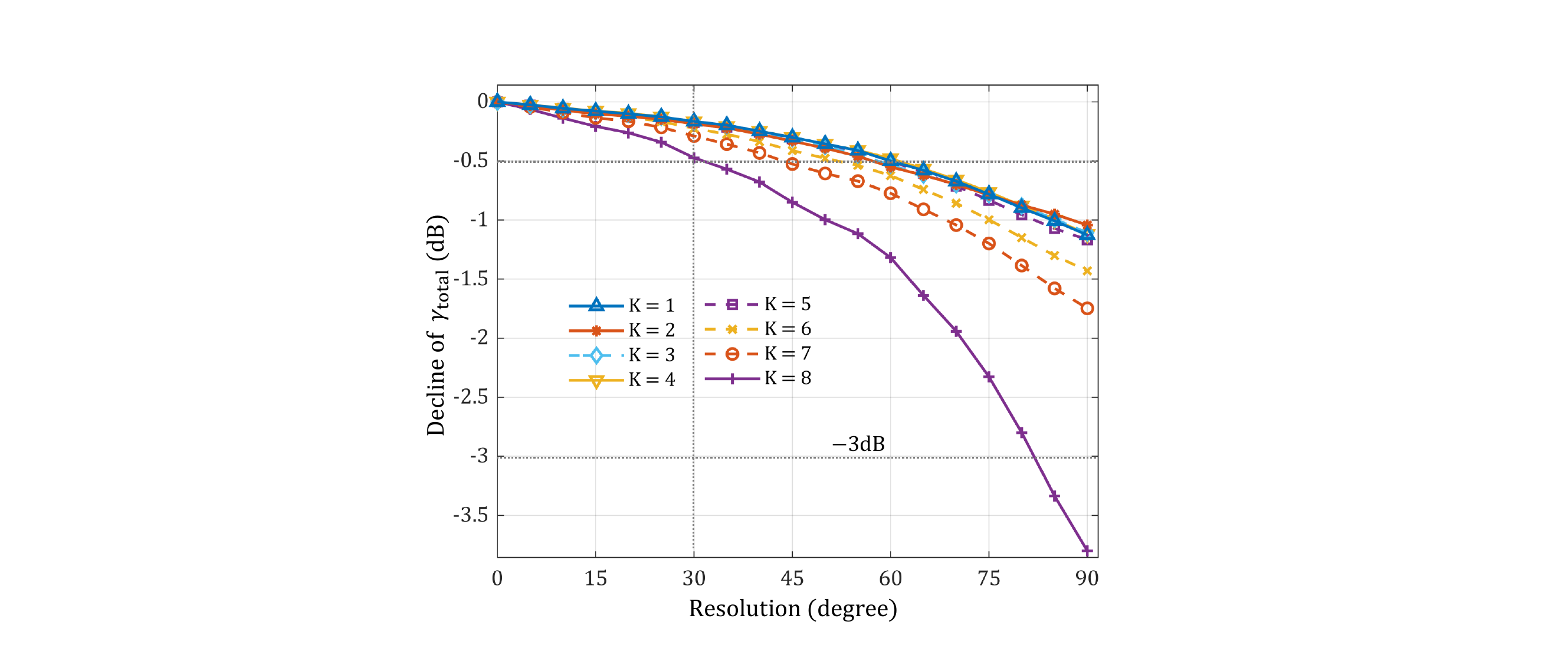}
    \caption{The equivalent total SNR $\gamma_{\text{total}}$ versus the antenna rotation granularity for varying numbers of users $K$.}
    \label{f:SNR_reso}
\end{figure}

\begin{figure}[t]
    \centering
    \includegraphics[width=0.75\linewidth]{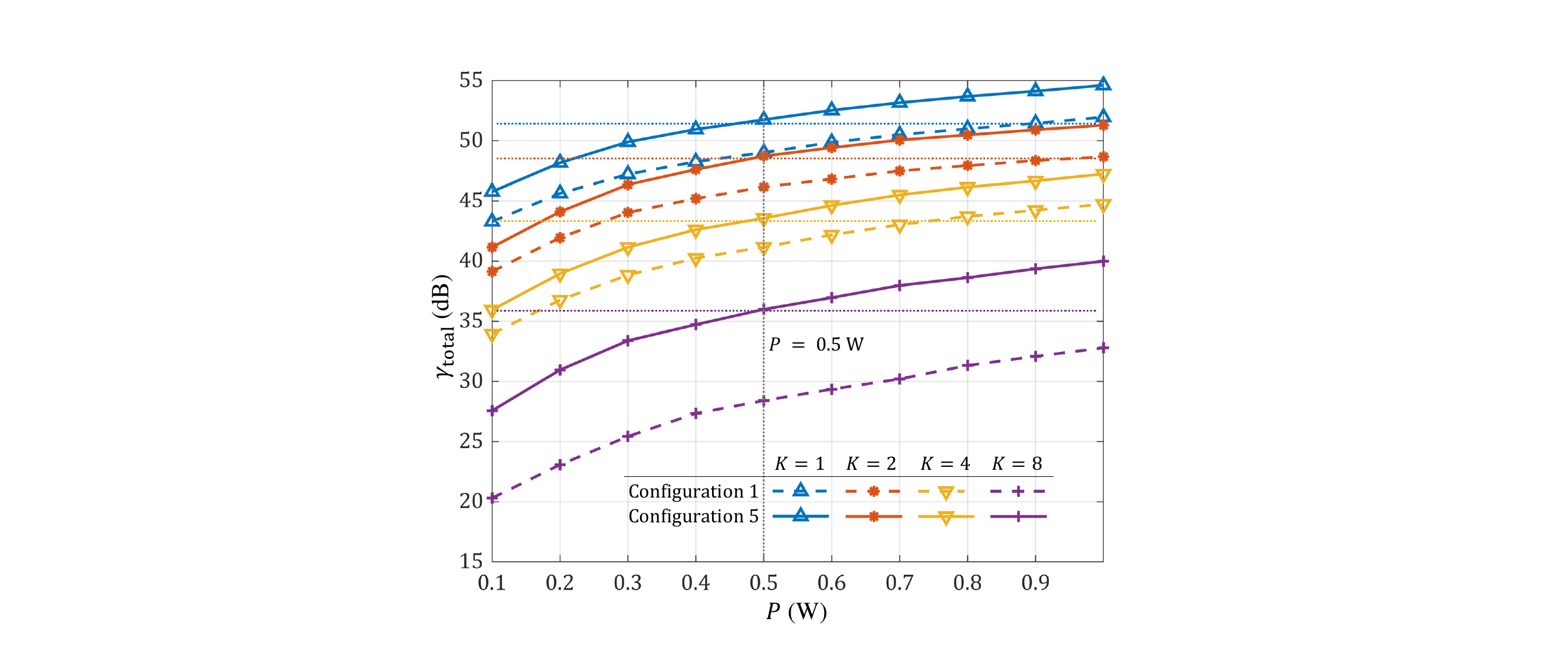}
    \caption{The equivalent total SNR $\gamma_{\text{total}}$ versus the total transmission power $P$ for varying numbers of users $K$.}
    \label{f:R_Ps}
\end{figure}

\subsubsection{Impact of antenna rotation granularity}
Implementing PAMA in practice faces the challenge of precisely controlling the rotations of both transmitting and receiving antennas. While the simulation results in Fig.~\ref{f:SNR_K2} assume infinite control precision, real-world systems are constrained by finite rotation granularity. 
To assess how rotation granularity affects system performance, we quantize the polar and azimuthal angles of the transmitting and receiving antennas, i.e., $\theta_{t,\ell}^{(n)}$, $\phi_{t,\ell}^{(n)}$, $\theta_{r,\kappa}^{(n)}$, and $\phi_{r,\kappa}^{(n)}$. We then evaluate the decline in the equivalent total SNR, $\gamma_{\text{total}}$, across different quantization levels. The simulation results are depicted in Fig.~\ref{f:SNR_reso}, where a resolution level of $0$ corresponds to the optimal orientation with infinite precision.

The findings show that the negative impact of reduced quantization precision on $\gamma_{\text{total}}$ becomes more pronounced as the number of users increases. For example, with a rotation precision limited to $80^\circ$, a system serving $K = 8$ users experiences a $3$ dB decrease in $\gamma_{\text{total}}$. When the rotation resolution is improved to $30^\circ$, the achievable $\gamma_{\text{total}}$ stays within $0.5$ dB of the maximum possible gain. In this scenario, each angular dimension -- both the polar and azimuthal angles -- requires adjustments in only $360^\circ/30^\circ=12$ discrete directions. This significantly reduces the implementation complexity, as controlling antenna rotations becomes much more practical with a finite set of orientations. These results demonstrate that PAMA can perform effectively even with finite control precision, confirming that high-resolution rotational control is not necessary to achieve near-optimal performance.

\subsubsection{Transmit power reduction and convergence behavior}
Next, we examine how the equivalent total SNR $\gamma_{\text{total}}$ varies with the total transmit power $P$. The simulation results are shown in Fig.~\ref{f:R_Ps} and explained as follows.
\begin{itemize}[leftmargin=0.5cm]
\item For a fixed number of users, increasing $P$ leads to a higher $\gamma_{\text{total}}$, whereas increasing the number of users results in a lower $\gamma_{\text{total}}$.
\item With polarization matching through rotations of both transmitting and receiving antennas, the PAMA framework demonstrates greater power efficiency. To achieve the same total SNR, PAMA significantly reduces the required transmit power. This outcome aligns with expectations, as the transmitter can adjust its orientation to direct higher field intensity toward each user, while the receiver can optimize matching efficiency by rotating its antenna for optimal alignment.
\end{itemize}

\begin{figure}[!t]
    \centering
    \includegraphics[width=0.75\linewidth]{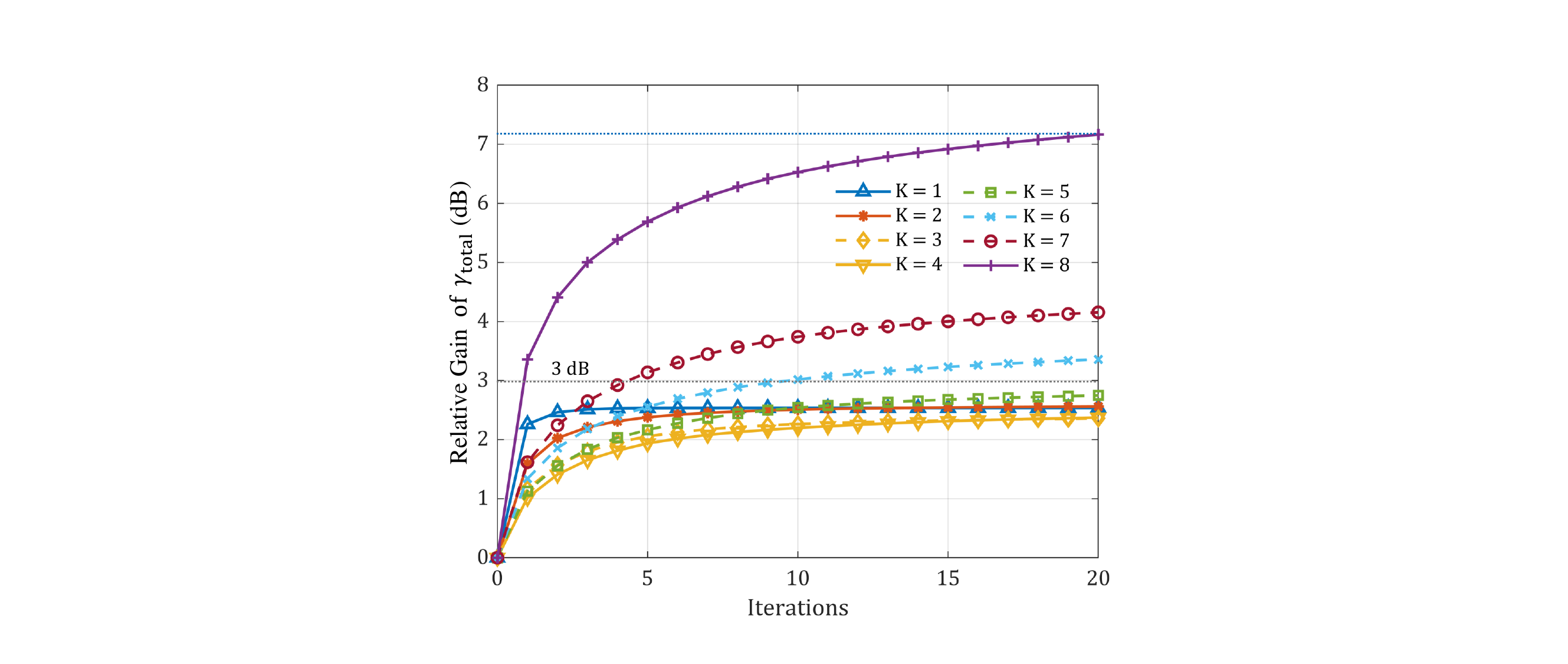}
    \caption{Comparison of relative gains between PAMA and fixed vertically polarized antenna array with the increase in $K$, and the convergence behavior of the optimization algorithm.}
    \label{f:SNR_iter}
\end{figure}

Finally, Fig.~\ref{f:SNR_iter} illustrates the convergence behavior of our optimization algorithm by showing the relative gain in total SNR. The results reveal that:
\begin{itemize}[leftmargin=0.5cm]
\item The total SNR increases with each iteration across various values of $K$. For $K < 5$, the algorithm typically converges within about $10$ iterations, achieving a gain of less than $3$ dB. However, when $K\geq 5$, more iterations are required for convergence.
\item Despite the increased number of iterations needed for larger $K$, the additional receiving antennas introduce greater degrees of freedom, enabling higher gains within a reasonable number of iterations. For instance, with $K=8$, the gain can reach up to $7$ dB within 20 iterations.
\end{itemize}

\section{Conclusion}\label{SecVI}
This paper presented the PAMA framework, which integrates polarization effects into the translation and rotation of antennas in wireless communication systems. By factoring in polarization, PAMA uncovers a defining advantage of MA: the ability to optimize polarization matching between the transmitter and receiver. This enhancement debunks the belief that MA is only effective in low-frequency multipath environments and extends the applicability of MA to higher frequency bands like the mmWave spectrum.

At mmWave frequencies, antennas are physically smaller, making them easier to move and rotate, which facilitates the practical implementation of PAMA systems. 
The dominance of LOS paths in the mmWave band simplifies channel estimation compared to traditional MA systems operating in environments with rich multipath. 
Additionally, PAMA achieves near-optimal performance with low rotation granularity. These factors collectively enhance the feasibility of deploying PAMA in real-world scenarios.

The insights from this work underscore the critical role of polarization in maximizing the performance of MA. Future research may focus on integrating PAMA with other advanced communication technologies, evaluating its performance in various high-frequency settings, and addressing practical implementation challenges such as antenna control mechanisms and mobility constraints. By advancing the understanding of how polarization can be harnessed in movable antenna systems, this work lays the foundation for developing more efficient and reliable wireless communication networks.

\appendices
\section{Electric Field Radiation}\label{sec:AppA}
As a preliminary, the appendix explains the derivation of \eqref{e:E0}, the radiation patterns of a half-wave dipole.

Maxwell’s equations can be written as
\begin{eqnarray} 
&&\hspace{-1cm} \nabla \times \bm{\mathcal{E}} = - j \omega \mu \bm{\mathcal{H}}, \label{e:ME1}\\
&&\hspace{-1cm} \nabla \times \bm{\mathcal{H}} = j \omega \varepsilon \bm{\mathcal{E}} + \bm{J}, \label{e:ME2}\\
&&\hspace{-1cm} \nabla \cdot \bm{\mathcal{E}} = \frac{\rho}{\varepsilon}, \label{e:ME3}\\
&&\hspace{-1cm} \nabla \cdot \bm{\mathcal{H}} = 0, \label{e:ME4}
\end{eqnarray}
where
\begin{itemize}[leftmargin=0.45cm]
    \item $\bm{\mathcal{E}}$ and $\bm{\mathcal{H}}$ refer to the electric field and magnetic field.
    \item  $\mu$ is the magnetic permeability of air, describing its ability to conduct magnetic field lines. $\varepsilon$ is the permittivity, characterizing a material’s capacity to store electric charge.
    \item $\bm{J}$ and $\rho$ represent the source current density and source charge density, respectively.
    \item $\omega$ is the angular frequency of the wave. The expression $\bm{a} \times \bm{b}$ denotes the cross product; $\nabla \times \bm{b}$ denotes the curl of $\bm{b}$; $\bm{a} \cdot \bm{b}$ denotes the dot product; $\nabla \cdot \bm{b}$ denotes the divergence of $\bm{b}$, and $\nabla$ is the Hamiltonian operator.
\end{itemize} 

Since the divergence of $\bm{\mathcal{H}}$ in \eqref{e:ME4} is zero, the vector field $\bm{\mathcal{H}}$ has no sources or sinks; it only circulates. This allows us to define a magnetic vector potential $\bm{A}$ so that $\bm{\mathcal{H}}$ can be described as the curl of $\bm{A}$:
\begin{eqnarray} \label{e2.36}
\bm{\mathcal{H}} = \frac{1}{\mu} \nabla \times \bm{A} .
\end{eqnarray}

Substituting \eqref{e2.36} into \eqref{e:ME1} yields $\nabla \times ( \bm{\mathcal{E}} + j \omega \bm{A}) = 0$. In vector calculus, if the curl of a vector field is zero, that field can be expressed as the gradient of a scalar potential. Therefore, we can define an electric scalar potential $\Phi$ such that
\begin{eqnarray} \label{e2.38}
\bm{\mathcal{E}} + j \omega \bm{A} = - \nabla \Phi.
\end{eqnarray}
It is important to note that $\nabla \times \nabla \Phi$ is always zero for any scalar potential $\Phi$.

Equations \eqref{e2.36} and \eqref{e2.38} show how $\bm{\mathcal{H}}$ and $\bm{\mathcal{E}}$ can be represented in terms of their potential functions. This means that once we find these potential functions, we can determine the corresponding field expressions. The next step is to solve for these potential functions.

First, substituting \eqref{e2.36} into \eqref{e:ME2} gives
\begin{eqnarray} \label{e2.40}
\frac{1}{\mu} \nabla \times \nabla \times \bm{A} = j \omega \varepsilon \bm{\mathcal{E}} + \bm{J}.
\end{eqnarray}
Next, we use the vector identity $\nabla \times \nabla \times \bm{A} \equiv \nabla (\nabla \cdot \bm{A}) - \nabla^2 \bm{A}$, where $\nabla^2 = \nabla \cdot \nabla$ represents the Laplace operator, which calculates the divergence of the gradient. By applying \eqref{e2.38}, we can rewrite \eqref{e2.40} as
\begin{eqnarray} \label{e2.42}
\nabla (\nabla \cdot \bm{A}) - \nabla^2 \bm{A} = j \omega \mu \varepsilon \left( - j \omega \bm{A} - \nabla \Phi \right) + \mu \bm{J}.
\end{eqnarray}

After a simple rearrangement and using the Lorentz condition $\nabla \cdot \bm{A} = - j \omega \mu \varepsilon \Phi$\cite{huang2021antennas}, we get
\begin{eqnarray} \label{e2.45}
\nabla ^2 \bm{A} + \omega ^2 \mu \varepsilon \bm{A} = - \mu \bm{J}.
\end{eqnarray}
This equation is known as the vector wave equation. With this differential equation, if the current $\bm{J}$ is known, we can solve for the vector potential $\bm{A}$. Once $\bm{A}$ is determined, the magnetic field $\bm{\mathcal{H}}$ can be found using \eqref{e2.36}. Then, by using the Lorentz condition along with \eqref{e2.38}, the electric field $\bm{\mathcal{E}}$ can also be solved.

To solve for $\bm{A}$, we break down the vector wave equation in \eqref{e2.45} into three separate scalar equations. Substituting these into \eqref{e2.45}, we obtain
\begin{eqnarray} \label{e2.49}
&&\hspace{-1cm} \nabla ^2 A_x + \beta^2 A_x = - \mu J_x, \notag\\
&&\hspace{-1cm} \nabla ^2 A_y + \beta^2 A_y = - \mu J_y, \notag\\
&&\hspace{-1cm} \nabla ^2 A_z + \beta^2 A_z = - \mu J_z ,
\end{eqnarray}
where $\beta = \omega^2 \mu \varepsilon = \frac{2\pi}{\lambda}$ is the phase constant of the transmitted wave, representing the phase change per unit distance.

The three equations in \eqref{e2.49} are structurally the same. Once we solve one of them, the solutions for the other two follow easily. To start, we find the solution for a point source. This solution, known as the unit impulse response, can be used to build a general solution by considering an arbitrary source as a combination of many point sources. The differential equation for a point source is
\begin{eqnarray} \label{e2.55}
\nabla^2 \psi + \beta ^2 \psi = - \delta(x) \delta(y) \delta(z),
\end{eqnarray}
where $\psi$ represents the response to a point source located at the origin, and $\delta(\cdot)$ is the unit impulse function. Although a point source is infinitesimally small, the current it represents has a specific direction. In practical problems, the point source models a small segment of current with a defined direction. If we consider the point source current to be directed along the $z$-axis, then $\psi = A_z$.

Since the point source only exists at the origin and is zero everywhere else, \eqref{e2.55} can be simplified to
\begin{eqnarray} \label{e2.57}
\nabla^2 \psi + \beta ^2 \psi = 0.
\end{eqnarray}
The two solutions to this equation are $\frac{e^{\pm j\beta |\bm{p}|}}{|\bm{p}|}$, where $|\bm{p}|$ is the distance from the point source to the observation point $\bm{p}$. These solutions represent waves moving outward and inward from the source, respectively. The physically relevant solution is the one that describes waves propagating outward from the point source. By determining the constant of proportionality, the solution for the point source becomes
\begin{eqnarray} \label{e2.58}
\psi = \frac{e^{-j \beta |\bm{p}|}}{4 \pi |\bm{p}|}.
\end{eqnarray}

For a current density directed along the $z$-axis, the resulting vector potential will also be $z$-directed. If we think of the source as a collection of point sources with a distribution given by $J_z$, the response $A_z$ can be found by summing the responses of each point source using \eqref{e2.58}. This summation is represented by an integral over the source volume $v$:
\begin{eqnarray} \label{e2.60}
A_z = \iiint \limits_{v} {\mu J_z \frac{e^{-j \beta |\bm{p}|}}{4 \pi |\bm{p}|} d v}.
\end{eqnarray}
Similar integrals apply for the $x$- and $y$-components. The overall solution is the sum of all components, expressed as
\begin{eqnarray} \label{e2.61}
\bm{A} = \iiint \limits_{v} {\mu \bm{J} \frac{e^{-j \beta |\bm{R}|}}{4 \pi |\bm{R}|} d v},
\end{eqnarray}
where $\bm{R}$ is the vector from the location of the point source $\bm{r}$ to the observation point $\bm{p}$, hence is given by $\bm{R}=\bm{p} - \bm{r}$. 
Therefore, we arrive at the solution to the vector wave equation \eqref{e2.45} as shown in \eqref{e2.61}.

In summary, when the current density $\bm{J}$ is known, the steps to find the induced field are:
\begin{enumerate}
    \item Use \eqref{e2.61} to find the magnetic vector potential $\bm{A}$; 
    \item Determine the magnetic field $\bm{\mathcal{H}}$ using \eqref{e2.36};  
    \item Apply \eqref{e:ME2} and consider that $\bm{J} = 0$ at the observation point $\bm{p}$ to derive the electric field as $\bm{\mathcal{E}} = \frac{1}{j \omega \varepsilon} \nabla \times \bm{\mathcal{H}}$.  
\end{enumerate}




For a half-wave dipole located at the origin and placed along the $z$-axis, the current density $\bm{J}$ distribution is given by
\begin{eqnarray} \label{e3.1}
J(z) = \sin{\left[ \beta\left( \frac{\lambda}{4} - |z| \right) \right] },
\end{eqnarray}
where $|z| \leq \frac{\lambda}{4}$.

The magnetic vector potential $\bm{A}$ can be derived from \eqref{e2.61} as follows:
\begin{eqnarray} \label{e2.101}
&&\hspace{-0.95cm} \bm{A} = \iiint \limits_{v} {\mu \bm{J} \frac{e^{-j \beta |\bm{R}|}}{4 \pi |\bm{R}|} d v} \\
&&\hspace{-0.6cm} \overset{(a)}{=} \mu \frac{e^{-j \beta |\bm{p}|}}{4 \pi |\bm{p}|}\iiint \limits_{v} { \bm{J} e^{-j \beta \frac{\bm{p}}{|\bm{p}|} \cdot \bm{r}} d v} \notag\\
&&\hspace{-0.6cm} \overset{(b)}{=} \bm{k}\mu \frac{e^{-j \beta |\bm{p}|}}{4 \pi |\bm{p}|}\int{ J(z) e^{-j \beta z  \cos{\theta} } d z }  \notag\\
&&\hspace{-0.6cm} \overset{(c)}{=} \bm{k}\mu \frac{e^{-j \beta |\bm{p}|}}{4 \pi |\bm{p}|}\int{  \sin{\left[ \beta\left( \frac{\lambda}{4} - |z| \right) \right] } e^{-j \beta z  \cos{\theta} } d z } \notag\\
&&\hspace{-0.6cm} \overset{(d)}{=} 2 \bm{k}\mu  \frac{e^{-j \beta |\bm{p}|}}{4 \pi |\bm{p}|} \frac{j \cos{\left(\frac{\pi}{2} \cos{\theta}\right)} + \sin{\left( \frac{\pi}{2} \cos{\theta} \right)} -\cos{\theta}}{j \beta \sin^2{\theta}}  ,\notag
\end{eqnarray}
where (a) follows from $|\bm{R}| = |\bm{p}| - \bm{r} \cdot \frac{\bm{p}}{|\bm{p}|} $ in the far field;
(b) follows because the polar angle of the observation point $\bm{p}$ is the emission angel of the transmitting antenna when the dipole is oriented along the $z$-axis;
(c) follows from \eqref{e3.1};
(d) follows from $\beta = \frac{2\pi}{\lambda}$, $\cos{x} = \frac{e^{jx} + e^{-jx}}{2}$, and $e^{j\frac{\pi}{2}} + e^{-j\frac{\pi}{2}} = 0$, $e^{j\frac{\pi}{2}} - e^{-j\frac{\pi}{2}} = 2j$.

Finally, the electric field induced at $\bm{p}$ can be derived as
\begin{eqnarray} \label{e3.2}
&&\hspace{-1cm} \bm{\mathcal{E}} = - j \omega \text{Re}\{A_\theta\} \bm{\vartheta} = j \omega \sin{\theta} \text{Re}\{A_z\} \bm{\vartheta} \notag\\
&&\hspace{-0.65cm} = j \omega \sin{\theta} 2 \mu  \frac{e^{-j \beta |\bm{p}|}}{4 \pi |\bm{p}|} \frac{\cos{\left(\frac{\pi}{2} \cos{\theta}\right)} }{ \beta \sin^2{\theta}} \bm{\vartheta}\notag\\
&&\hspace{-0.65cm} = j \omega \mu \frac{ 2 }{\beta}  \frac{e^{-j \beta |\bm{p}|}}{4 \pi |\bm{p}|} \frac{\cos{\left(\frac{\pi}{2} \cos{\theta}\right)} }{  \sin{\theta}} \bm{\vartheta}.
\end{eqnarray}

\section{Proof of Corollary \ref{cor:1}}\label{sec:AppB}
Since the first multiplicative factor in \eqref{e:magnitude} is a constant, the behavior of the magnitude is determined by the second term:
$A \triangleq \frac{\cos{\left( \frac{\pi}{2} \cos{\theta^{(e)}_{\ell,\bm{p}}} \right)}}{\sin{\theta^{(e)}_{\ell,\bm{p}}}}$.
We analyze the derivative of $A$ with respect to (w.r.t) $\theta^{(e)}_{\ell,\bm{p}}$ to examine its variation. 

The derivative can be expressed as
\begin{eqnarray}\label{e:partialA}
&&\hspace{-1cm} \frac{\partial A}{\partial \theta^{(e)}_{\ell,\bm{p}}} 
= \frac{1}{\sin^2{\theta^{(e)}_{\ell,\bm{p}}}} \Bigg[ \left( -\frac{\pi}{4} - \frac{1}{2}\right) \cos{\left( \frac{\pi}{2} \cos{\theta^{(e)}_{\ell,\bm{p}}} + \theta^{(e)}_{\ell,\bm{p}}  \right)} \notag\\
&&\hspace{1 cm}  + \left( \frac{\pi}{4} - \frac{1}{2} \right) \cos{\left( \frac{\pi}{2} \cos{\theta^{(e)}_{\ell,\bm{p}}} - \theta^{(e)}_{\ell,\bm{p}} \right)} \Bigg],
\end{eqnarray} 
where $\theta^{(e)}_{\ell,\bm{p}} \in [0, \pi]$. To simplify the notations, we define $A_1 \triangleq \frac{\pi}{2} \cos{\theta^{(e)}_{\ell,\bm{p}}} + \theta^{(e)}_{\ell,\bm{p}}$ and $A_2 \triangleq \frac{\pi}{2} \cos{\theta^{(e)}_{\ell,\bm{p}}} - \theta^{(e)}_{\ell,\bm{p}}$.

When $ \theta^{(e)}_{\ell,\bm{p}} \in (0, \frac{\pi}{2}) $, we have
$$ \frac{\partial A_1}{\partial \theta^{(e)}_{\ell,\bm{p}}} = -\frac{\pi}{2} \sin{\theta^{(e)}_{\ell,\bm{p}}} + 1 .$$
Thus, $\frac{\partial A_1}{\partial \theta^{(e)}_{\ell,\bm{p}}}$ is a monotonically decreasing function of $ \theta^{(e)}_{\ell,\bm{p}} $ from 1 to $- \frac{\pi}{2} + 1 < 0$. The minimum value of $ A_1=\frac{\pi}{2} $ occurs at $\theta^{(e)}_{\ell,\bm{p}} = 0$ or $\theta^{(e)}_{\ell,\bm{p}} = \frac{\pi}{2}$.
Since $$A_1 \leq \frac{\pi}{2} + \theta^{(e)}_{\ell,\bm{p}} < \pi,$$ we have $ \frac{\pi}{2} < A_1 < \pi $ for $ \theta^{(e)}_{\ell,\bm{p}} \in (0, \frac{\pi}{2}) $.
Therefore, $ \cos{\left( A_1 \right)} < 0$, and $$ \left( -\frac{\pi}{4} - \frac{1}{2} \right) \cos{\left( A_1 \right)} > 0 .$$

On the other hand, we have
$$ \frac{\partial A_2}{\partial \theta^{(e)}_{\ell,\bm{p}}} = -\frac{\pi}{2} \sin{\theta^{(e)}_{\ell,\bm{p}}} - 1 < 0.$$
$A_2 $ is a monotonically decreasing function when $ \theta^{(e)}_{\ell,\bm{p}} \in (0, \frac{\pi}{2}) $, with $ A_2(0) = \frac{\pi}{2} $ and $ A_2(\frac{\pi}{2}) = -\frac{\pi}{2} $. Therefore, $ |A_2| < \frac{\pi}{2} $, implying that $ \cos\left( \frac{\pi}{2} \cos \theta - \theta \right) > 0 $, and thus
$$\left( \frac{\pi}{4} - \frac{1}{2} \right) \cos{\left( A_2 \right)} > 0.$$

Additionally, $ \sin^2{\theta^{(e)}_{\ell,\bm{p}}} > 0 $ when $ \theta^{(e)}_{\ell,\bm{p}} \in (0, \frac{\pi}{2}) $.
Overall, we have $\frac{\partial A}{\partial \theta^{(e)}_{\ell,\bm{p}}}  > 0 $ when $ \theta^{(e)}_{\ell,\bm{p}} \in (0, \frac{\pi}{2}) $. 

Similarly, it can be shown that $ \frac{\partial A}{\partial \theta^{(e)}_{\ell,\bm{p}}} < 0 $ when $ \theta \in (\frac{\pi}{2}, \pi) $.
Since $ \frac{\partial A}{\partial \theta^{(e)}_{\ell,\bm{p}}}  \big|_{\theta = \frac{\pi}{2}} = 0 $, both $ A$ and the magnitude $|\mathcal{E}_{\ell,\bm{p}}|$ reach their maximum when $ \theta^{(e)}_{\ell,\bm{p}} = \frac{\pi}{2}$, proving the corollary.

\section{On the impact of estimation errors}\label{sec:AppC}
In partiuclar, we introduce small Gaussian perturbations are added to the user locations and orientations:
\begin{itemize}
    \item The unit error in $\bm{p}_t$ is 0.1 $\lambda$;
    \item The unit error in $\bm{p}_r$ is 1 cm;
    \item The unit error in $\bm{n}_t$ and $\bm{n}_r$ is 1$^\circ$.
\end{itemize}

The simulation results are shown in Fig.~\ref{fig:3_1}, where the horizontal axis represents the level of unit errors, and the vertical axis indicates the relative SNR gain. A value of zero on the vertical axis corresponds to no gain from the optimization process.  
As seen in the figure, the PAMA system is robust to directional errors of a few degrees at both the transmitter and receiver, as well as centimeter-level position errors at the receiver. On the other hand, it is more sensitive to transmitter position errors on the order of the wavelength, which can cause a decline in system performance. Despite this, the system can still achieve performance gains in the presence of estimation errors, demonstrating the reliability of the PAMA system.

\begin{figure}[!tb]
    \centering
    \includegraphics[width=0.9\linewidth]{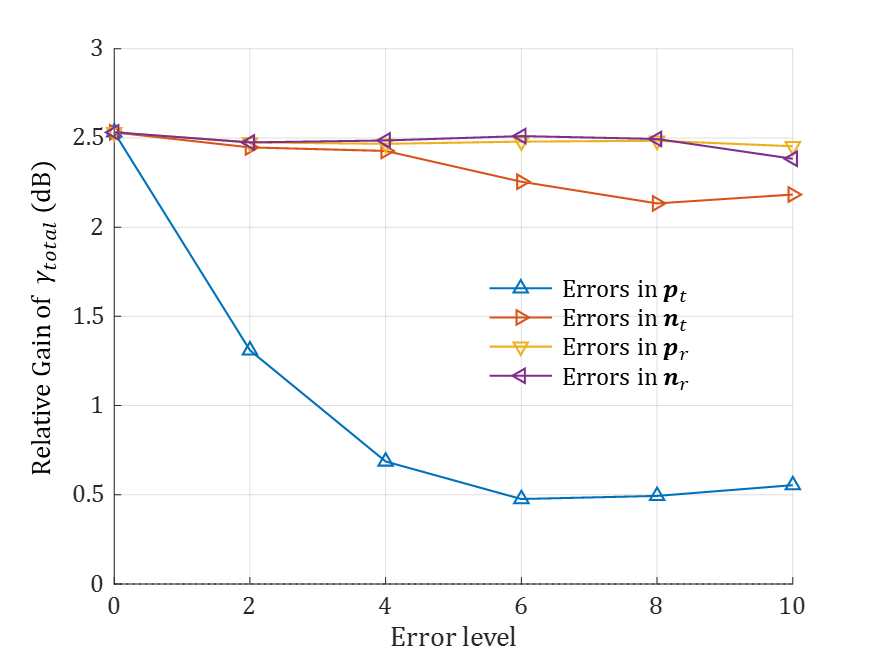}
    \caption{  The relative gain of equivalent total SNR versus the error in different parameters.}
    \label{fig:3_1}
\end{figure}

\section{Comparison of optimization algorithms}\label{sec:AppD}
We have analyzed the optimization approaches of Generalized Minimum Mean Square Error (GMMSE) and Generalized Power Iteration-based Precoding (GPIP) in the PAMA scenario and compared them with the original ``ZF + Waterfilling'' method used in our paper.

\textbf{GMMSE}:
By establishing the equivalence between maximizing the weighted sum-rate and minimizing the weighted MMSE, the optimal beamformer can be obtained iteratively. The key steps are as follows:

\begin{enumerate}
    \item The beamforming matrix is initialized as $\bm{B}_k = b \bm{H}_k^H$, where the scaling facto $b$ ensures compliance with the power constraint.
    \item MMSE-receive filter is obtained as
    \begin{eqnarray}
        \bm{A}_k^{MMSE} &=& \arg \underset{\bm{A}_k}{\min} {\mathbb{E} \left[ || \bm{A}_k \bm{y}_k - \bm{d}_k ||^2 \right]} \\
        &=& \bm{B}_k^H \bm{H}_k^H \left( \bm{H}_k \bm{B}_k \bm{B}_k^H \bm{H}_k^H + \bm{R}_{\tilde{v}_k\tilde{v}_k}\right)^{-1},\notag
    \end{eqnarray}
    where $\bm{R}_{\tilde{v}_k\tilde{v}_k} = \bm{I} + \sum_{i = 1 , i \neq k}^K {\bm{H}_k \bm{B}_i \bm{B}_i^H \bm{H}_k^H}$ denotes the effective noise covariance matrix at user $k$.
    \item  Given the MMSE-receive filteris, the MSE-matrix for user $k$ can be written as 
    
    \begin{eqnarray}
        \bm{E}_k &=& {\mathbb{E} \left[ || \bm{A}_k^{MMSE} \bm{y}_k - \bm{d}_k ||^2 \right]} \notag\\
        &=& \left( \bm{I} +  \bm{B}_k^H \bm{H}_k^H \bm{R}_{\tilde{v}_k\tilde{v}_k}^{-1} \bm{H}_k \bm{B}_k\right)^{-1},\notag
    \end{eqnarray}
    and $\bm{W}_k = u_k\bm{E}_k^{-1}$.
    \item The WMMSE transmit filter structure is 
    $$\bar{\bm{B}} \!\!=\!\! \left(\!\! \bm{H}^{\!H} \! \bm{A}^{\!H} \!  \bm{W} \! \bm{A} \! \bm{H} \! + \! \frac{\text{Tr} \! \left(\bm{W} \!  \bm{A} \bm{A}^{\!H} \!  \right)}{E_{tx}} \bm{I}_L\right)^{-1} \!\!\!\bm{H}^{\!H}\! \bm{A}^{\!H}\! \bm{W},$$ where $\bm{W} = \text{diag}{\left\{ \bm{W}_1, \dots ,  \bm{W}_K \right\}}$, $\bm{A} = \text{diag}{\left\{ \bm{A}_1, \dots ,  \bm{A}_K \right\}}$, and $\bm{H} = \left[ \bm{H}_1^\top, \dots , \bm{H}_K^\top \right]^\top$.
    \item $\bm{B}^{\text{MMSE}} = b \bar{\bm{B}}$, where $b = \sqrt{\frac{E_{tx}}{\text{Tr} \left( \bar{\bm{B}} \bar{\bm{B}}^H \right)}}$ is a gain factor which scales the signal  so as to satisfy the transmit power constraint.
\end{enumerate}

All aforementioned parameters follow the definitions established in \cite{shi2011iteratively}, with Algorithm Steps 2-5 being iteratively executed until convergence is achieved. 
Given that the beamformer matrix $\bm{B}$ corresponds to the parameterization in our framework through the relation $\bm{B} = \bm{W}\bm{P}$, the optimal solution $\bm{B}^{\text{MMSE}}$ obtained through this procedure can be factorized to derive the normalized optimal beamforming vectors $\{\bm{w}_k\}_{k=1}^K$ with their optimal power allocation coefficients $\{P_k\}_{k=1}^K$ for PAMA systems.

\textbf{GPIP}:
The weighted sum-rate maximization problem can be reformulated as an equivalent matrix optimization problem, where maximizing the sum-rate reduces to finding the dominant singular value. The corresponding singular vectors yield the optimal beamformer. The procedure is as follows:
\begin{enumerate}
    \item Two intermediate variables are initialized as:
    $$\bm{A}_k = \begin{bmatrix}
        \bm{h}_k \bm{h}_k^H + \bm{\Phi}_k & \!\!\!\dots\!\!\! & 0 \\
        \dots  & \!\!\!\dots\!\!\!  & \dots \\
        0  & \!\!\!\dots\!\!\!  &  \bm{h}_k \bm{h}_k^H + \bm{\Phi}_k \\
    \end{bmatrix} + \frac{\sigma_k^2}{P} \bm{I}_{LK},$$
    $$\bm{B}_k = \bm{A}_k 
 - \begin{bmatrix}
        0 & \dots & 0 & \dots  & 0 \\
        \dots  & \dots  & \dots  & \dots  & \dots  \\
        0  & \dots  &  \bm{h}_k \bm{h}_k^H & \dots  & 0  \\
        \dots  & \dots  & \dots  & \dots  & \dots  \\
        0  & \dots  & 0  & \dots  &  0  \\
    \end{bmatrix}.$$
    \item Through differentiation with respect to the beamformer, we obtain two groups of coefficients    
    \begin{eqnarray}
    c_i\left( \bm{f}^{(m - 1)} \right) = \left( w_i \left( \bm{f}^{(m - 1)} \right)^H \bm{A}_i \bm{f}^{(m - 1)}  \right)^{w_i - 1} \notag\\
    \prod_{k \neq i}^K{ \left( \left( \bm{f}^{(m - 1)} \right)^H \bm{A}_k \bm{f}^{(m - 1)}  \right)^{w_i}},\notag
    \end{eqnarray}
    \begin{eqnarray}d_i\left( \bm{f}^{(m - 1)} \right) = \left( w_i \left( \bm{f}^{(m - 1)} \right)^H \bm{B}_i \bm{f}^{(m - 1)}  \right)^{w_i - 1}\notag\\
    \prod_{k \neq i}^K{ \left( \left( \bm{f}^{(m - 1)} \right)^H \bm{B}_k \bm{f}^{(m - 1)}  \right)^{w_i}}.\notag
    \end{eqnarray}
    \item Construct two intermediate variables as
    $$\bar{\bm{A}}\left( \bm{f}^{(m - 1)} \right) = \sum_{i = 1}^K {c_i \left( \bm{f}^{(m - 1)} \right) \bm{A}_i},$$ $$\bar{\bm{B}}\left( \bm{f}^{(m - 1)} \right) = \sum_{i = 1}^K {d_i \left( \bm{f}^{(m - 1)} \right) \bm{B}_i}.$$
    \item The iterative beamformer update yields $\bm{f}^{(m)} = \left[ \bar{\bm{B}}\left( \bm{f}^{(m - 1)} \right) \right]^{-1} \bar{\bm{A}}\left( \bm{f}^{(m - 1)} \right) \bm{f}^{(m - 1)}$
    \item After applying power normalization, we obtain the final beamformer $\bm{f}^{(m)} = \frac{\bm{f}^{(m)} }{|\bm{f}^{(m)} |}.$
\end{enumerate}

The iterative procedure repeats Steps 2-5 until convergence is achieved. All parameters follow the definitions established in \cite{choi2019joint}, where the beamforming vector is constructed as $\bm{f} = [\bm{f}^\top_1, \dots, \bm{f}^\top_K]^\top$. This formulation maintains the fundamental relationship $\bm{f}_k = P_k\bm{w}_k$, with both the power allocation coefficients $P_k$ and beamforming vectors $\bm{w}_k$ being essential components for PAMA system implementation.

\begin{figure}[!tb]
    \centering
    \includegraphics[width=0.9\linewidth]{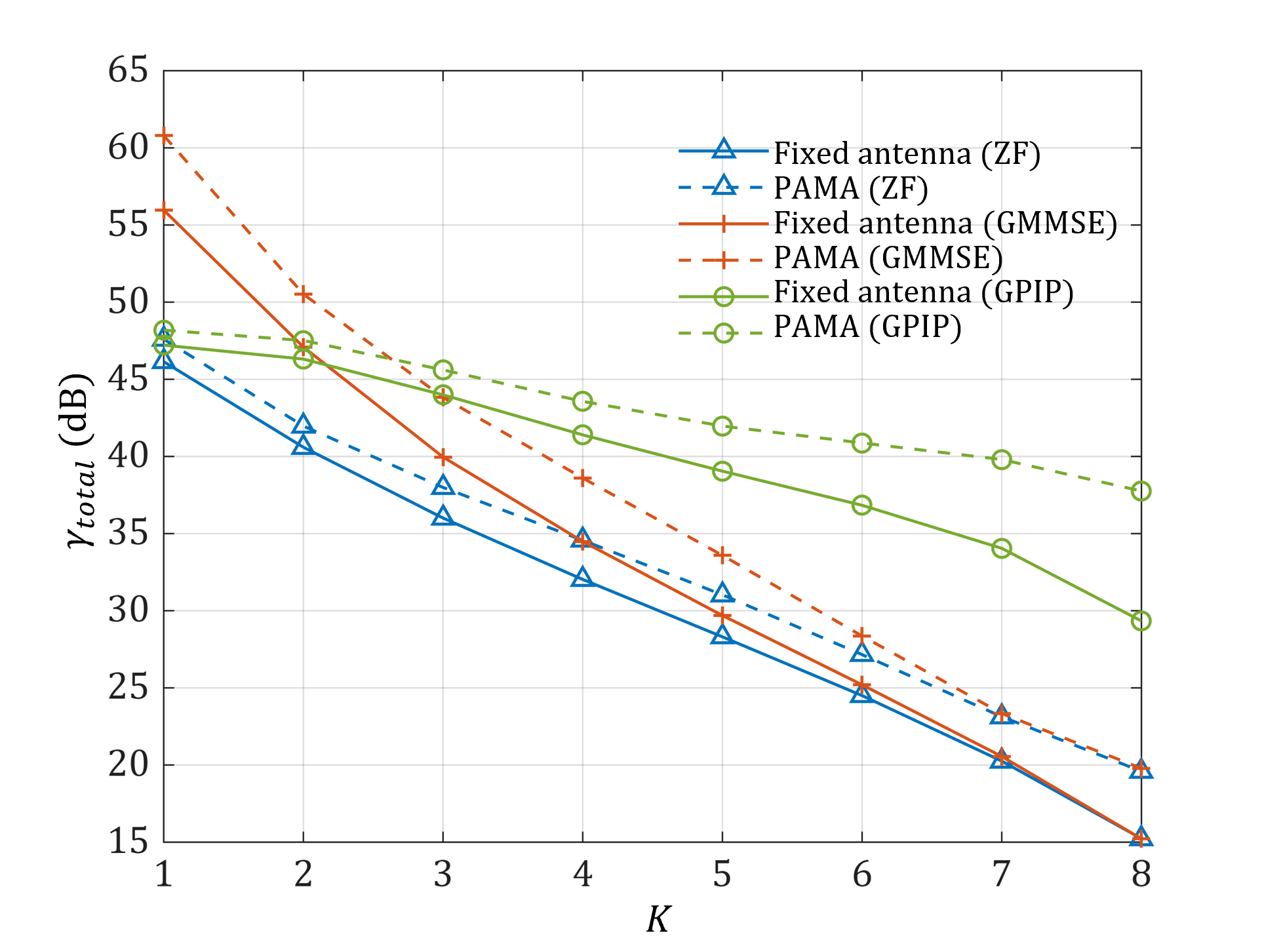}
    \caption{The equivalent total SNR $\gamma_{\text{total}}$ versus the number of users $K$ for three different beamforming algorithms: ZF, GMMSE, and GPIP.}
    \label{fig:1_2}
\end{figure}

Fig.~\ref{fig:1_2} presents a comparative analysis of the total equivalent SNR across the aforementioned algorithms, including ZF, GMMSE, and GPIP, for both fixed antenna and PAMA systems. As observed:
\begin{itemize}
    \item Regardless of the optimization algorithm employed, the PAMA system consistently outperforms its fixed-antenna counterpart, thanks to the combined gains from antenna mobility and polarization matching.
    \item As the reviewer rightly noted, employing more advanced detection algorithms, such as GMMSE and GPIP, can lead to further SNR improvements over the ZF baseline previously used in the main manuscript.
\end{itemize}


\bibliographystyle{IEEEtran}
\bibliography{ref.bib}

\begin{thebibliography}{10}
\providecommand{\url}[1]{#1}
\csname url@samestyle\endcsname
\providecommand{\newblock}{\relax}
\providecommand{\bibinfo}[2]{#2}
\providecommand{\BIBentrySTDinterwordspacing}{\spaceskip=0pt\relax}
\providecommand{\BIBentryALTinterwordstretchfactor}{4}
\providecommand{\BIBentryALTinterwordspacing}{\spaceskip=\fontdimen2\font plus
\BIBentryALTinterwordstretchfactor\fontdimen3\font minus \fontdimen4\font\relax}
\providecommand{\BIBforeignlanguage}[2]{{%
\expandafter\ifx\csname l@#1\endcsname\relax
\typeout{** WARNING: IEEEtran.bst: No hyphenation pattern has been}%
\typeout{** loaded for the language `#1'. Using the pattern for}%
\typeout{** the default language instead.}%
\else
\language=\csname l@#1\endcsname
\fi
#2}}
\providecommand{\BIBdecl}{\relax}
\BIBdecl

\bibitem{dang2020should}
S.~Dang, O.~Amin, B.~Shihada, and M.-S. Alouini, ``What should {6G} be?'' \emph{Nature Electronics}, vol.~3, no.~1, pp. 20--29, 2020.

\bibitem{wang2023road}
C.-X. Wang, X.~You, X.~Gao, X.~Zhu, Z.~Li, C.~Zhang, H.~Wang, Y.~Huang, Y.~Chen, H.~Haas \emph{et~al.}, ``On the road to {6G}: Visions, requirements, key technologies, and testbeds,'' \emph{IEEE Communications Surveys \& Tutorials}, vol.~25, no.~2, pp. 905--974, 2023.

\bibitem{shao2024theory}
Y.~Shao, Q.~Cao, and D.~G{\"u}nd{\"u}z, ``A theory of semantic communication,'' \emph{IEEE Transactions on Mobile Computing}, vol.~23, no.~12, pp. 12\,211--12\,228, 2024.

\bibitem{zhu2023modeling}
L.~Zhu, W.~Ma, and R.~Zhang, ``Modeling and performance analysis for movable antenna enabled wireless communications,'' \emph{IEEE Transactions on Wireless Communications}, 2023.

\bibitem{shao20246d}
X.~Shao, Q.~Jiang, and R.~Zhang, ``{6D} movable antenna based on user distribution: Modeling and optimization,'' \emph{arXiv preprint arXiv:2403.08123}, 2024.

\bibitem{shi2024capacity}
X.~Shi, X.~Shao, and R.~Zhang, ``Capacity maximization for base station with hybrid fixed and movable antennas,'' \emph{arXiv preprint arXiv:2405.07176}, 2024.

\bibitem{ning2024movable}
B.~Ning, S.~Yang, Y.~Wu, P.~Wang, W.~Mei, C.~Yuen, and E.~Bj{\"o}rnson, ``Movable antenna-enhanced wireless communications: General architectures and implementation methods,'' \emph{arXiv preprint arXiv:2407.15448}, 2024.

\bibitem{wong2020fluid}
K.-K. Wong, A.~Shojaeifard, K.-F. Tong, and Y.~Zhang, ``Fluid antenna systems,'' \emph{IEEE Transactions on Wireless Communications}, vol.~20, no.~3, pp. 1950--1962, 2020.

\bibitem{wong2021fluid}
K.-K. Wong and K.-F. Tong, ``Fluid antenna multiple access,'' \emph{IEEE Transactions on Wireless Communications}, vol.~21, no.~7, pp. 4801--4815, 2021.

\bibitem{wong2022bruce}
K.-K. Wong, K.-F. Tong, Y.~Shen, Y.~Chen, and Y.~Zhang, ``Bruce {L}ee-inspired fluid antenna system: Six research topics and the potentials for {6G},'' \emph{Frontiers in Communications and Networks}, vol.~3, pp. 853--416, 2022.

\bibitem{wong2020performance}
K.~K. Wong, A.~Shojaeifard, K.-F. Tong, and Y.~Zhang, ``Performance limits of fluid antenna systems,'' \emph{IEEE Communications Letters}, vol.~24, no.~11, pp. 2469--2472, 2020.

\bibitem{khammassi2023new}
M.~Khammassi, A.~Kammoun, and M.-S. Alouini, ``A new analytical approximation of the fluid antenna system channel,'' \emph{IEEE Transactions on Wireless Communications}, vol.~22, no.~12, pp. 8843--8858, 2023.

\bibitem{yoo2006optimality}
T.~Yoo and A.~Goldsmith, ``On the optimality of multiantenna broadcast scheduling using zero-forcing beamforming,'' \emph{IEEE Journal on Selected Areas in Communications}, vol.~24, no.~3, pp. 528--541, 2006.

\bibitem{tse2005fundamentals}
D.~Tse and P.~Viswanath, \emph{Fundamentals of wireless communication}.\hskip 1em plus 0.5em minus 0.4em\relax Cambridge university press, 2005.

\bibitem{tang2020wireless}
W.~Tang, M.~Z. Chen, X.~Chen, J.~Y. Dai, Y.~Han, M.~Di~Renzo, Y.~Zeng, S.~Jin, Q.~Cheng, and T.~J. Cui, ``Wireless communications with reconfigurable intelligent surface: Path loss modeling and experimental measurement,'' \emph{IEEE Transactions on Wireless Communications}, vol.~20, no.~1, pp. 421--439, 2020.

\bibitem{wu2019intelligent}
Q.~Wu and R.~Zhang, ``Intelligent reflecting surface enhanced wireless network via joint active and passive beamforming,'' \emph{IEEE Transactions on Wireless Communications}, vol.~18, no.~11, pp. 5394--5409, 2019.

\bibitem{shao2021federated}
Y.~Shao, D.~G{\"u}nd{\"u}z, and S.~C. Liew, ``Federated edge learning with misaligned over-the-air computation,'' \emph{IEEE Transactions on Wireless Communications}, vol.~21, no.~6, pp. 3951--3964, 2021.

\bibitem{zhu2023movable}
L.~Zhu, W.~Ma, B.~Ning, and R.~Zhang, ``Movable-antenna enhanced multiuser communication via antenna position optimization,'' \emph{IEEE Transactions on Wireless Communications}, 2023.

\bibitem{ma2023mimo}
W.~Ma, L.~Zhu, and R.~Zhang, ``{MIMO} capacity characterization for movable antenna systems,'' \emph{IEEE Transactions on Wireless Communications}, 2023.

\bibitem{psomas2023continuous}
C.~Psomas, P.~J. Smith, H.~A. Suraweera, and I.~Krikidis, ``Continuous fluid antenna systems: Modeling and analysis,'' \emph{IEEE Communications Letters}, 2023.

\bibitem{zhu2024performance}
L.~Zhu, W.~Ma, Z.~Xiao, and R.~Zhang, ``Performance analysis and optimization for movable antenna aided wideband communications,'' \emph{IEEE Transactions on Wireless Communications}, 2024.

\bibitem{stutzman2012antenna}
W.~L. Stutzman and G.~A. Thiele, \emph{Antenna theory and design}.\hskip 1em plus 0.5em minus 0.4em\relax John Wiley \& Sons, 2012.

\bibitem{costa2012doa}
M.~Costa, A.~Richter, and V.~Koivunen, ``{DoA} and polarization estimation for arbitrary array configurations,'' \emph{IEEE Transactions on Signal Processing}, vol.~60, no.~5, pp. 2330--2343, 2012.

\bibitem{huang2021antennas}
Y.~Huang, \emph{Antennas: from theory to practice}.\hskip 1em plus 0.5em minus 0.4em\relax John Wiley \& Sons, 2021.

\bibitem{nabar2002performance}
R.~U. Nabar, H.~Bolcskei, V.~Erceg, D.~Gesbert, and A.~J. Paulraj, ``Performance of multiantenna signaling techniques in the presence of polarization diversity,'' \emph{IEEE Transactions on Signal Processing}, vol.~50, no.~10, pp. 2553--2562, 2002.

\bibitem{kurum2018scobi}
M.~Kurum, M.~Deshpande, A.~T. Joseph, P.~E. O’Neill, R.~H. Lang, and O.~Eroglu, ``{SCoBi}-{Veg}: A generalized bistatic scattering model of reflectometry from vegetation for signals of opportunity applications,'' \emph{IEEE Transactions on Geoscience and Remote Sensing}, vol.~57, no.~2, pp. 1049--1068, 2018.

\bibitem{niu2015survey}
Y.~Niu, Y.~Li, D.~Jin, L.~Su, and A.~V. Vasilakos, ``A survey of millimeter wave communications (mm{W}ave) for {5G}: opportunities and challenges,'' \emph{Wireless networks}, vol.~21, pp. 2657--2676, 2015.

\bibitem{heath2016overview}
R.~W. Heath, N.~Gonzalez-Prelcic, S.~Rangan, W.~Roh, and A.~M. Sayeed, ``An overview of signal processing techniques for millimeter wave {MIMO} systems,'' \emph{IEEE Journal of Selected Topics in Signal Processing}, vol.~10, no.~3, pp. 436--453, 2016.

\bibitem{3GPP5G}
\emph{Study on channel model for frequencies from 0.5 to 100 GHz (3GPP TR 38.901 version 18.0.0 Release 18)}, 3GPP Std. 3GPP TR 38.901, 2024.

\bibitem{stratton2007electromagnetic}
J.~A. Stratton, \emph{Electromagnetic theory}.\hskip 1em plus 0.5em minus 0.4em\relax John Wiley \& Sons, 2007, vol.~33.

\bibitem{parazzoli2003experimental}
C.~Parazzoli, R.~Greegor, K.~Li, B.~Koltenbah, and M.~Tanielian, ``Experimental verification and simulation of negative index of refraction using snell’s law,'' \emph{Physical Review Letters}, vol.~90, no.~10, p. 107401, 2003.

\bibitem{tervo2015optimal}
O.~Tervo, L.-N. Tran, and M.~Juntti, ``Optimal energy-efficient transmit beamforming for multi-user {MISO} downlink,'' \emph{IEEE Transactions on Signal Processing}, vol.~63, no.~20, pp. 5574--5588, 2015.

\bibitem{yu2004iterative}
W.~Yu, W.~Rhee, S.~Boyd, and J.~M. Cioffi, ``Iterative water-filling for gaussian vector multiple-access channels,'' \emph{IEEE Transactions on Information Theory}, vol.~50, no.~1, pp. 145--152, 2004.

\bibitem{shi2011iteratively}
Q.~Shi, M.~Razaviyayn, Z.-Q. Luo, and C.~He, ``An iteratively weighted mmse approach to distributed sum-utility maximization for a mimo interfering broadcast channel,'' \emph{IEEE Transactions on Signal Processing}, vol.~59, no.~9, pp. 4331--4340, 2011.

\bibitem{choi2019joint}
J.~Choi, N.~Lee, S.-N. Hong, and G.~Caire, ``Joint user selection, power allocation, and precoding design with imperfect csit for multi-cell mu-mimo downlink systems,'' \emph{IEEE Transactions on Wireless Communications}, vol.~19, no.~1, pp. 162--176, 2019.

\bibitem{sadek2007leakage}
M.~Sadek, A.~Tarighat, and A.~H. Sayed, ``A leakage-based precoding scheme for downlink multi-user {MIMO} channels,'' \emph{IEEE Transactions on Wireless Communications}, vol.~6, no.~5, pp. 1711--1721, 2007.

\end{thebibliography}

\end{document}